\theoremstyle{plain}
\newtheorem{prop}{Proposition}
\newtheorem{thm}{Theorem}
\theoremstyle{remark}
\newtheorem*{rem}{Remark}
\newcommand{\bX}{\mathbf{X}}
\newcommand{\bx}{\mathbf{x}}
\newcommand{\pb}{\mathbb{P}}
\newcommand{\lsq}{\left[}
\newcommand{\rsq}{\right]}
\newcommand{\pp}[1]{\left({#1}\right)}
\newcommand\cond{\, | \,}
\newcommand{\E}[1]{\mathbb{E}\!\left[#1\right]}
\newcommand{\Var}[1]{\mathrm{Var}\left({#1}\right)}
\newcommand{\rd}{{\mathrm d}}
\newcommand{\abs}[1]{\left\lvert#1\right\rvert}
\newcommand{\meanT}{m_T}
\newcommand{\meanL}{m_L}
\journal{Journal of Theoretical Biology}
\begin{document}

\begin{frontmatter}



\title{Exact and approximate moment closures for non-Markovian network epidemics}


\author[WMI]{Lorenzo Pellis\corref{cor}}
\author[Manc,WMI]{Thomas House}
\ead{thomas.house@manchester.ac.uk}
\author[WMI,SLS]{Matt J. Keeling}
\ead{M.J.Keeling@warwick.ac.uk}

\address[WMI]{Warwick Mathematics Institute, University of Warwick, Coventry, CV4 7AL, UK.}
\address[SLS]{School of Life Sciences, University of Warwick, Coventry, CV4 7AL, UK.}
\address[Manc]{School of Mathematics, University of Manchester, Manchester, M13 9PL, UK.}

\cortext[cor]{Corresponding author: \texttt{L.Pellis@warwick.ac.uk}}

\begin{abstract}
Moment-closure techniques are commonly used to generate low-dimensional
deterministic models to approximate the average dynamics of stochastic systems
on networks. The quality of such closures is usually difficult to asses and
furthermore the relationship between model assumptions and closure accuracy are
often difficult, if not impossible, to quantify. Here we carefully examine some
commonly used moment closures, in particular a new one based on the concept of
maximum entropy, for approximating the spread of epidemics on networks by
reconstructing the probability distributions over triplets based on those over
pairs. We consider various models (SI, SIR, SEIR and Reed-Frost-type) under
Markovian and non-Markovian assumption characterising the latent and infectious
periods. We initially study with care two special networks, namely the open
triplet and closed triangle, for which we can obtain analytical results. We
then explore numerically the exactness of moment closures for a wide range of
larger motifs, thus gaining understanding of the factors that introduce errors
in the approximations, in particular the presence of a random duration of the
infectious period and the presence of overlapping triangles in a network. We
also derive a simpler and more intuitive proof than previously available
concerning the known result that pair-based moment closure is exact for the
Markovian SIR model on tree-like networks under pure initial conditions. We
also extend such a result to all infectious models, Markovian and
non-Markovian, in which susceptibles escape infection independently from each
infected neighbour and for which infectives cannot regain susceptible status,
provided the network is tree-like and initial conditions are pure. This works
represent a valuable step in enriching intuition and deepening understanding of
the assumptions behind moment closure approximations and for putting them on a
more rigorous mathematical footing.
\end{abstract}

\begin{keyword}
Pairwise model \sep SIR epidemic \sep Maximum Entropy \sep Pair approximation \sep Approximate dynamics
\PACS{02.50.Cw \sep 87.10.Mn \sep 89.75.Hc}
\MSC[2010]{92D30 \sep 05C21}



\end{keyword}

\end{frontmatter}


\section{Introduction}
\label{sec:Intro}

Networks are becoming a ubiquitous tool for modelling the interactions
between systems of multiple components with complex interactions
between them~\cite{Boccaletti:2006,Durrett:2007,Newman:2010}. This is
particularly true for epidemic models, where empirical advances in measurement
of relevant interactions are acting as a particular spur to theoretical
developments~\cite{Danon:2011,Read:2012}.

One particular challenge for complex network modelling consists in the high
dimensionality of the dynamical systems. If a network has $N$ nodes, each of
which can be in one of $m$ states, then the dimensionality of a stochastic
process for the evolution of those states will be $\mathcal{O}(m^N)$ in the
absence of a large discrete symmetry group for the network or a specific combination of
dynamical and network models that allows for analytical results to be obtained
(such as SIR dynamics on an Erd\"{o}s-R\'{e}nyi
graph~\cite{Neal:2003}).

For general dynamics and topologies, network moment closure techniques provide a
commonly used method of gaining significant dimensional reduction, at the price
of losing an exact description of the system dynamics. These closures are based on the
idea of approximating the dynamics of small subgraphs in the network (e.g.\
adjacent pairs of nodes) by forcing their time derivatives to depend only on
the state of subgraphs of the same or lower dimension instead of on the state of larger subgraphs (e.g.\
triplets), thus deriving a closed
system of equations.  Every closure therefore implicitly makes assumptions about the
probability distribution over states of certain parts of the system, in terms
of probability distributions over states of smaller parts of the system,
although these are often not stated explicitly. If we exclude the case of
constructing joint probabilities over pairs of adjacent nodes as the products
of marginals over the single nodes (which leads to the so-called
\emph{mean-field approximation}; see e.g.~\citep{Sharkey:2008}), the next most common moment-closure
approximation involves describing the probability of triplets of adjacent nodes
being in any possible state based on the knowledge of the probabilities over
pairs. This has lead to many so-called \emph{pair approximation} models, widely
used both for theoretical~\cite{Keeling:1999,Trapman:2007} and
practical~\cite{Ferguson:2001,EamKee2002,House:2010} purposes.

Unfortunately, the overall quality of a moment closure approximation is often
difficult to asses, thus severely limiting the generalisability of results
based on such approaches. Even when proved accurate in certain cases, it is not 
clear whether such accuracy is preserved in other slightly different contexts. 
Furthermore, moment-closure approximations, often proposed in an \textit{ad hoc} 
fashion on the
basis of heuristic arguments, unavoidably impose assumptions on the
interactions between system components. The fact that such assumptions are
often obscure, and their interaction and impact on the other modelling
assumptions are far from trivial to unravel, compromises the neatness of the
approach and makes them somewhat less appealing from a theoretical point of
view. 

Consequently, there is significant interest within the research community in
deepening our understanding of which moment closure approximations are more
accurate than others, when they fail to reproduce the system dynamics exactly
and why~\cite{Sharkey:2008,Karrer:2010,Sharkey:2011,Taylor:2011,Sharkey:2012}. As a first step in this direction, a recent trend has involved applying moment closures to each specific subgraph of interest in order to approximate its dynamics. This framework, on a large network, results in a fairly large number of equations. \citet{Sharkey:2008} refers to this modelling approach as \emph{individual-based} or \emph{pair-based}, depending on whether the aim is to describe only the dynamics of each single node or of pairs of nodes. We propose to collectively refer to it as \emph{local} network moment closure. On the other hand, the original and most commonly used type of moment closure consists in counting the \emph{number} of subgraphs of interest in any possible configuration at any one time and is where the most significant dimensionality reduction is gained \citep{Rand:1999,Keeling:1999,House:2011}. \citet{Sharkey:2008} refers to this other approach as \emph{mean-field approximation} or \emph{pair approximation}, depending on whether the interest is on single node or pairs of nodes. We suggest referring collectively to this framework as \emph{global} (or \emph{population-level}) network moment closure. 

Scaling up from local to global moment closure introduces a further round of approximation, on top of the one already present at the local level. \citet{Sharkey:2008} points out how this second level of approximation depends on an averaging or ``mean-field'' assumption of homogeneity, the accuracy of which depends primarily on the heterogeneity in the network structure, more than on the dynamical errors built in at the local level. Therefore, as a first step in gaining better understanding of the quality of moment closure approximations in general, here we focus only on local moment closures and the dynamical local errors they generate.
   

In the specific context of local moment closure approximations for Susceptible-Infected-Recovered (SIR)
epidemic models on a network, recent work by~\citet{Sharkey:2012}
has shown that, provided the network has no short loops and initial conditions are pure (i.e.~the system start in a specific state with probability 1), the standard loopless
pair-based local moment closure (see~\cite{Sharkey:2008}) provides an exact description of the dynamics of single nodes and pairs of nodes, from which, for example, the expected epidemic course can be obtained exactly. When the
network does have short loops, in particular triangles, other closure
techniques have been proposed. The most common of these is due to Kirkwood~(\citep{Kirkwood:1935}; see also \citep{Sharkey:2008,Sharkey:2011}), which can often be quite accurate in practice,
but lacks solid theoretical justification.

Recently, work has been done to provide more explicit derivations of
novel moment closures in the presence of closed loops. This has included
arguments about appropriate early asymptotic behaviour~\cite{House:2010pcb},
non-independent Bernoulli trials~\cite{Taylor:2011} and maximum
entropy~\cite{Rogers:2011}. It turns out that these are equivalent at `first
order', but the maximum entropy (ME) approach is more readily generalisable and
can be used to derive a large variety of moment closures.

In this paper we carefully investigate the behaviour of various local moment closure techniques for reconstructing the
behaviour of triplets in terms of pairs on networks of increasing size and complexity and try to clarify
when and why they lack exactness and for which modelling assumptions.

In Section \ref{sec:GeneralFramework} we introduce the notation and describe
the basic model assumptions for all models considered in the paper. In Section
\ref{sec:MomentClosureApproximations} we define the moment closure
approximations studied and we propose a different and possibly more intuitive
interpretation of the ME approximation. In Sections \ref{sec:Markovian}, \ref{sec:Const} and \ref{sec:GammaDistributedInfectiousPeriod} we focus on the SIR model on the simplest possible network topologies, namely an open triplet and a closed triangle, and show how the behaviour of the moment closures change when changing the assumptions about the distribution of the infectious period. In particular, in Section \ref{sec:Const}, we show that for such simple structures, all the
moment closure approximations that we consider here are exact when the infectious period has a
constant duration.  When the duration is random, as is the case for the
Markovian model, the closure is in general only approximate, although the most
important quantities for the open triplet are still captured exactly. In Section
\ref{sec:GammaDistributedInfectiousPeriod} we explore the convergence for all
approximations to the exact results as the variance in the duration of the
infectious period tends to 0, using a family of non-Markovian epidemic models
with Erlang-distributed durations of the infectious period. Furthermore, we
highlight the overall superior accuracy of the moment closure technique based
on ME, but shed light on its context-specific limitations in comparison with
the other closures. In Section \ref{sec:OtherMotifs} we explore how results extend to slightly larger structures and build up the intuition about when the closure considered here are exact on larger networks. Such intuition is then discussed in Section \ref{sec:LargeNetworks}, where we conjecture how the errors introduced by moment closure behave on larger networks and also prove that the standard pair-based local approximation is exact on tree-like networks with pure initial conditions for all models considered here, thus extending and simplifying result already known for the SIR Markovian epidemic model.

\section{General framework}
\label{sec:GeneralFramework} 

\subsection{Labelled network}
\label{sec:LabelledNetwork}

We consider an undirected static network $\mathcal{G} = (\mathcal{N},
\mathcal{L})$, which has a size-$N$ set of nodes $\mathcal{N}$ and a set of links
$\mathcal{L}$. Nodes are denoted by $i,j,\ldots \in \mathcal{N}$ and $\{i,j\}
\in \mathcal{L}$ if and only if $i$ and $j$ are connected to each other (and we use the
convention $\{i,i\}\notin\mathcal{L}$). 

At any time $t$, each node $i$ is labelled by a state $X_i(t) \in \Omega$, where $\Omega$ is a set of states that depends on the epidemic model run on the network (see below; for example $\Omega = \{S,I,R\}$). We will assume throughout that the network structure is not affected by the states of its nodes.

Let $\bX(t)=(X_1(t),X_2(t),\dots,X_N(t))$ be a vector describing the random state of the system at time $t$. We denote by $\bx=(x_1,x_2,\dots,x_N)$ ($x_i\in\Omega, i=1,2,\dots,N$) a
specific system state, and let $\bx^0=(x_1^0, x_2^0, \dots, x_N^0)$ denote the
initial state. Then the state of the system at each time $t\geq 0$ is
described by the probability distribution
\begin{equation} \label{systemstate}
\pb^{\bx_0}\pp{\bx;t} = \pb\pp{\bX(t)=\bx \cond \bX(0)=\bx^0} \; .
\end{equation}
Note that, in general, a process is not fully specified by its marginal distributions over time\footnote{For example, the three-state Markov chains with generator matrices
\[M_1 = \begin{pmatrix}
-1 & 1/2 & 1/2 \\
0 & 0 & 0 \\
0 & 0 & 0
\end{pmatrix}\qquad\text{and}\qquad
M_2 = \begin{pmatrix}
-1 & 1/2 & 1/2 \\
0 & -1 & 1 \\
0 & 1 & -1
\end{pmatrix}\] have the same marginal distributions at any time, if they both start in the first state. However, their behaviour is different.}. However, for our purposes Equation \eqref{systemstate} is sufficient.

If we are interested in the state of a subsystem, we first consider the set $\mathcal{V}\subset\mathcal{N}$ of all indices of the nodes we are interested in. Upon choosing a reference ordering, thus replacing the set $\mathcal{V}$ with a vector $V$, we then consider the vectors $\bX_V$ and $\bx_V$ which contain only the
elements of $\bX$ and $\bx$ with indices in $V$. Applying the subscript $V$ can
be thought as a projection on the subspace identified by $V$ of the
$N$-dimensional space $\Omega^N$. By definition,
\begin{equation} \label{vstate}
\pb^{\bx_0}_V\pp{\bx_V;t} = \pb\pp{\bX_V(t)=\bx_V\cond \bX(0)=\bx^0}
\end{equation}
is obtained by summing \eqref{systemstate} over all indices not appearing in
$V$. Note that the initial conditions should remain specified on the full
graph.

\subsection{Epidemic models}
\label{sec:EpiModels}
We are interested in the spread of an epidemic on the static network described above. We consider different epidemic models, namely an SI, an SIR, an SEIR and a Reed-Frost model. In all models, the epidemic spreads by infective ($I$) nodes transmitting the infection to susceptible ($S$) neighbours.

\subsubsection{SI model}
\label{sec:EpiModelsSI}

In the SI model, $\Omega = \{S,I\}$. Upon infection, node $i$
makes infectious contacts to each one of its neighbours at the 
points of a homogeneous Poisson process with rate $\tau > 0$. A contacted
node, if susceptible, becomes infectious. Therefore, the epidemic 
results in the infection of all nodes in the connected 
components containing at least one initial infective node, and 
every infective ultimately infects all of its neighbours.

\subsubsection{SIR model}
\label{sec:EpiModelsSIR}

In the SIR model, $\Omega = \{S,I,R\}$. Upon infection, node $i$ is assumed to experience an infectious
period of random (non-negative) duration $T_i$, and during its infectious
period, it makes infectious contacts with each one of its neighbours at the
points of a homogeneous Poisson process with rate $\tau\geq 0$. A contacted
node, if susceptible, becomes infectious, and at the end of the infectious
period, the node recovers ($R$) and becomes permanently immune to the infection. The
infectious periods and Poisson processes associated with different infectious nodes
are assumed to be mutually independent; similarly, the Poisson processes from
the same infectious node towards different neighbours are mutually independent, conditionally on its
infectious period. We assume that for all $i$, the
random variables $T_i$ are independent and identically distributed (iid)
according to a random variable $T$ with mean by $\meanT = \mathbb{E}\left[ T \right]$. Without loss of generality, we assume in
all numerical examples that $\meanT = 1$ and, unless stated otherwise, that $\tau = 1$.

\subsubsection{SEIR model}
\label{sec:EpiModelsSEIR}

In the SEIR model,  $\Omega = \{S,E,I,R\}$. This model is similar to the SIR one, with the additional presence of a latent period ($E$) following the infection of each node $i$, of duration $L_i$. During the latent period, a node cannot transmit the infection and will eventually progress to the infectious stage. The latent periods of different nodes are iid according to a random variable $L$ with mean $\meanL = \E{L}$, and are assumed to be independent of all infectious periods and Poisson processes describing infectious contacts, irrespective of whether they refer to the same node or different nodes.

\subsubsection{Reed-Frost-type models}
\label{sec:EpiModelsRF}

In the standard Reed-Frost (RF) model, $\Omega = \{S,E,R\}$. Upon infection, node $i$
experiences a latent period, at the end of which it spreads all its
infectivity at a single point in time and then recovers permanently. We
consider extensions of the standard Reed-Frost model to both a random
duration of the latent period and random probabilities of transmission. More
specifically, we assume that node $i$'s latent period has duration $L_i$ (iid
for different nodes according to $L$, with mean $\meanL$) and we denote by
$P_i$ the random probability with which node $i$ can infect each one of its
neighbours. All the $P_i$s are iid according to a random variable $P$, with
mean $p=\E{P}$, and are independent of latent periods, whether referring to the
same or to different nodes. Note that the infections of different neighbours by
node $i$ are not independent events, but are independent conditionally on the
value of $P_i$. In the literature, the term \emph{Reed-Frost model} refers only
to the case where the latent period is of fixed duration and $P$ is non-random
(i.e.~$L\equiv \meanL$ and $P\equiv p$), and the term \emph{randomised
Reed-Frost model} refers to a constant latent period $L\equiv \meanL$ and a
random probability of transmission $P$. Here, therefore, we
refer to all possible combinations of random $L$ and $P$ as
\emph{Reed-Frost-type models}. Note that
Reed-Frost-type models can be viewed as
limiting cases of SEIR models
where $\meanT\to 0$ while $\tau\to\infty$, such that the mean probability of
transmission $p = \tau \meanT$ is kept constant, for suitably chosen
distributions for the sojourn times in states E and I.

\subsection{Moment closures}
\label{sec:MomClos}

A moment closure, $\alpha$ say, is a rule for the generation of a probability
distribution for a set $\mathcal{V}$ of nodes in $\mathcal{N}$ from the probability distributions over subsets of $\mathcal{V}$. To avoid trivial cases, we implicitly assume that the subgraph identified by $\mathcal{V}$ (i.e.~consisting of the nodes in $\mathcal{V}$ and all and only the edges between nodes in $\mathcal{V}$) is connected. Again, we find it easier to specify an order for the nodes in $\mathcal{V}$, thus effectively listing them in a vector $V$. Whether or not a specific moment closure is exact for
some $t \geq 0$, i.e. whether
\begin{equation}\label{GenericAppWorks}
\pb_{V,\alpha}^{\bx^0} (\bx_V;t)=\pb_V^{\bx^0} (\bx_V;t) \; ,
\end{equation}
in general depends on the particular choice of $\bx_V$ and $\bx^0$. So, in what
follows, with the notation $\bx^0 \to \bx_V$, we generically refer to the
investigation of the evolution of the system from the initial state $\bx^0$ to
the state $\bx_V$. 

If clear from context, the initial condition $\bx^0$ will
often be removed. We will also drop the explicit dependence on $t$, implicitly
assuming that equalities are meant to hold for all $t\geq 0$ and inequalities
are meant to indicate that the corresponding equality fails for at least one
value of $t\geq 0$.

The main focus of this paper is on pair-based approximations to epidemic
dynamics on graphs of size 3, i.e.~on local
moment closures where the probability of the vector $V$ of three nodes being in
any possible configuration is reconstructed from the probability of single
nodes and pairs of nodes in $V$ being in any possible configuration. Various
common possible choices are carefully described in Section
\ref{sec:MomentClosureApproximations}.


Given that in this context $V$ will
have $\leq 3$ indices and we are often interested in indicating them explicitly, we will further simplify the notation by writing, 
for example for $V = \pp{1,2,3}$, 
\begin{align*}
\pb_{123}(ABC) &\quad \text{instead of}\quad\pb_{\pp{1,2,3}}(ABC)  \quad \text{and}\\
\pb_{123,\alpha}(ABC) & \quad \text{instead of}\quad\pb_{\pp{1,2,3},\alpha}(ABC) \qquad (A,B,C \in \Omega) .
\end{align*}
If obvious from the context which vector $V$ of three nodes is under consideration, we will often simply denote the probability over $V$ as $\pb(ABC)$.


\subsection{Explicit topologies}
\label{sec:ExplicitTopologies}

In order to develop understanding of the impact of the assumptions behind moment closure approximations, we consider numerous simple topologies. These small networks are presented in Figure \ref{fig:graphs}. However, we first begin with a careful study of the behaviour of the considered pair approximations in the context of the SIR model on the simplest possible graphs, namely the open triplet and closed triangle. These are the focus of Sections \ref{sec:Markovian}, \ref{sec:Const} and \ref{sec:GammaDistributedInfectiousPeriod}, on which then the other  sections are built.

In both the open triplet and the closed triangle we have $\mathcal{N} = \{1,2,3\}$, but different sets of links:
\begin{equation}
\mathcal{L}_{\mathrm{open}} = \{ \{1,2\} , \{2,3\} \} \; , \qquad
\mathcal{L}_{\mathrm{closed}} = \{ \{1,2\} , \{2,3\} , \{3,1\} \} \; .
\end{equation}
Figure~\ref{fig:sirmodel} shows the states and transitions of these explicit
models -- even for these small networks, there is a lot of dynamical structure
for any moment closure to capture.

\section{Moment closure approximations}\label{sec:MomentClosureApproximations}

In this section we illustrate all the moment closures we consider in this analysis. Note that any vector $V$ of three distinct nodes of a connected graph either forms an open triplet or a closed triangle. Also, without loss of generality, we assume throughout this section that $V=( 1, 2, 3 )$.

\subsection{Unclustered closure}
\label{sec:MCopen}

In the literature, the most common moment closure approximation of triplets in
terms of pairs is obtained following the na\"{i}ve idea of multiplying the
probability of every pair of nodes linked by an edge, and then dividing by the
probability of nodes common to pairs of edges~\cite{Sharkey:2008,Sharkey:2011}. On
an open triplet, 
assuming that $i=2$ is the central
node, this approximation, hereafter denoted by $o$, is defined as
\begin{equation}\label{MEopen}
\pb_{o} (ABC)= \pb_{123,o} (ABC)= \frac{\pb_{12}(AB)\pb_{23}(BC)}{\pb_2(B)}\; .
\end{equation}

\subsection{Kirkwood closure}
\label{sec:MCkappa}

On a closed triangle, the same approach leads to the following approximation,
popularised in epidemic modelling by \cite{Keeling:1999} and sometimes attributed
to Kirkwood~\cite{Kirkwood:1935} (see also \citep{Sharkey:2008,Sharkey:2011}), which we denote by $\kappa$:
\begin{equation}\label{Kapp} 
\pb_{\kappa}(ABC) = \frac{\pb_{12}(AB) \pb_{23}(BC) \pb_{13}(AC)}{\pb_1(A) \pb_2(B)\pb_3(C)}
\; .
\end{equation}
Kirkwood's approximation has the natural property of being symmetric in $A,B$
and $C$ but it is not always a proper distribution over system states (i.e.~sometimes $\sum_{a,b,c}\pb_\kappa(abc) \neq 1$) and it
does not always agree with the marginals it is constructed from (i.e.~$\sum_c \pb_\kappa(ABc)$ is in general different from $\pb_{12}(AB)$; see \cite{Rogers:2011,Sharkey:2011}). 

\subsection{Maximum entropy}
\label{sec:MCmu}

In order to overcome these limitations, Rogers \cite{Rogers:2011} recently
suggested constructing an approximation based on the principle of Maximum
Entropy (ME), which we here denote by $\mu$. In our context, this means that the quantity
\begin{equation}\label{EntDef}
E := - \sum_{a,b,c} \pb_\mu(abc) \ln ( \pb_\mu(abc) ) \; ,
\end{equation}
which is the information entropy of the distribution $\pb_\mu$, is maximised subject
to the constraints imposed by the marginals $\{\pb_i(A), \pb_{ij}(AB)\}$, i.e.~that $\pb_1(A) = \sum_{b,c}\pb_\mu(Abc)$ and $\pb_{12}(AB) = \sum_c{\pb_\mu(ABc)}$ (and similarly for all other nodes or pairs). For
the open triplet, the closure~\eqref{MEopen} is the ME distribution. For
the closed triangle there is no closed-form solution, although following
Rogers~\cite[Eq.\ 4]{Rogers:2011}, we know that a set of functions
$\{q_{ij}\}$, $\{i,j\}\in\mathcal{L}_{\text{closed}}$, exists such that the ME distribution can be written in product form 
\begin{equation}\label{EntQ}
\pb_{\mu}(ABC) = q_{12}(AB) q_{23}(BC) q_{31}(CA) \; .
\end{equation}
These functions are not, however, straightforwardly related to the marginal
probabilities and so an alternative approach is preferable for explicit
calculations.

\subsection{Iterative scaling}
\label{sec:MCiter}

Rogers \cite{Rogers:2011} provides an iterative scheme to calculate the ME
distribution: start with the uniform distribution $\pb^{(0)}(\bx)=1/\abs{\Omega}^3$ over all
possible system states $\bx \in \Omega^3$ and cycle through all the three
pairs (in any order; we choose the order $V_1 = \pp{1,2}, V_2 = \pp{2,3}, V_3 = \pp{1,3}$) to
obtain, for $n = 0,1,2,\dots$:
\begin{equation}\begin{aligned} \pb^{(n),V_1}(ABC) & =
\pb_{12}(AB)\frac{\pb^{(n)}(ABC)}{\sum_{c\in\Omega}\pb^{(n)}(ABc)} \; , \\
\pb^{(n),V_2}(ABC) & =
\pb_{23}(BC)\frac{\pb^{(n),V_1}(ABC)}{\sum_{a\in\Omega}\pb^{(n),V_1}(aBC)} \; ,
\\ \pb^{(n+1)}(ABC) = \pb^{(n),V_3}(ABC) & =
\pb_{13}(AC)\frac{\pb^{(n),V_2}(ABC)}{\sum_{b\in\Omega}\pb^{(n),V_2}(AbC)} \; . \label{itscale}
\end{aligned}\end{equation}
\citet{Rogers:2011} cites results from \citet{Csiszar:2004} to argue that the sequence
\[
\pb^{(0)}(ABC),\dots,\pb^{(n),V_1}(ABC),\pb^{(n),V_2}(ABC), \pb^{(n),V_3}(ABC),\pb^{(n+1),V_1}(ABC),\dots
\] 
converges as $n\to\infty$ and that the limiting distribution is the ME distribution $\pb_\mu(ABC)$, which is known to be unique (provided the marginals are consistent).

If a closed-form approximation is needed, \citet{Rogers:2011} suggests
using what is obtained from the algorithm after the first (triple) step.
Denoting this 1-step ME approximation with $\rho$, we have:
\begin{equation}\label{Rhoapp}
\pb_\rho(ABC) = \pb^{(1)}(ABC) =
\frac{\pb_{12}(AB) \pb_{23}(BC) \pb_{13}(AC)}{\pb_2(B)
\sum_{b}{\frac{\pb_{12}(Ab)\pb_{23}(bC)}{\pb_2(b)}}}\; .
\end{equation}
This closure has also been derived independently from different arguments in 
\cite{House:2010pcb} and \cite{Taylor:2011}. On the open triplet, the 1-step ME 
approximation \eqref{Rhoapp} leads once more to \eqref{MEopen}.  On the closed 
triangle, it overcomes the key limitations of Kirkwood's, i.e.~it is a proper 
distribution over system states and has the correct marginals. However it depends on the arbitrary choices of the starting distribution and the order in which to cycle through the pairs. On the contrary, $\pb_\mu(ABC)$ does not depend on either of these choices \citep{Csiszar:2004}. Note that other distributions are possible, for which some or all the requirements above are satisfied (for example, the algebraic mean of the six possible forms of 1-step ME, one per permutation order through the pairs, would be independent of the cycling order). However, among all distributions, ME is the only one that introduces no additional (and hence unjustifiable) information apart from the desired constraints, and is therefore the most theoretically appealing one.

We finally suggest a different and, to our knowledge, novel formulation that may provide a different point of view of the assumptions underlying maximum entropy. A simple and systematic means of generating triangles (that can be readily extended to other networks) is to derive iteratively a set of functions $\{\hat{q}_{ij}\}$ through the following procedure. Denoting by $q_{ij}^{(n)}(AB)$ the approximation of $\hat{q}_{ij}(AB)$ obtained at the $n^{\text{th}}$ iteration, we start with $q_{ij}^{(0)}(AB) = \pb_{ij}(AB)$. Then for the iterative step, we define
\begin{equation*}\label{pairsinabag}
\pi^{(n)}(ABC)=\frac{q_{12}^{(n)}(AB) q_{23}^{(n)}(BC) q_{13}^{(n)}(AC)}{\sum_{a,b,c} q_{12}^{(n)}(ab) q_{23}^{(n)}(bc) q_{13}^{(n)}(ac)}\; .
\end{equation*}
If $\sum_c \pi^{(n)}(ABc)>q_{12}^{(n)}(AB)$, then $q_{12}^{(n)}(AB)$ is updated to a new lower value $q_{12}^{(n+1)}(AB)$ while, if $\sum_c \pi^{(n)}(ABc)<q_{12}^{(n)}(AB)$, then one has to set $q_{12}^{(n+1)}(AB)>q_{12}^{(n)}(AB)$, where the change in value is determined by questions of numerical efficiency. Assuming convergence, we define
\begin{equation}\label{EntQ2}
\pb_{\hat{\mu}}(ABC) = \frac{\hat{q}_{12}(AB) \hat{q}_{23}(BC) \hat{q}_{31}(CA)}{ \sum_{a,b,c} \hat{q}_{12}(ab) \hat{q}_{23}(bc) \hat{q}_{31}(ca) }\; .
\end{equation}
These $\{\hat{q}_{ij}\}$ differ from the $\{q_{ij}\}$ in \eqref{EntQ} by virtue of being probability distributions, although clearly \eqref{EntQ2} is identical to (\ref{EntQ}) if the denominator is absorbed into the individual probabilities, and the existence and uniqueness of $\hat{q}$ implicitly assumed follow from the results of \cite{Rogers:2011}.

While such an argument offers a different route to the same result, we found that the iterative scaling approach outlined in \eqref{itscale} above is computationally more efficient (in addition to having been proved to converge). An implementation of the iterative scaling approach in \textsc{Matlab} is provided as Electronic Supplementary Material.

\section{Markovian SIR model on three nodes}
\label{sec:Markovian}

Here, as well as in Sections \ref{sec:Const}, and \ref{sec:GammaDistributedInfectiousPeriod}, we specifically focus on the performance of pair approximation on the open triplet and the closed triangle, where nodes are labelled $i=1,2,3$ and $i=2$ is the middle node in the triplet.

The majority of epidemic models appearing in the literature that make use of moment
closure assume that $T \sim \text{Exp}(\gamma)$, for some constant $\gamma>0$, and are therefore
fully Markovian. The main reason is mathematical convenience and a set
of ordinary differential equations is then derived to describe the probability of triplets being in each configuration of interest (local moment closure) or to describe the average behaviour
of the original stochastic model in the limit of a infinite population (global moment closure).

Therefore, in Figure \ref{fig:MarkExamples} we explore how all approximations
above ($\kappa,\rho,$ and $\mu$) compare to the exact probability distributions $\pb^{\bx^0}(\bx;t)$ on
the open and closed triangle at time $t=1$, for some natural choices of $\bx^0$
and $\bx$, when the Markovian model is used and $\tau=1$ and $\meanT=1$.  This
Figure shows that, on the closed triangle, no approximation is exact for any
state. The case of the open triplet is more subtle: for example,
$\pb^{\bx^0}(\bx;t)$ is different from $\pb_o^{\bx^0}(\bx;t)$ as defined in
\eqref{MEopen} when $\bx^0 = (SIS)$ and $\bx = (SRS)$. The reason is that the
random duration of the infectious period imposes correlations between the two
susceptibles even if there is no direct link between them. In fact, denote by
$Q$ the probability that a susceptible escapes infection when $t\to\infty$.
Then, $\lim_{t\to\infty} \pb^{\bx^0} (\bx;t) = \E{Q^2}$, which is in general different from $\lim_{t\to\infty} \pb_o^{\bx^0} (\bx;t) = \E{Q}^2$, except when $Q$ is non-random (e.g.~constant duration of infection).
Intuitively, if individual 2 has
recovered without infecting individual 1, then it is more likely that the
infectious period was shorter than expected, which in turn increases the
probability that also individual 3 escaped infection. Therefore, the joint
probability that both have escaped infection ($\pb^{(SIS)}(SRS)$) is higher
than that obtained through \eqref{MEopen}, where the two are assumed to escape
infection independently of each other. For the same reason, it is possible to
verify that the ME approximation also underestimates $\pb^{(SIS)}(RRR)$ and
overestimates $\pb^{(SIS)}(RRS)$ and $\pb^{(SIS)}(SRR)$. This insight suggests that the qualitative features that can be drawn from Figure \ref{fig:MarkExamples} are not exclusive to the Markovian model, but extend to all models with a random duration of infectious periods.

Analogously to the fact that $\pb_o^{(SIS)}(SRS) \neq \pb^{(SIS)}(SRS)$, in the presence of a
random infectious period we also have that (see Figure \ref{fig:MarkExamples})
\begin{equation}\label{MarkMEfails}
\begin{array}{rcl}
\pb_o^{(ISS)}(IIS) & \neq & \pb^{(ISS)}(IIS) \; , \\ 
\pb_o^{(ISI)}(III) & \neq & \pb^{(ISI)}(III) \; , \\ 
\end{array} 
\end{equation}
and therefore approximation $o$ on the open triplet fails to be exact for all
$t>0$ in these cases and all those where the system can evolve to from $(IIS)$
and $(III)$ (e.g. $(RIS)$ or $(RRI)$). However, even in the Markovian case (see Figure
\ref{fig:MarkExamples}), we have a set of equations that hold true:
\begin{equation}\label{MarkMEworksS}
\begin{array}{rcl}
\pb_o^{(ISS)}(ISS) & = & \pb^{(ISS)}(ISS) \; , \\ 
\pb_o^{(ISS)}(ISI) & = & \pb^{(ISS)}(ISI) \; , \\ 
\pb_o^{(ISS)}(ISR) & = & \pb^{(ISS)}(ISR) \; , \\ 
\pb_o^{(ISS)}(RSS) & = & \pb^{(ISS)}(RSS) \; , \\ 
\pb_o^{(ISS)}(RSI) & = & \pb^{(ISS)}(RSI) \; , \\ 
\pb_o^{(ISS)}(RSR) & = & \pb^{(ISS)}(RSR) \; , \\ 
\pb_o^{(ISI)}(ISI) & = & \pb^{(ISI)}(ISI) \; , \\ 
\pb_o^{(ISI)}(ISR) & = & \pb^{(ISI)}(ISR) \; , \\ 
\pb_o^{(ISI)}(RSI) & = & \pb^{(ISI)}(RSI) \; , \\ 
\pb_o^{(ISI)}(RSR) & = & \pb^{(ISI)}(RSR) \; , \\ 
\pb_o^{(ISR)}(ISR) & = & \pb^{(ISR)}(ISR) \; , \\ 
\pb_o^{(ISR)}(RSR) & = & \pb^{(ISR)}(RSR) \; ,
 \end{array}
\end{equation} 
as well as
\begin{equation}\label{MarkMEworksI}
\begin{array}{rcl}
\pb_o^{(SIS)}(SIS) & = & \pb^{(SIS)}(SIS) \; , \\ 
\pb_o^{(SIS)}(IIS) & = & \pb^{(SIS)}(IIS) \; , \\ 
\pb_o^{(SIS)}(SII) & = & \pb^{(SIS)}(SII) \; , \\ 
\pb_o^{(SIS)}(III) & = & \pb^{(SIS)}(III) \;.
\end{array}
\end{equation} 
Intuitively, the results in \eqref{MarkMEworksI} hold because the intermediate case has not recovered yet: for any time $t$ at which the closure is studied, we know that the intermediate infective has been infectious for a non-random duration $t$ and therefore the events of infecting either of the neighbours are independent of each other. On the other hand, the results listed in
\eqref{MarkMEworksS} hold because the intermediate susceptible rules out the
presence of any correlation between the two extremes: either 1 cannot infect 3
(and therefore, e.g.~$\pb^{(ISS)}(ISI) = 0$) or, if both are infectious, their
behaviour is uncorrelated because 2 escapes infection independently from both. Again, both these qualitative behaviours transcend the Markovian model itself. 
More formally, focusing our attention only on states $ISS$ and $ISI$, we highlight the following result.
\begin{prop}\label{thm:SIRtree}
On an open triplet,
\begin{equation}
\label{SIRtree}
\pb_o(ISS) = \pb(ISS)\quad\text{and}\quad\pb_o(ISI) = \pb(ISI)
\end{equation}
for all times $t\ge 0$, when the initial conditions are $\bx^0=(ISS)$ or $\bx^0=(ISI)$.
\end{prop}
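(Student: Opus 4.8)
The plan is to represent the epidemic on the open triplet as a deterministic function of independent per-node primitives and then exploit the fact that a susceptible middle node decouples the two end nodes. For each node $i$ I would attach its infectious period $T_i$ together with, for each neighbour $j$, the Poisson($\tau$) contact process $N_{ij}$ that is active only while $i$ is infectious; by the model assumptions the families $(T_1,N_{12})$, $(T_2,N_{21},N_{23})$ and $(T_3,N_{32})$ are mutually independent. The first key observation is that, by the SIR irreversibility $S\to I\to R$, the event $\{X_2(t)=S\}$ forces node $2$ to have been susceptible throughout $[0,t]$, hence never infectious and never a transmitter. Since in the open triplet node $2$ is the only neighbour of both $1$ and $3$, on this event nodes $1$ and $3$ receive no contact from node $2$, so node $1$ is never reinfected and likewise node $3$.

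Next I would introduce, for the end nodes, $s_1:=X_1(t)$ and the indicator $b_1$ that node $1$ places at least one contact of $N_{12}$ in $[0,\min(t,T_1)]$ (i.e.\ transmits to node $2$ by time $t$), and analogously $s_3,b_3$. Because node $1$ starts infectious and can never be reinfected, $s_1=I$ iff $T_1>t$, so $(s_1,b_1)$ is a function of $(T_1,N_{12})$ alone; symmetrically $(s_3,b_3)$ is a function of $(T_3,N_{32})$. Crucially $\{X_2(t)=S\}=\{b_1=0,\,b_3=0\}$, since node $2$ can only be infected by node $1$ or node $3$. The independence of the two primitive families then yields the factorisation of the joint law of $(s_1,b_1,s_3,b_3)$ into the product of the laws of $(s_1,b_1)$ and $(s_3,b_3)$, which is the engine of the whole argument.

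The remaining work is bookkeeping. For the initial condition $(ISI)$ and target $ISI$ I would write $\pb_2(S)=\pb(b_1{=}0)\pb(b_3{=}0)$, $\pb_{12}(IS)=\pb(s_1{=}I,b_1{=}0)\pb(b_3{=}0)$, $\pb_{23}(SI)=\pb(b_1{=}0)\pb(s_3{=}I,b_3{=}0)$ and $\pb(ISI)=\pb(s_1{=}I,b_1{=}0)\pb(s_3{=}I,b_3{=}0)$; substituting into \eqref{MEopen} the shared factors $\pb(b_1{=}0)$ and $\pb(b_3{=}0)$ cancel and give $\pb_o(ISI)=\pb(ISI)$. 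For the initial condition $(ISS)$ and target $ISS$, node $3$ starts susceptible and can become infectious only through an infectious node $2$, so $\{X_2(t)=S\}\subseteq\{X_3(t)=S\}$; hence $\pb_{23}(SS)=\pb_2(S)$ and $\pb(ISS)=\pb_{12}(IS)$, and \eqref{MEopen} collapses to $\pb_o(ISS)=\pb_{12}(IS)=\pb(ISS)$. The two cross cases are degenerate: from $(ISI)$ node $3$ can never return to $S$, so both $\pb(ISS)$ and $\pb_{23}(SS)$ vanish, while from $(ISS)$ a susceptible node $2$ forces node $3$ to be susceptible, so both $\pb(ISI)$ and $\pb_{23}(SI)$ vanish; in each the identity holds with both sides equal to $0$.

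I expect the only genuine difficulty to be phrasing the construction of the process as a measurable function of the independent primitives cleanly enough that the factorisation of $(s_1,b_1)$ from $(s_3,b_3)$ is manifestly rigorous rather than merely intuitive; once the barrier property of a susceptible node $2$ is recorded as $\{X_2(t)=S\}=\{b_1{=}0,\,b_3{=}0\}$ with each pair depending only on its own node's randomness, the algebra is immediate. Notably, the argument uses nothing about the distribution of $T$ beyond independence across nodes, so the same proof should carry over verbatim to the general non-Markovian models treated later.
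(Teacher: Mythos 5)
Your proof is correct, but it does not follow the paper's own route everywhere, and the difference matters. For $\bx^0=(ISS)$ and for the two degenerate cross cases you argue exactly as the paper does: openness of the triplet plus irreversibility of infection give $\pb_{23}(SS)=\pb_2(S)$ and $\pb_{12}(IS)=\pb(ISS)$, so \eqref{MEopen} collapses, and the remaining cases are $0=0$. The genuine divergence is the case $\bx^0=(ISI)$, $\bx=(ISI)$: there the paper simply asserts that ``all factors equal $\pb(ISI)$'', which is false whenever $T$ is random --- e.g.\ in the Markovian setting of that very section one has $\pb_{12}(IS)=\pb(ISI)+\pb(ISR)$ with $\pb(ISR)>0$ for $t>0$, since node 3 may have recovered without infecting node 2. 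The identity survives only because of a cancellation the paper never spells out, and your coupling argument supplies precisely that: realising the dynamics from independent per-node primitives $(T_i,N_{ij})$, recording $\{X_2(t)=S\}=\{b_1=0,\,b_3=0\}$ with $(s_1,b_1)$ and $(s_3,b_3)$ functions of disjoint primitives, and factorising each of $\pb_{12}(IS)$, $\pb_{23}(SI)$, $\pb_2(S)$ and $\pb(ISI)$ so that the shared factors $\pb(b_1=0)$ and $\pb(b_3=0)$ cancel in \eqref{MEopen}. In substance this is the conditional-independence-given-a-susceptible-middle-node argument that the paper deploys only later, in the proof of Theorem \ref{thm:TreeAllModels}; you obtain it via a grand coupling where the paper there manipulates conditional probabilities, but the engine is the same. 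What the paper's proof of Proposition \ref{thm:SIRtree} buys is brevity; what yours buys is a treatment of the $(ISI)$ case that is actually watertight, is manifestly independent of the law of $T$ (consistent with the paper's remark that the result holds for any infectious period), and generalises essentially verbatim to the tree-network setting of Theorem \ref{thm:TreeAllModels}.
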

\begin{proof}
Consider first the case $\bx^0=(ISS)$. Clearly $\pb(ISI)=0$, so $\pb_o(ISI)=\pb(ISI)$ holds trivially. For state $\bx=(ISS)$ instead,
\[\pb_o(ISS)=\frac{\pb_{12}(IS)\pb_{23}(SS)}{\pb_2(S)}\;.
\]
Because the triplet is open, $\pb_2(S) = \pb_{23}(SS)$ and $\pb_{12}(IS) = \pb(ISS)$, and hence $\pb_o(ISS)=\pb(ISS)$.

Consider now $\bx^0=(ISI)$. Clearly, $\pb(ISS)=0$, as infected nodes cannot recover, so $\pb_o(ISS)=\pb(ISS)$ holds trivially. For state $\bx=(ISI)$ instead,
\[\pb_o(ISI)=\frac{\pb_{12}(IS)\pb_{23}(SI)}{\pb_2(S)}\;,
\]
but given the initial conditions all factors equal $\pb(ISI)$ and the closure is still exact.
\end{proof}

This result works for any assumptions about the infectious period, but is particularly important in the Markovian case because it provides the basis for why the pair-based approximation for a Markovian SIR epidemic spreading on a more general unclustered
network, in which only the $ISS$ and $ISI$ states appear, is exact (as long as the starting configuration is pure; see \cite{Sharkey:2013}). A formal proof of this first appeared in \citet{Sharkey:2013}. However, in Section \ref{sec:LargeTree} we will provide a simpler and more general proof.

\section{Constant infectious period: SIR model on three nodes}
\label{sec:Const}

Although the standard approximation $o$ on the open triplet is always exact, at least for the dynamically important states $ISS$ and $ISI$, we have argued that this is not the case for many other states because of the random duration of the infectious period.
We now show that the
random duration of the infectious period is the main reason why all
moment closure approximations fail to be exact, both on the open triplet and on the closed triangle. In what follows we assume a constant infectious period of duration $T\equiv\meanT$ and we arbitrarily assume that an individual is still infectious at $t=\meanT$ and is immune immediately after, i.e.~for $t>\meanT$.

\begin{prop}\label{thm:OpenMEConst}
For the SIR model on an open triplet, when the infectious period has constant duration,
\begin{equation}\label{OpenMEConst}
\pb_o(ABC) = \pb(ABC) \; ,
\end{equation}
for all $A,B,C\in\{S,I,R\}$, all times $t\geq 0$, and all initial conditions.
\end{prop}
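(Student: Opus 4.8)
The plan is to show that when the infectious period is deterministic of
length $\meanT$, the unclustered closure $\pb_o$ is exact on the open triplet
$1$–$2$–$3$ for \emph{every} configuration $ABC$ and every pure initial
condition. The key structural fact I would exploit is that with a constant
infectious period, all the randomness in the epidemic comes from the Poisson
contact processes along the two edges $\{1,2\}$ and $\{2,3\}$ (the sojourn
times in $I$ are no longer random). Since the triplet is open, there is no edge
$\{1,3\}$, so node $1$ and node $3$ can only interact \emph{through} node $2$.
The entire goal is therefore to establish a conditional independence statement:
conditionally on the full trajectory of node $2$, the processes governing nodes
$1$ and $3$ are independent.

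First I would set up the state of node $2$ as a function of time, $X_2(t)$,
which for a constant infectious period is completely determined by the single
random time $\sigma$ at which node $2$ first becomes infected (with
$\sigma=\infty$ if it never does). Given $\sigma$, node $2$ is deterministically
$S$ on $[0,\sigma)$, $I$ on $[\sigma,\sigma+\meanT)$ and $R$ afterwards. The
central claim is then: \emph{conditionally on the whole trajectory of node
$2$}, the state $X_1(t)$ depends only on the initial condition of node $1$ and
the Poisson process on the edge $\{1,2\}$, while $X_3(t)$ depends only on the
initial condition of node $3$ and the Poisson process on the edge $\{2,3\}$, and
these two sources of randomness are mutually independent. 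This is where the
deterministic infectious period is essential: once node $2$'s trajectory is
fixed, the time window during which node $2$ can transmit to node $1$ (resp.\
node $3$) is fixed and non-random, so no information leaks from the $\{1,2\}$
side to the $\{2,3\}$ side. By contrast, with a random infectious period,
observing that node $1$ escaped infection would carry information about the
length of node $2$'s infectious window, hence about node $3$ — exactly the
correlation flagged in Section~\ref{sec:Markovian}.

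Given this conditional independence, I would write
\[
\pb(ABC) = \sum_{\text{trajectories of }2\text{ consistent with }B}
\pb\pp{X_1=A\cond \text{traj}_2}\,\pb\pp{X_3=C\cond \text{traj}_2}\,
\pb\pp{\text{traj}_2}\,,
\]
and then argue that the conditional law of $X_1$ given the full trajectory of
node $2$ in fact depends on that trajectory only through the part of it relevant
to edge $\{1,2\}$, and symmetrically for $X_3$. The cleaner route, which I would
actually prefer, is to verify the factorisation \eqref{MEopen} directly at the
level of the pair marginals: show that conditioning on $X_2(t)=B$ makes $X_1(t)$
and $X_3(t)$ conditionally independent, i.e.\
$\pb(ABC)=\pb(A\cond B)\,\pb(C\cond B)\,\pb_2(B)$, and then recognise
$\pb(A\cond B)\pb_2(B)=\pb_{12}(AB)$ and $\pb(C\cond B)\pb_2(B)=\pb_{23}(BC)$,
which rearranges to
$\pb(ABC)=\pb_{12}(AB)\pb_{23}(BC)/\pb_2(B)=\pb_o(ABC)$. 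The subtlety here is
that conditional independence given the instantaneous value $X_2(t)=B$ is
\emph{not} automatic; what one really has is conditional independence given the
entire trajectory of node $2$. So I would need to argue that, for a constant
infectious period, the event $\{X_2(t)=B\}$ together with the initial condition
pins down the relevant features of node $2$'s trajectory sufficiently (e.g.\
$\{X_2(t)=I\}$ forces the infection time into a window determined by $t$ and
$\meanT$) to preserve the factorisation after integrating out $\sigma$.

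The main obstacle I anticipate is precisely this last point: promoting
conditional independence given the \emph{full trajectory} of node $2$ to the
factorisation that involves only the \emph{instantaneous state} $B=X_2(t)$. When
$B=S$ this is easy, since $X_2(t)=S$ means node $2$ has never been infectious up
to time $t$, so it has never transmitted and nodes $1$ and $3$ evolve
independently of each other and of the (so far silent) node $2$; this is the
same mechanism already used in Proposition~\ref{thm:SIRtree}. When $B=I$ or
$B=R$, node $2$ has been (or is) infectious, and one must check that the
constant length $\meanT$ makes the past infection time $\sigma$ either
determined by $B$ and $t$ (for $B=I$, where $\sigma\in(t-\meanT,t]$) or
integrated out in a way that keeps the $1$-side and $3$-side contributions in
product form. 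I would handle this by conditioning on $\sigma$, using the
deterministic infectious window to factor the conditional probabilities of $A$
and $C$, and then checking that the resulting integral over $\sigma$ still
matches the closure \eqref{MEopen} — the decisive input being that the window
length, and hence the transmission opportunities on each edge, are non-random
once $\sigma$ is fixed.
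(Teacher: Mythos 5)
Your mechanism is the right one, and your route --- a conditional-independence argument, promoted from conditioning on node 2's full trajectory to conditioning on the instantaneous state $X_2(t)=B$ --- is genuinely different from the paper's proof, which is a case analysis over initial conditions with direct algebraic collapse of the marginals. But there is a gap at exactly the point you flag as the main obstacle, and the ingredient you cite as decisive is not sufficient to close it. Fixing the infectious window given $\sigma$ buys you conditional independence given $\sigma$, i.e.\ $\pb(A,C\mid\sigma)=\pb(A\mid\sigma)\,\pb(C\mid\sigma)$; it does not make the subsequent integral factorise. In general,
\begin{equation*}
\int \pb(A\mid\sigma)\,\pb(C\mid\sigma)\,\pb(\rd\sigma\mid B)\;\ne\;
\left(\int \pb(A\mid\sigma)\,\pb(\rd\sigma\mid B)\right)
\left(\int \pb(C\mid\sigma)\,\pb(\rd\sigma\mid B)\right),
\end{equation*}
so ``checking that the resulting integral over $\sigma$ still matches the closure'' requires a structural input that your outline never states.

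The missing observation is a dichotomy driven by the pure initial condition. If node 2 starts non-susceptible, its trajectory is deterministic (this is where constancy of $\meanT$ enters), no integration over $\sigma$ is needed, and your fixed-window argument applies verbatim --- this is precisely the paper's case $\bx^0=(SIS)$. If instead node 2 starts susceptible, then $\sigma$ is genuinely random, and the product form survives integration only because at most one of the two factors actually depends on $\sigma$: an end node's state can depend on $\sigma$ only if that node itself starts susceptible, and on an open triplet both end nodes starting susceptible together with node 2 means the epidemic never moves at all; so at least one end node starts in $I$ or $R$, and the constant infectious period makes its trajectory deterministic, whence its factor is an indicator constant in $\sigma$ that pulls out of the integral. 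The same caveat afflicts your preliminary claim of independence given node 2's full trajectory: when $\sigma$ is random, conditioning on it entangles the two edge Poisson processes (it is the first arrival among them), e.g.\ for $\bx^0=(ISI)$; independence of $X_1(t)$ and $X_3(t)$ survives there only because both are deterministic. Once you add this initial-condition dichotomy your proof closes, but at that point you have essentially reproduced the paper's case analysis, in which the deterministic trajectory of an initially non-susceptible node makes the marginals collapse directly (e.g.\ $\pb_{23}(BC)=\pb(ABC)$ and $\pb_2(B)=\pb_{12}(AB)$ when node 1 starts infected and $t\le\meanT$), with no conditioning machinery needed.
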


\begin{proof}
We consider each initial condition separately.

\noindent (i) Assume $\bx^0 = (ISS)$ and recall \eqref{MEopen}. Because of the
initial condition, the probability of all cases in which the state $A$ of
individual 1 is $S$ is 0, i.e.~
\[\pb(SBC) = \pb_{12}(SB) = \pb_{13}(SC) = \pb_1(S) = 0\]
for all $B,C\in\{S,I,R\}$. Now consider separately the cases in
which $A=I$ and $A=R$.

When $A=I$, $\pb(ABC) = 0$ for all $t>\meanT$, but also $\pb_{12}(AB)$ and so $\pb_o
(ABC)$ are null and \eqref{OpenMEConst} holds trivially. Therefore, consider
only the times $t\leq \meanT$. Then, $\pb_{12}(RB) = \pb(RBC) = 0$, so that
\[\pb_{23}(BC) = \pb(SBC) + \pb(IBC) + \pb(RBC) = \pb(IBC) = \pb(ABC)\]
and
\[\pb_2(B) = \pb_{12}(SB) + \pb_{12}(IB) + \pb_{12}(RB) = \pb_{12}(IB) = P_{12}(AB),\]
and thus $\pb_o (ABC) = \pb(ABC)$. Therefore \eqref{OpenMEConst}
holds for all $t\geq 0$.

When $A=R$, the argument is similar. In particular, $\pb(ABC) = 0$ for all
$t\leq \meanT$, but also $\pb_{12}(AB)$ and therefore $\pb_o (ABC)$ are null. For
$t>\meanT$, instead, $\pb_{12}(IB) = \pb(IBC) = 0$, so that
\[\pb_{23}(BC) = \pb(SBC) + \pb(IBC) + \pb(RBC) = \pb(RBC) = \pb(ABC)\]
and
\[\pb_2(B) = \pb_{12}(SB) + \pb_{12}(IB) + \pb_{12}(RB) = \pb_{12}(RB) = P_{12}(AB),\] and thus $\pb_o (ABC)
= \pb(ABC)$. Therefore, again, \eqref{OpenMEConst} holds for all $t\geq 0$.

\noindent (ii) For all other initial conditions $\bx^0$ in which individual 1 is
non-susceptible at the start a similar argument to the one above can be used to prove that the proposition still holds.

\noindent (iii) When the initial condition is $\bx^0 = (SSI)$ or any other in
which individual 3 is non-susceptible from the start, the proposition also follows by symmetry.

\noindent (iv) For $\bx^0 = (SIS)$, the problem is slightly different. First of
all we take the standard convention that $\pb_o (ABC) = 0$ for $B=S$ (i.e.~we assume that a ratio is null when the numerator is null, irrespective of the value of the denominator). Then, for $B=I$ or $B=R$,
the result holds trivially when $t>\meanT$ or $t\leq \meanT$, respectively. When the
result is not trivial, it holds because for a constant duration of the
infectious period, individual 2 transmits (or has transmitted) independently to
1 and 3, so that the joint distribution of the state of pairs $(1,2)$ and
$(2,3)$ is the product of the marginals.

\noindent (v) All other initial states in which individual 2 is not susceptible
at the start follow trivially. \end{proof}

\begin{prop}\label{thm:ClosedMEConst}
For the SIR model on a closed triangle, when the infectious period has constant duration, all
moment closure approximations considered here are exact, i.e.
\begin{equation}\label{ClosedMEConst}
\pb_\kappa(ABC) = \pb_\rho(ABC) = \pb_\mu(ABC) = \pb(ABC)\; ,
\end{equation}
for all $A,B,C\in\{S,I,R\}$, all times $t\geq 0$, and all initial conditions.
\end{prop}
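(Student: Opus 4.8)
The plan is to exploit a single structural consequence of the constant infectious period: an initially non-susceptible node has a completely deterministic state trajectory. Indeed, a node that is infectious at $t=0$ is in state $I$ for $t\le\meanT$ and in state $R$ for $t>\meanT$ with probability one, and a node starting in $R$ simply stays in $R$; only initially-susceptible nodes carry any randomness, through their random (possibly infinite) infection times. I would record this as the key observation and then note that, for any pure initial condition on three nodes, \emph{at least one} of the three is therefore deterministic (if all three start susceptible, nothing happens and all three are frozen in $S$).

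First I would dispose of the easy cases. If two or three nodes are deterministic, then $\pb(ABC)$ is supported on a single value of at most one free coordinate, and \eqref{ClosedMEConst} follows by the same trivial cancellations used in Proposition~\ref{thm:OpenMEConst}: the marginals of the deterministic nodes collapse the numerators and denominators of \eqref{Kapp} and \eqref{Rhoapp}, and the product form \eqref{EntQ} is immediate.

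The binding case is exactly one deterministic node, i.e.\ precisely one node starts non-susceptible and the other two start susceptible. Writing $D(t)$ for the deterministic state of that node, the exact law is $\pb(ABC)=\ind[\,\cdot=D(t)\,]\,r(\,\cdot,\cdot\,)$, where $r$ is the joint distribution of the two random nodes. I would then substitute this into each closure. For Kirkwood~\eqref{Kapp}, the two pair-marginals that involve the deterministic node reduce to the corresponding single-node marginals and cancel against $\pb_1(A)\pb_2(B)\pb_3(C)$, leaving exactly $r$; the same cancellation, together with the convention that a ratio with null numerator is null, handles the normalising sum in \eqref{Rhoapp}. Because the formula~\eqref{Rhoapp} singles out the central node $2$, I would check the two inequivalent placements of the deterministic node (node $2$ versus node $1$, the latter being symmetric to node $3$) separately; in both, $\pb_\rho$ collapses to $r$.

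The main obstacle, and the step requiring the most care, is the full maximum-entropy closure $\pb_\mu$, which on the triangle has no closed form. Here I would argue via the characterisation~\eqref{EntQ}. The exact law $\ind[\,\cdot=D(t)\,]\,r$ is itself of product form $q_{12}(AB)q_{23}(BC)q_{13}(AC)$ — take the factor carried by the frozen node to be its indicator, the factor on the edge joining the two random nodes to be $r$, and the third factor to be $1$ — and it trivially carries the prescribed single- and pair-marginals. Since the product form \eqref{EntQ} is precisely the stationarity condition for the strictly concave entropy maximisation over the affine set of distributions with those marginals, a feasible distribution of this form must be the unique maximiser, i.e.\ $\pb=\pb_\mu$. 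The one delicate point is that the marginals now contain zeros, so strict positivity fails; I would check that these zeros are mutually consistent (all inherited from the single frozen coordinate), so that the product form with vanishing $q$-values is admissible and the uniqueness argument (quoted from Csisz\'ar in the excerpt) survives on the reduced support. The underlying probabilistic reason for the absence of any obstruction is that, with one node frozen, the triangle can support only a genuine \emph{two-way} interaction between the remaining nodes, which every pairwise closure represents exactly.
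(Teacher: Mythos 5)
Your proposal is correct. Its engine --- with a constant infectious period, any initially non-susceptible node has a deterministic trajectory, so under pure initial conditions at least one coordinate of the triangle is frozen --- is also what drives the paper's own proof; the paper simply runs the resulting cancellations case by case ($A=I$ for $t\le\meanT$, $A=R$ for $t>\meanT$; $\bx^0=(ISS)$ versus $(SIS)$; $\kappa$ versus $\rho$), whereas you package them once via $\pb(ABC)=\ind\!\left[\,\cdot=D(t)\,\right]r(\cdot,\cdot)$ and let \eqref{Kapp} and \eqref{Rhoapp} collapse uniformly. The genuine divergence is the full-ME step. The paper argues algorithmically: having shown $\pb^{(1)}=\pb_\rho=\pb$ from the uniform start, it checks that a further triple step of the iterative scaling scheme \eqref{itscale} maps $\pb$ to itself, so the scheme --- whose convergence to the unique ME distribution is quoted from Csisz\'ar --- is stationary at $\pb$, giving $\pb_\mu=\pb$ (and, incidentally, convergence after a single triple step). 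You instead exhibit the exact law directly in the product form \eqref{EntQ} (indicator on the frozen node, $r$ on the edge joining the two random nodes, $1$ on the third edge) and invoke the variational characterisation: a distribution with the prescribed marginals lying in the (closure of the) product-form family is the unique entropy maximiser. Your route is shorter and makes the structural reason transparent --- one frozen node leaves only a two-body interaction, which any pairwise factorisation carries exactly, with no independence assumption needed on $r$ --- but it rests on the boundary version of the product-form characterisation, since the marginals contain zeros; you rightly flag this, and it does require the closure-of-the-exponential-family argument from Csisz\'ar--Shields, a delicacy the paper's fixed-point route sidesteps because convergence of iterative scaling for consistent marginals is already a cited black box. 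Both arguments share the feature noted in the paper's Remark: only the initially non-susceptible node's infectious period needs to be non-random.
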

\begin{proof}
We analyse each moment-closure approximation separately.
\medskip

\noindent \textbf{(a) Kirkwood.} For the case of Kirkwood's approximation
$\kappa$, we need to prove that:
\begin{equation} \label{stappConst}
\frac{\pb_{12}(AB) \pb_{23}(BC) \pb_{13}(AC)}{\pb_1(A) \pb_2(B)\pb_3(C)}
= \pb(ABC)\; . 
\end{equation}

\noindent (i) Consider first the initial condition $\bx^0 = (ISS)$. Because the
initial condition, the probability of all cases in which the state $A$ of
individual 1 is $S$ is 0, i.e.~
\[\pb(SBC) = \pb_{12}(SB) = \pb_{13}(SC) =
\pb_1(S) = 0\] for all $B,C\in\{S,I,R\}$. Now consider separately the cases in
which $A=I$ and $A=R$.

When $A=I$, $\pb(ABC) = 0$ for all $t>\meanT$, but also $\pb_{12}(AB)$ and so
$\pb_\mu (ABC)$ are null and \eqref{stappConst} holds trivially (we adopted the convention that the indeterminate form $0/0$ equals $0$). Therefore,
consider only the times $t\leq \meanT$. Then, $\pb_{12}(RB) = \pb_{13}(RC) =
\pb(RBC) = 0$, so that \[\pb_2(B) = \pb_{12}(SB) + \pb_{12}(IB) + \pb_{12}(RB) =
\pb_{12}(IB) = P_{12}(AB)\] and, similarly, $\pb_3(C) P_{13}(AC)$. Therefore,
\[ \pb_\sigma(ABC) = \frac{\pb_{23}(BC)}{\pb_1(A)} = \pb(ABC)\; , \]
because \[\pb_{23}(BC) = \pb(SBC) + \pb(IBC) + \pb(RBC) = \pb(IBC) = \pb(ABC)\]
and $\pb_1(A)=1$. Hence, \eqref{stappConst} holds for all $t\geq 0$.

\noindent (ii) A similar argument holds for $A=R$, but now the trivial case
when all probabilities are 0 is when $t\leq \meanT$ and the other considerations
apply to $t>\meanT$.

\noindent (iii) The calculations for any other initial condition in which
individual 1 is non-susceptible are simply a special case of those given above.

\noindent (iv) The result works for any other initial condition for symmetry
reasons (we can simply define individual 1 to be an initial non-susceptible).

\noindent \textbf{(b) First-step ME.} For the case of approximation $\rho$,
obtained by stopping the ME algorithm after a single triple step, we need to
prove that:
\begin{equation}\label{MEapp}
\frac{\pb_{12}(AB) \pb_{23}(BC) \pb_{13}(AC)}{\pb_2(B)
\sum_{b}{\frac{\pb_{12}(Ab)\pb_{23}(bC)}{\pb_2(b)}}} = \pb(ABC) \; .
\end{equation}

\noindent (i) Analogously to before, first assume we start from $\bx^0 = (ISS)$
and consider the case $A=I$ and $t\leq \meanT$ (the result is trivial for $t>\meanT$).
Then: \[ \pb_{\rho}(ABC) = \frac{\pb(ABC)\pb_{13}(BC)}{\sum_{b}\pb(AbC)} =
\pb(ABC)\; . \] The case $A=R$ and $t>\meanT$ is analogous.

\noindent (ii) Given the asymmetry in \eqref{MEapp}, we still need to consider
separately the case of $\bx^0 = (SIS)$ (all the others work as special cases
by symmetry).  In this case, for $t\leq \meanT$, the sum at the denominator of
\eqref{MEapp} contains only the term for $b=I$, and cancels out with the first
2 terms at the numerator ($\pb_2(b)=1$). The result follow immediately because
$\pb_2(B)=1$ and $\pb_{13}(AC) = \pb(ABC)$.

\noindent \textbf{(c) Full ME.} Finally we prove the result for the ME
approximation $\mu$. We already know from \textbf{(b)} above that if
we start the ME algorithm from the uniform distribution $\pb^0 (ABC) = 1/27$,
for all $A,B, C \in\{S,I,R\}$, after the first triple step we reach
$\pb^{(1)}(ABC) = \pb_{\rho}(ABC) = \pb(ABC)$. We now show that, if we apply
another triple step to $\pb(ABC)$, we remain on the same distribution
$\pb(ABC)$. In other words, starting from any initial distribution, convergence
of the ME algorithm, restricted to its output after every triple step, occurs after a single triple step.

To show this, we expand the first triple step of the algorithm (as in
\eqref{Rhoapp}, but by keeping explicitly $\pb^{(0)}(ABC)$ in the equations).
Considering separately every initial condition with one infective and 2
susceptibles, we use the same arguments used in other proofs (for example, when
$\bx^0 = (ISS)$, in the suitable time range, we know that $\pb_{23}(BC) =
\pb(ABC)$ and that sums over the state $a$ of individual 1 contain only the
probabilities for $a=A$) to prove that $\pb^{(1)}(ABC) = \pb(ABC)$.
\end{proof}

\begin{rem}
It is worth mentioning that the proofs of Propositions \ref{thm:OpenMEConst} and \ref{thm:ClosedMEConst} above require only the
initial infective (or any of the initial infectives, if more than one) to have
a constant duration of the infectious period. Therefore the result readily
extends to the case in which the three individuals have possibly different
durations of infection, as long as they are
non-random. Furthermore, in the case of the $SI$ model, the same
arguments can be used to prove that all approximations are always exact.
\end{rem}

\section{Erlang-distributed infectious period: SIR model on three nodes}
\label{sec:GammaDistributedInfectiousPeriod}

Analytical progress becomes difficult when the infectious period is not constant. Conversely, numerical methods based on continuous-time Markov chains are straightforward to implement to study the behaviour of the Markovian model. 
In order to bridge the gap between these two extremes, we extend the framework
used for Markovian model by allowing infectives to go through a series of
infectious stages, each with independently and identically distributed
exponential infectious periods, to model an overall Erlang-distributed sojourn
time in the infectious state, with mean $\meanT=1$. As the number $n_I$ of infectious
stages increases, the variance decreases as $1/n_I$. The results are reported in
Figure \ref{fig:OpenSSDVSvar}: for each of the three starting points $\bx^0$
(at time $t=0$), the overall ``distance'' between the exact distribution and
the moment-closure approximation is measured by taking, at each time $t\geq 0$,
the sum of squared differences (SSD), $\sum_{\bx}{[\:\pb^{\bx^0}(\bx;t) -
\pb_o^{\bx^0}(\bx;t)\:]^2} $, and then integrating it over all times. The choice
of SSD to measure the discrepancy between distributions instead of
the possibly more natural Kullback-Leibler (KL) divergence is appropriate because
Kirkwood's approximation does not lead to a proper distribution over system
states, thus often yielding negative values for the
KL divergence that are hard to interpret.

\subsection{Open triplet}
\label{sec:GammaOpen}

Figure~\ref{fig:OpenSSDVSvar} highlights how the exact result for the
theoretical limit of a constant duration of infection is approached and how
such convergence depends on the starting state $\bx^0$ and the infection rate
$\tau$. Note in particular how the slowest convergence seems to be attained for
intermediate (though starting-state-dependent) values of $\tau$ and also how
the approximation performs particularly poorly for the starting condition
$\bx^0=(SIS)$.

Figure~\ref{fig:OpenSSDVSvar} combines the contribution of all states in an
aggregate measure of the approximation performance. Decomposing such an
aggregate measure, however, reveals significant heterogeneity (see Figures S1
and S2 of the Supplementary Material), with negative and positive errors in
different cases. The decomposition, however, confirms the presence of cases
when the approximation is exact and the particularly poor performance when the
system starts with the intermediate node infected (bottom row of Figures S1 and
S2 of the Supplementary Material).

\subsection{Closed triangle}
\label{sec:GammaClosed}

On the closed triangle, Figure~\ref{fig:ClosedSSDVSvar} describes the overall
``distance'', measured again by the time-integrated SSD, between each
approximation and the exact probability distribution over all systems states,
for different starting states, as a function of the variance in the duration of
the infectious period. On the other hand, Figure~\ref{fig:ClosedSSDVStau}
expresses the same distance as a function of the infectivity $\tau$. It is
quite evident that, overall, ME performs better than the other approximations,
in particular Kirkwood's. However, the rate of convergence seems to be
comparable to that of other closures.

Again, assessing the overall quality of each approximation on the closed
triangle with such an aggregate measure hides the strong heterogeneity that can
be seen when decomposing the SSD in the contribution of different states. In
general (see Figures S3-S6 of the Supplementary Material), ME performs often
better than Kirkwood's approximation, although we collect in Figure
S7 some of the extreme examples of its variability
in performance. In particular, in the somewhat trivial case of $(ISS)\to(ISS)$,
we have found ME to be consistently less accurate than Kirkwood's approximation
over the entire parameters' spectrum, sometimes by almost an order of
magnitude. The implications of such heterogeneity on the performance of each
approximation in any practical context are still unclear and require further
investigation. In fact, despite ME appearing superior when the contribution of
all states is equally weighted, when studying the system dynamics at the
population level, the infection process, the current epidemic phase, as well as
the specific network structure, all interact in a complex fashion to produce
different proportions of triangles in each state. Uneven weights associated to
each of the specific transitions $\bx^0\to\bx$ could overturn the current
conclusions in some cases. In particular, it is not unreasonable to imagine a
large proportion of triangles that have started from but not yet left
the $(ISS)$ state, a case in which ME performs particularly poorly.

A final comparison between the various approximation methods on the closed
triangle consists in stratifying the contribution of each state $\bx$ to the
overall SSD measure (see Figure S8 in the Supplementary Material). In addition to
the quantitatively smaller error of the ME approximation compared to the
others, its evidently most balanced decomposition both across states and in
particular over time undoubtedly represents a further element of merit.

\section{Results for other motifs}
\label{sec:OtherMotifs}
In Sections \ref{sec:Markovian}, \ref{sec:Const} and \ref{sec:GammaDistributedInfectiousPeriod} we have focused on the SIR model
mainly because it is the one that is most commonly considered in the literature. 
We have also restricted our attention to a single open triplet or closed 
triangle, because they are simple enough for analytical results to be obtained. 
We now show that some of these results are peculiar to the particularly simple 
topologies considered, and in particular on the fact that the presence of 
initially infected cases \emph{inside} the triplet or triangle significantly 
reduces the degrees of freedom of the system. However, such initial careful 
analysis provides important insight, which guides us in how to approach 
extensions to other motifs and to larger networks, in particular in terms of the 
role that a random duration of the infectious period plays on the exactness of 
various moment closures.


Analytical proofs become cumbersome as the complexity of the graph increases, and 
appears particularly difficult to obtain in the case of ME. Therefore, we opt for 
extending the numerical exploration of Section 
\ref{sec:GammaDistributedInfectiousPeriod} to build some intuition about the 
errors in moment closure approximations on open triplets and triangles that are 
subgraphs of other slightly larger networks. We note from the start that, if the 
aim is to understand how our results extend to large networks, at some point the 
numerical method we are using needs to be abandoned in favour of a dynamical 
system where one can can deduce the rate of change of the state of a pair based 
on the states of its neighbours \citep{Sharkey:2013}.

Although recent results \citep{Karrer:2010,Wilkinson:2014} involve the formulation of dynamical 
systems based on time-since-infection approaches, leading to equations with 
distributed delays, by far the easiest starting point for writing a dynamical 
system for each pair of nodes is to use ordinary differential equations. This 
requires the use of constant rates (i.e.~Markovian models). However, in 
Proposition \ref{thm:ClosedMEConst} we have shown 
that, as soon as we introduce randomness in the duration of the infectious 
period, moment closure approximations fail on networks with loops.

The SI model, apart from being simpler because of the presence of only 2 states, has the further benefit of being both Markovian and with a ``constant'' (infinite) duration of infection. Therefore, we focus on it as our baseline model for exploring the exactness of moment closure approximations for networks with more than three nodes, with the understanding that closures that fail in the SI model cannot be exact if latency or recovery are added.


Figure \ref{fig:graphs} reports a range of motifs of increasing complexity. The three digits appended to the end of the motif name are the indices of the nodes (as in Figure \ref{fig:graphs}) of the triplet on which the closure is applied. Unless stated otherwise, the initial condition is represented by node 1 having just been infected.

\subsection{SI model}
\label{sec:OtherMotifsSI}

Table \ref{tab:SI} reports a comprehensive list of motifs, based on Figure \ref{fig:graphs} on which the moment closures are tested for the SI model. Because of the high dimensionality of the exploration, we only report the two dynamically important states $(ISS)$ and $(ISI)$ and, to further enrich our understanding, states $(IIS)$ and $(III)$. We choose to observe the error for those four state at only one time point per motif, for which we have checked that the probability of the system being in the most interesting states is non-zero. Unless specified otherwise, such time is taken to be $t=0.5$ (to be compared with the time scale $\meanT=1$, which we use in the presence of recovery). 

Approximations are divided in three groups, where the first and second assume one initially infected node and, respectively, the closure on an open triplet or a closed triangle, while the third assumes multiple initially infected nodes. In each group, the list of cases examined is further grouped in subsets with the intention of testing a particular network feature over small graphs of increasing complexity. Table \ref{tab:SI} indicates whether each closure approximation is exact or not at the time tested (hereafter we will say it ``works''), although wherever possible the specific time is chosen such that a positive answer is indicative of general validity for all times (a negative answer, or ``failure'', is of course sufficient to discard exactness). Only ME is tested on closed triangles, because if it fails, Kirkwood's approximation and 1-step ME also fail. Comments are added to provide extra information where appropriate. Areas of grey background are the most useful ones to gain understanding of whether the results of Propositions \ref{thm:OpenMEConst} and \ref{thm:ClosedMEConst} scale up to larger networks and up to which point.

\subsubsection{Open triplet}
\label{sec:OtherMotifsSIopen}

We know from Section \ref{sec:Markovian} and from Proposition \ref{thm:OpenMEConst} that the SI model works on the open triplet with a single initial infective. We verify that extensions to any tree-like structure also work, in line with the results of \citet{Sharkey:2013} and Theorem \ref{thm:TreeAllModels} below. Triplets where the infection enters from the central node (3star324 and Tree324) do not contribute to the dynamics of the spread on a tree.

Although the closure appears to work on loops of size 4 when the system starts inside the loop (Square123 and toastB123), if the system starts outside the loop (KiteEmpty234 and FishEmpty345, or KiteDiagB234 and FishDiagB345) the closure fails, thus suggesting none of the closures examined here extends to large networks containing loops larger than 3. The behaviour is slightly different if the system starts already in the loop but not in the triplet (Square234 and ToastA234). However, starting again outside the loop and entering the loop not in the triplet (KiteEmpty345, FishEmpty456, KiteDiagA345 and FishDiagA456) the closure fails, confirming the impossibility to scale our exact results on large networks with loops of more that 3 nodes.

\subsubsection{Closed triangle}
\label{sec:OtherMotifsSIclosed}

Proposition \ref{thm:ClosedMEConst} guarantees that all closures are exact on a closed triangle. This appears to be true even for slightly more complex networks (ToastA123 and 4Full123) as long as the system starts in the triangle on which the closure is applied. ME seems to be exact also on closed triangles when the system does not start within the triangle (MartiniGlass234 and BowTie345) and the same behaviour applies when the triangle is part of a larger motif (KiteDiagA234 and FishDiagA345, or KiteFull234 and FishFull345). Note however that, whenever the system starts outside the triangle, both Kirkwood and 1-step ME fail. This suggest that neither of these closures scales up to larger networks with triangles and only ME gives hopes for such an extension. However, as soon as the system is allowed to enter the triangle on which the closure is applied through more than one route (ToastB234 and all the following networks) even ME fails. 

We further confirm our intuition that ME is exact even in the presence of overlapping triangles, on condition that the each triangle can only be entered through a single route, by observing that ME works also on two special larger networks, namely DoubleKite237 and DoubleFish348.

Note that, although ME works on the first triangle encountered  in a full clique of size 4 (4Full123, KiteFull234 and FishFull345), it does not work on other triangles inside the same clique (4Full234, KiteFull345 and FishFull456), therefore leaving no hope for any of the closures investigated here to work on networks containing fully connected cliques of size larger than 3. In particular, this suggests that, in the so-called \emph{households models}, where fully connected cliques (households) are joined by a few between-clique links, the dynamics of infection spread on single nodes or pairs (and hence, for example, the expected epidemic course) cannot be represented exactly only in term of pairs using any of the moment closure techniques considered here, unless households have size no larger than 3.

\subsubsection{Multiple initial infectives}
\label{sec:OtherMotifsSImultiple}

In line with \citet{Sharkey:2013} and Theorem \ref{thm:TreeAllModels} below, we verify that the the closure on the open triplet works on to tree-like networks also with multiple initial infectives, on condition that the initial conditions are pure.

However, on closed triangles, we verify that even ME fails when multiple initial infectives are present, even if the initial conditions are pure (Tripod234). This suggests that even when the maximum household size is 3, the moment closures considered here are exact only when a single initial case starts the epidemic.

\subsection{SIR model}
\label{sec:OtherMotifsSIR}

In Proposition \ref{thm:ClosedMEConst} and Section \ref{sec:GammaDistributedInfectiousPeriod} we have shown that, if we introduce the possibility of recovering, exactness of any closure on a network that contains some triangles can only be hoped for with a constant duration of infectious period. Therefore, in Table \ref{tab:SIR} we report a similar analysis to the one done in Table \ref{tab:SI}, but for the SIR model with constant duration of the infectious period $T\equiv\meanT$ (identified by the letter ``C''). The results are not exact but are inferred by visually examining the convergence, like in Figure \ref{fig:ClosedSSDVSvar}, as the number $n_I$ of infectious classes increases. The potential exactness of the approximation also for the Markovian model with exponentially distributed $T$ (denoted by ``M''), and therefore for all models with $T$ random but not-degenerate, is reported in the comments.

For ease of comparison with the careful examination presented for the SI model, in Table \ref{tab:SIR} we propose the same structure of Table \ref{tab:SI}. However, we know that, if on a particular graph the closure fails for the SI model, it is bound to fail also for the SIR model. Therefore, we only fill in the table partially, leaving aside all tests that do not help gaining further understanding.

On the open triplet, Table \ref{tab:SIR} confirms the exactness of the closure on  tree-like structures for any infectious period. It also confirms that the closure is not exact on loops of size 4, if the initially infected node is outside the loop, thus suggesting no large network with loops larger than triangles admit exact dynamics under the pair-based approximations considered.

However, the results for the MartiniGlass234 and the KiteDiagA234 suggests that there is no hope for the studied approximations to be exact even if loops only consist of closed triangles, whenever the initially infected node is outside the triangle and even if the infectious period is of constant duration. Therefore, we conclude that, for SIR models, the pair approximations considered can only be exact in general on a tree-like network (see \citet{Sharkey:2013} and Theorem \ref{thm:TreeAllModels} below).

As an example, we show two results of our numerical investigation, one suggesting exactness on the triangle (Figure \ref{fig:TriangleSIR}) and one suggesting failure to be exact on the MartiniGlass234 graph (Figure \ref{fig:MartiniGlass234_SIR}).

\subsection{SEIR}
\label{sec:OtherMotifsSEIR}

Given the negative result obtained for the SIR model, we do not expect the SEIR to perform any better. For this reason we only report our results in Table S1 in the supplementary material. Results are mostly for the SEIR model with constant duration of both the latent and the infectious period, hereafter denoted by ``CC'', though comments on cases where the latent, the infectious, or both are Markovian (MC, CM and MM, respectively) are given when useful. Again, results are not exact, but rather extrapolations from trends for increasing number of the latent classes, $n_E$, and of infectious classes $n_I$. Given most models in the literature only consider the SIR model, it is interesting to verify, in line with Theorem \ref{thm:TreeAllModels} below, that the fact that pair approximations work on a tree-like structure extends to the additional presence of a latent period.

The exploration is computationally intensive, given the number of classes involved. Therefore, some cases were dubious and we did not feel we could conclude anything with confidence. However, they do not affect the whole picture.

\subsection{Reed-Frost-type models}
\label{sec:OtherMotifsRF}

The models of Reed-Frost type (RF) represent a special case that needs to be treated with care. In particular, nobody is ever in the I state. We have visually explored moment closure on many triplet and triangle states, but we here present only states ESS and ESE, under the assumption that they are the dynamically important ones, as well as EES and EEE, to keep the parallel with the previous cases. Exploration of other states did not contribute in gaining further insight. The initially infected node (or nodes) are assumed to have just entered state E.

As noted in Section \ref{sec:EpiModelsRF}, RF-type models can be studied numerically by considering an SEIR model and assuming an infectious period that is much shorter than the latent period. In all numerical analyses, we have used $\meanL = 1$ for the latent period and $\meanT = 0.001$ for the infectious one. The infection rate is adjusted to $\tau = 1000$ to keep fixed the mean probability of transmitting across a link, $\tau\meanT=1$. Furthermore, we investigated both the ``standard'' RF model with a constant duration of infectious period $L\equiv\meanT$ and a fixed probability $P\equiv p$, denoted by ``CC'' and approximated by letting both $L$ and $T$ being Erlang-distributed with decreasing variance (by increasing the number of stages $n_E$ and $n_I$), as well as the cases, denoted by ``MC'', ``CM'' and ``MM'', where either $L$ or $T$ or both are exponentially distributed, respectively. 

Results are similar to those of the SI model: in particular, we found that ME seems to be exact also on closed triangles even when the system does not start within the triangle (MartiniGlass234 and BowTie345). The same behaviour applies when the triangle is part of a larger motif (KiteDiagA234 and FishDiagA345, or KiteFull234 and FishFull345), as long as the triangle can be accessed only through one route. Unlike the SI model, here 1-step ME also appears to be exact in certain cases, though Kirkwood's approximation still fails. Surprisingly, we found that most results that hold for the CC case also hold for a random latent period and a random transmission probability $P$.

As for the SEIR model, the exploration in this case is computationally intensive, given the number of classes involved and, in addition, the numerical challenges of having both small and very large rates simultaneously. As before, dubious cases are highlighted, but do not affect the whole picture. Unlike the SEIR model, however, the exploration can be somewhat simplified by noting that, in the RF-CC model, many states never occur with positive probability. We carefully selected the times when to investigate each closure, and monitored also the probability with which the system can be in those states at those times, to make sure results were not trivial. Figure \ref{fig:KiteDiagB345} reports an example of the numerical exploration in the dubious case of the KiteDiagB345 graph. Despite the open possibility that convergence for states ESS and ESE might occur if more classes could be added (we believe it not to, though), Figure \ref{fig:KiteFull345_RF} for the KiteFull345 graph clearly shows the error increasing, strongly suggesting the approximation is unlikely to be exact in general anyway.

\section{Extension to large networks}
\label{sec:LargeNetworks}

\subsection{Tree-like networks}
\label{sec:LargeTree}

In Proposition \ref{thm:SIRtree} we showed that moment closure on the dynamically important states $ISS$ and $ISI$ for the SIR model on an open triplet is exact. Our numerical exploration in Section \ref{sec:OtherMotifs} suggests it holds for larger networks and different models and Figure \ref{fig:sitrees}b confirms it via simulation for the SI model with a single initial infective. In line with the above, \citet{Sharkey:2013} proved that the same results hold for the Markovian SIR model on any tree-like network and any number of initial infectives, as long as the initial condition is pure.

Here we provide a much simpler and more intuitive proof of this results, which holds much more generally (in particular for all models considered here).

\begin{thm}
\label{thm:TreeAllModels}
For any connected triplet $(i,j,k)$ on any tree-like network, and any model of infection spread in which susceptibles escape infection independently from each of their infected neighbours and, after infection, can never return to the susceptible state,
\begin{equation}
\label{TreeAllModels}
\pb^{\bx^0}_{i\!jk,o}(ISS) = \pb^{\bx^0}_{i\!jk}(ISS) \quad \text{and} \quad \pb^{\bx^0}_{i\!jk,o}(ISI) = \pb^{\bx^0}_{i\!jk}(ISI) \; ,
\end{equation}
for any $t\ge 0$ and any pure initial condition $\bx^0$. The result should be adapted to the Reed-Frost-type models by replacing $I$ with $E$ in \eqref{TreeAllModels}.
\end{thm}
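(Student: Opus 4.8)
The plan is to reduce the two closure identities in \eqref{TreeAllModels} to a single statement of conditional independence and then to establish that statement using the tree structure together with the two structural hypotheses on the dynamics. Since $j$ is the central node of the triplet, the unclustered closure reads $\pb_{i\!jk,o}(ISS)=\pb_{ij}(IS)\pb_{jk}(SS)/\pb_j(S)$, so that after dividing through by $\pb_j(S)$ the identity $\pb_{i\!jk,o}(ISS)=\pb_{i\!jk}(ISS)$ becomes
\[
\pb\big(X_i=I,\,X_k=S \bigcond X_j=S\big)=\pb\big(X_i=I\bigcond X_j=S\big)\,\pb\big(X_k=S\bigcond X_j=S\big),
\]
and similarly the $ISI$ identity becomes the independence of $\{X_i=I\}$ and $\{X_k=I\}$ given $\{X_j=S\}$. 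Thus it suffices to prove that, conditional on the middle node being susceptible at time $t$, the states of the two outer nodes are independent. The degenerate case $\pb_j(S)=0$ is trivial, since then $\pb_{i\!jk}(ISS)=\pb_{i\!jk}(ISI)=0$ and the closure vanishes by convention.

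Next I would set up the decomposition induced by the tree. Deleting $j$ splits the network into subtrees $\{\mathcal{T}_\ell\}_{\ell\in N(j)}$, one per neighbour $\ell$ of $j$; because the network is tree-like, $\ell$ is the unique node of $\mathcal{T}_\ell$ adjacent to $j$, and $i\in\mathcal{T}_i$, $k\in\mathcal{T}_k$ lie in distinct subtrees. On each subtree I would run the \emph{autonomous} epidemic $\Phi_\ell$ obtained by treating $j$ as permanently susceptible, hence non-transmitting, driven by the primitive randomness $\xi_\ell$ consisting of the sojourn-time and transmission clocks of the nodes in $\mathcal{T}_\ell$ together with the escape randomness on the single link $\{j,\ell\}$. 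Here purity of the initial condition is essential: with a deterministic initial state the only randomness is dynamical and splits cleanly across subtrees, so by the model's independence assumptions the bundles $\{\xi_\ell\}_{\ell\in N(j)}$ are mutually independent, and by the escape-independence hypothesis the event that $j$ avoids infection from $\ell$ is $\sigma(\xi_\ell)$-measurable.

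The crux, and the step I expect to be the main obstacle, is to express the conditioning event $E_j:=\{X_j(t)=S\}$ in terms of the autonomous processes without circularity: in the true dynamics $j$ may become infected and then feed infection back into the subtrees, coupling them. This is resolved using the second hypothesis, that $S$ can only be left and never re-entered, so that $E_j$ coincides with the event that $j$ is never infected in $[0,t]$. Up to the first time $j$ would be infected, $j$ has transmitted nothing, so every subtree evolves exactly as its autonomous copy; hence $E_j$ equals the event $\bigcap_{\ell\in N(j)}\tilde{E}_{j\ell}$ that, in the autonomous coupling, no neighbour infects $j$ by time $t$, with each $\tilde{E}_{j\ell}\in\sigma(\xi_\ell)$. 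On $E_j$ the outer states also agree with their autonomous counterparts, so $\{X_i=I\}\cap E_j=B_i\cap E_j$ and $\{X_k=S\}\cap E_j=B_k\cap E_j$ with $B_i\in\sigma(\xi_i)$, $B_k\in\sigma(\xi_k)$.

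Finally I would conclude by a direct factorisation. Grouping $B_i\cap\tilde{E}_{ji}\in\sigma(\xi_i)$, $B_k\cap\tilde{E}_{jk}\in\sigma(\xi_k)$, and the remaining $\tilde{E}_{j\ell}\in\sigma(\xi_\ell)$, mutual independence of the $\xi_\ell$ gives product formulas for $\pb(B_i\cap B_k\cap E_j)$, $\pb(B_i\cap E_j)$, $\pb(B_k\cap E_j)$ and $\pb(E_j)$ in which the common factor $\prod_{\ell\neq i,k}\pb(\tilde{E}_{j\ell})$ cancels, yielding $\pb(B_i\cap B_k\cap E_j)\,\pb(E_j)=\pb(B_i\cap E_j)\,\pb(B_k\cap E_j)$ — exactly the required conditional independence. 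The same argument applies verbatim to $ISI$ by taking $B_k=\{X_k=I\}$, and to the Reed-Frost case by replacing $I$ with $E$, since it used only that a susceptible middle node acts as a non-transmitting separator of the tree.
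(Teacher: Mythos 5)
Your proposal is correct, and its skeleton matches the paper's: both arguments turn on the single fact that, conditional on the middle node $j$ being susceptible at time $t$, the states of the outer nodes $i$ and $k$ are independent. The difference lies in how that fact is treated. The paper derives the closure from it by Bayes-type manipulations (decomposing $\{X_i(t)=I\}$ over the infection time $t_i$ and rearranging conditional probabilities), but the conditional independence itself is \emph{asserted}, with an informal justification that a never-infected node blocks information flow across a tree; even in the longer elaboration on the Vine246 motif, the factorisation over branches conditional on $S_4$ is announced as ``the key step'' rather than derived from primitive assumptions. You instead run the easy algebra in reverse --- reducing the closure identity to the conditional independence statement, with the degenerate case $\pb_j(S)=0$ disposed of separately --- and then actually prove that statement: you delete $j$, attach mutually independent randomness bundles $\xi_\ell$ to the resulting subtrees (this is precisely where purity of $\bx^0$ enters), couple the true dynamics to autonomous subtree processes in which $j$ never transmits, and use the no-reinfection hypothesis to identify $\{X_j(t)=S\}$ with an intersection of $\sigma(\xi_\ell)$-measurable events, after which the product factorisation and cancellation give the result. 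What your route buys is rigour and transparency: it exhibits exactly where each hypothesis is used (purity for independence of the $\xi_\ell$, no return to $S$ for the identification of the conditioning event, and, implicitly, the assumption that a node's dynamics depend only on its neighbours, which the paper relegates to a remark). What the paper's route buys is brevity and a presentation in terms of quantities like $\pb_{ij}(IS)$ that reads off the closure directly; both routes handle $ISI$ and the Reed-Frost variant with equal ease, yours by simply changing the subtree-measurable event $B_k$.
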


\begin{proof}
We initially provide the shortest proof, for which we only need statements involving nodes $i,j$ and $k$. This is fully general, but we believe that explicitly showing how the nodes of the triplet interact with the neighbouring nodes might clarify the argument even further. Therefore, we later present a slightly longer elaboration, applied to the particular case of the Vine246 (Figure \ref{fig:graphs}). 

Consider any pure initial condition $\bx^0$, and use the following notation to describe events: 
\begin{itemize}
\item[$S_j$]: node $j$ is susceptible at $t$;
\item[$I_{t_i}$]: node $i$ has been infected at time $t_i<t$ and is still infectious at time $t$.
\end{itemize}
Note that, although not explicitly stated, both these events depend on $\bx^0$. Then
\begin{equation}
\label{Pij}
\pb_{i\!j}(IS) = \int_0^t{\pb\pp{S_j \wedge I_{t_i}}\rd t_i} = \int_0^t{\pb\pp{S_j \cond I_{t_i}} \pb\pp{I_{t_i}}\rd t_i}
\end{equation}
and 
\begin{equation}
\label{Pjk}
\pb_{jk}(SS) = \pb\pp{S_j \cond S_k} \pb\pp{S_k}.
\end{equation}
Also,
\begin{eqnarray}
\pb_{i\!jk}(ISS) & = & \int_0^t{\pb\pp{I_{t_i} \wedge S_j \wedge S_k}\rd t_i} \label{Pijk_1}\\
& = & \int_0^t{\pb\pp{I_{t_i} \wedge S_j \cond S_k} \pb\pp{S_k}\rd t_i} \label{Pijk_2}\\
& = & \int_0^t{\pb\pp{I_{t_i} \cond S_j} \pb\pp{S_k \cond S_j} \pb\pp{S_j}\rd t_i} \label{Pijk_3} \\
& = & \int_0^t{\frac{\pb\pp{S_j \cond I_{t_i}} \pb\pp{I_{t_i}}}{\pb\pp{S_j}} \frac{\pb\pp{S_j \cond S_k}\pb\pp{S_k}}{\pb\pp{S_j}}  \pb\pp{S_j}\rd t_i} \label{Pijk_4} \\
& = & \frac{\lsq\int_0^t{\pb\pp{S_j \cond I_{t_i}}\pb\pp{I_{t_i}}\rd t_i}\rsq}{\pb\pp{S_j}} \lsq\pb\pp{S_j \cond S_k}\pb\pp{S_k}\rsq \label{Pijk_5} \\
& = & \frac{\pb_{i\!j}(IS) \pb_{jk}(SS)}{\pb_j(S)} \label{Pijk_6}\\
& = & \pb_{i\!jk,o}(ISS). \label{Pijk_7}
\end{eqnarray}
Here, the key passage is between \eqref{Pijk_2} and \eqref{Pijk_3}: conditional on node $j$ being susceptible, the states of nodes $i$ and $k$ are independent. This heavily relies on the tree-like structure and the assumption that individuals cannot regain susceptible status after having been infected, so that if node $j$ is susceptible at time $t$, it has been so for all times from 0 to $t$, and this has prevented any information from passing from node $i$ to $k$ or vice versa. The step from \eqref{Pijk_3} and \eqref{Pijk_4} follows directly from the definition of conditional probability.

The $ISI$ case follows the same steps, though the behaviour of pair $(j,k)$ mirrors that of $(i,j)$ and the proof involves a double integral over both $t_i$ and $t_k$.
\end{proof}
\begin{proof}[Proof applied to the Vine246 case]

We now work out a slightly more laborious proof, where we explicitly consider the neighbouring nodes of the triplet. For this we use the notation relative to the Vine246 network (Figure \ref{fig:graphs}), although it is clear that generalisation to any tree-like network is straightforward. We now write $H(i_1,i_2,\dots,i_n)$ to denote the joint history of the states $i_1,i_2,\dots,i_n$ in the time interval $[0,t]$, and we denote generically by $\int_{H(i_1,i_2,\dots,i_n)}$ the integral over all possible such joint histories. The passages of the proof are essentially the same as above, so we only focus on the $ISI$ state.

We have
\begin{align}
& \pb_{246}(ISI) \quad = \\
& = \; \int_{H(1,3,5,7,8)}{\int_0^t{\int_0^t{\pb\pp{S_4 \wedge I_{t_2} \wedge I_{t_6} \wedge H(1,3,5,7,8)}\rd t_2}\rd t_6}} \label{P246_1}\\
& = \; \int_{H(1,3,5,7,8)}{\int_0^t{\int_0^t{\pb\pp{I_{t_2} \wedge I_{t_6} \wedge H(1,3,5,7,8) \cond S_4} \pb(S_4) \rd t_2}\rd t_6}} \label{P246_2}\\
& = \; \int_{H(1,3,5,7,8)}{\int_0^t{\int_0^t{\pb\pp{I_{t_2} \wedge H(1,3) \cond S_4} \pb\pp{ I_{t_6} \wedge H(7,8) \cond S_4} \pb\pp{H(5)\cond S_4} \pb(S_4) \rd t_2}\rd t_6}}. \label{P246_3}
\end{align}
The last passage is the key step due to independence between all branches separated by node 4 (i.e.~rooted in nodes 2, 6 and 5). We now consider the separate factors inside the integral. Using the definition of conditional probability and the law of total probability,
\begin{align}
& \int_{H(1,3)}{\int_0^t{\pb\pp{I_{t_2} \wedge H(1,3) \cond S_4}\rd t_2 } } \;=\label{P246_H13_1} \\
& \qquad = \; \int_{H(1,3)}{\int_0^t{ \frac{\pb\pp{S_4 \wedge I_{t_2} \wedge H(1,3)} }{\pb\pp{S_4}} \rd t_2}} \label{P246_H13_2}
\\
& \qquad = \; \frac{\int_0^t{\int_{H(1,3)}{\pb\pp{S_4 \wedge I_{t_2} \wedge H(1,3)} \rd t_2}}}{\pb\pp{S_4}} \label{P246_H13_3}\\
& \qquad = \; \frac{\int_0^t{\pb\pp{S_4 \wedge I_{t_2}} \rd t_2}}{\pb\pp{S_4}} \label{P246_H13_4}\\
& \qquad = \; \frac{\pb_{24}(IS)}{\pb_4(S)}\; .\label{P246_H13_5}
\end{align}
The term involving event $I_{t_6}$ is dealt analogously. Instead, the term involving node 5, for which no information is available, simplifies to
\begin{align}
\label{P246_H5_1}
\int_{H(5)}{\pb\pp{H(5)\cond S_4}} & = \int_{H(5)}{\frac{\pb\pp{S_4\cond H(5)} \pb\pp{H(5)}}{\pb\pp{S_4}}} \\
 &= \frac{\int_{H(5)}{\pb\pp{S_4\cond H(5)} \pb\pp{H(5)}}}{\pb\pp{S_4}}  = 1. \label{P246_H5_2}
\end{align}
Substitution of \eqref{P246_H13_5}, its analogous for $I_{t_6}$ and \eqref{P246_H5_2} into \eqref{P246_3} leads to the desired result.
\end{proof}

\begin{rem}
Note that the statement used in the proof that information cannot pass through a susceptible node implicitly relies on the further assumption that the state of a node only depends on those of its neighbours and not on the neighbour's neighbours (nor on any other nodes). Some models might violate this assumption, although it becomes then questionable whether a tree-like static network is a good representation for such models.
\end{rem}

\begin{rem}
Also, note that in most epidemic models the infectious life of a newly infected node evolves as an autonomous process: i.e.~it is not affected by the neighbours or more generally by the environment. This assumption is convenient but not strictly necessary for Theorem \ref{thm:TreeAllModels} and can be relaxed: for example, one can imagine the states of nodes 1 and 3 in \eqref{P246_H13_3} affecting how and when node 2 progresses, say, from the latent to the infectious stage. 
\end{rem}

\begin{rem}
Further, one can even imagine nodes 1 and 3 affecting the probability that 4 remain susceptible up to time $t$. This is the case, for example, of node 1 being also connected to 4, i.e.~nodes 1, 2 and 4 forming a closed triangle. Then the result of Theorem \ref{thm:TreeAllModels} would not apply to the triplet (1,2,4), but it would still apply to triplets (1,4,6) and (2,4,6). More generally, the components of all branches stemming from node 4 need not be sub-trees. The only requirement is that the components containing the first and last node of the triplet on which the closure is applied do not communicate if node 4 is susceptible. Therefore, Theorem \ref{thm:TreeAllModels} can be extended to that case of triplets in which the central node is, in the terminology of \citet{Kiss:2014}, a \emph{cut-vertex}.
\end{rem}

\begin{rem}
Finally, note that the need for a pure initial condition comes from the fact that if the initial condition is random we would need to average both sides of the Equations in \eqref{TreeAllModels} over its distribution, thus getting, for example for state $ISS$,
\begin{equation}
\label{MixedIC1}
\pb_{246}(ISS) = \E{ \pb_{246}^{\bx^0}(ISS) } = \E{ \frac{\pb_{24}(IS) \pb_{46}(SS)}{\pb_4(S)} },
\end{equation}
which is in general different from
\begin{equation}
\label{MixedIC2}
\frac{\E{\pb_{24}(IS)} \E{\pb_{46}(SS)}}{\E{\pb_4(S)}} = \pb_{246,o}(ISS).
\end{equation}
\end{rem}

Before concluding this section, we further point out that, on a tree-like network the simplest local moment closure that retain exactness is the pair-based approximation. Consider in fact an open triplet with node 2 being the central node and assume that the initial state is $\bx^0=(ISS)$. The only natural closure for approximating the probabilities over pairs in terms of the probability of the states of single nodes is the one referred to as \emph{individual-based approximation} in \citet{Sharkey:2008}. Denoting it by $\pi$, it can be written, for example for the SI model and for pair $(2,3)$ in state $IS$, as $\pb_{23,\pi}(IS) = \pb_2(I) \pb_3(S)$. However, $\pb_{23,\pi}(IS) =\left[\pb(IIS)+\pb(III)\right] \left[\pb(ISS) +\pb(IIS)\right]$, which simple algebra shows is in general different from $\pb(IIS)$, so that the closure of the level of single nodes is not exact even in this simple case.

\subsection{Networks containing closed triangles}
\label{sec:LargeTriangles}

The examples of Section \ref{sec:OtherMotifs} suggest that, for the SI model starting with a single initial infective, Kirkwood's and 1-step ME approximations  fail everywhere, but ME seems to work for the dynamically important states $ISS$ and $ISI$ if the network contains triangles that do not overlap (MartiniGlass234 and BowTie345).

In Figure \ref{fig:sitrees}d we confirm this intuition by simulations on a
large network of non-overlapping triangles. The results are obtained by
numerically solving the set of ODEs, for the SI model, closed at the level of
pairs using Kirkwood and ME. Code for the former was provided by \citet{Sharkey:2011},
and we provide \textsc{Matlab} code for the latter as Electronic Supplementary Material. This
verifies that the latter, unlike the former, is able to capturing the dynamics
over non-overlapping triangles correctly, in the case of a single initially
infected node.

The discussion of Section
\ref{sec:OtherMotifsSIclosed} suggests that ME does work correctly
also in the presence of some overlapping triangles, as long as the infection is
not allowed to enter the same triangle through two different routes
simultaneously (e.g.~ToastB234); it can though enter, then leave and re-enter
(e.g.~KiteDiagA), as long as there is only one introduction point in the
triangle. However, we have also noticed in Section
\ref{sec:OtherMotifsSIopen} that the closure on an open triplet inside
a loop larger than a triangle is in general not exact: therefore, for example
on the KiteDiagA graph, while in the dynamics on the closed triangles 234 and
254 are handled correctly by ME, the closure applied on the open triplet 345
fails. The ODEs for a large network of ToastA motifs,
	however, are numerically unstable, meaning that we were not able to
	determine the exactness of the overall dynamics on large networks with
overlapping triangles like those of ToastA or KiteDiagA motifs,
which therefore remains an open problem.

\subsection{Further Intuition}
\label{sec:LargeIntuition}

The details of moment closure performance are complex, but
our intuition is that the two factors
that make closures fail are simply: (1) mixed initial conditions and (2)
random time at which recovery occurs (which can
be the case even with a constant duration of infection, if the time of
infection is random). The former is already known to create problems
\citep{Sharkey:2013}; the latter has been highlighted here in Proposition
\ref{thm:ClosedMEConst}.


The reason why all closures considered here fail for the SIR model on the MartiniGlass234 graph is due to factor (2): even with constant duration of infectious period, if we know that node 2 is in state  $I$ at $t$, we do not know when it recovers, as that depends on how long before $t$ it was infected. So, for any graph that contains a triangle but the initial condition is not in the triangle, all closures will fail when an individual can be in state I (i.e.~not RF) and recovers after a finite time (i.e.~not SI).

One last comment worth mentioning involves consideration for the absorbing states over triangles in the presence of recovery ($RSS$, $RRS$ and $RRR$) for $t\to\infty$ (see Figures S9 and S10 in the Supplementary Material). Unlike Kirkwood and 1-step ME, the closure based on ME seems to be able to capture the distribution over the absorbing states correctly for the MartiniGlass234 (Figure S9), even in the case of the SIR and the SEIR models, i.e.~when the dynamics are not themselves captured correctly. However, this is not the case for the ToastB234 (Figure S10). Therefore, we suggest that factor (1) above causes ME to fail in calculating the final size correctly, while factor (2) does not.

\subsection{Conjectures}
\label{sec:LargeConj}

Given the intuition developed in all the previous sections, for the Markovian SIR model with transmission rate
$\tau$ and recovery rate $\gamma$ as most commonly considered in the literature,
 we expect that errors in moment closure
schemes will be introduced by the following factors:
\begin{enumerate}
	\item Finite length of infectious period. Given $\tau/\gamma$ is the only
		dimensionless parameter in the model, we conjecture that these errors
		will be $O(\gamma / \tau)$.
	\item Long loops and some overlapping triangles. Where there is a clustering
		coefficient $\phi$ and triangles are introduced in a combinatorially 
		random manner, we conjecture such errors are $O(\phi^2)$.
\end{enumerate}
Of course, as the epidemic spreads, errors can accumulate, so we expect the
epidemic at larger times to be less accurate than at smaller times.

\section{Conclusions}
\label{sec:Concl}

We have presented here a detailed examination of the behaviour of
the most commonly used moment closure approximations, with particular
attention to the newly proposed approximation based on the concept of maximum
entropy. On an open triplet, this approximation coincides with the one
commonly used in the literature. On a closed triangle, instead, the ME
approximation is substantially more complex than the commonly used Kirkwood
approximation, but overcomes its fundamental theoretical drawbacks and,
overall, seems to perform better.

One of the interesting results from our work is that, when moving away from the commonly
considered Markovian assumption, the perspective can change dramatically, with all approximations being actually exact on the closed triangle when the infectious periods have
constant duration (Proposition \ref{thm:ClosedMEConst}). This agrees with the
intuition that we are trying to reconstruct a joint distribution through a
product of marginals, which is likely to work only when an assumption of
independence holds.

On larger networks, we have provided a simpler proof of the result of
\citet{Sharkey:2013} concerning the exactness of moment closure for the SIR
model on tree-like networks under pure initial conditions. Our proof also
extends the result to more general models. Concerning larger network with
clustering, the extensive numerical investigation we have performed on small motifs suggests ME
allows exact closure at the level of pairs on some large networks with
non-overlapping triangles for both SI and Reed-Frost-type models when a single
initial infective is present. Large scale numerical simulations confirm such conclusions for the SI model.

Moving on from exactness of moment closure to the quality of the approximations still requires substantial work. For example, even on the simple closed triangle, none of the closure techniques appears to
be uniformly better than any other, and the heterogeneity of their quality over
different transitions $\bx^0\to\bx$ suggests that the choice of which one
performs best will still be context-dependent. This was already noticed by
Rogers \cite{Rogers:2011}, by showing that in a specific example on an SIR
epidemic spreading on a small-world network, the ME approximation can still
lead to a worse overall performance than Kirkwood's.  Rogers claims this is due
to a fortunate error cancellation, where the underestimation in Kirkwood's
approximation of the number of susceptibles in closed triangles in the network
is compensated by its overestimation in open triplets. This appears
incorrect in light of Theorem \ref{thm:TreeAllModels} and the work of
\citet{Sharkey:2013}. For a small-world network, however, there is
non-negligible presence of short loops larger than a triangle and we believe
that the failure of \eqref{MEopen} for the open triplets that form a square is
likely to be the actual cause of the improved
performance of Kirkwood.

We hope the intuition built up through this extensive exploration can open many lines of thought from researchers in the epidemic modelling community and beyond. In particular we believe that it represents a valuable step in unravelling the assumptions behind local moment closure on networks. Without this understanding, there is arguably no hope to control the errors that build up in global moment closure approximation schemes. Given their versatility and the significant dimensionality reduction they can achieve, the ability to control their errors and to put them on a solid mathematical footing would represent a key and much desired methodological achievement.

\section*{Acknowledgements}
We gratefully acknowledge the Engineering and Physical Sciences Research Council for funding and the two anonymous reviewers for comments that lead to a substantially improved version of this manuscript.




\bibliographystyle{elsarticle-harv}



\newpage

\begin{figure}[H]
\centering
\vspace{2em}
\includegraphics[width=\textwidth]{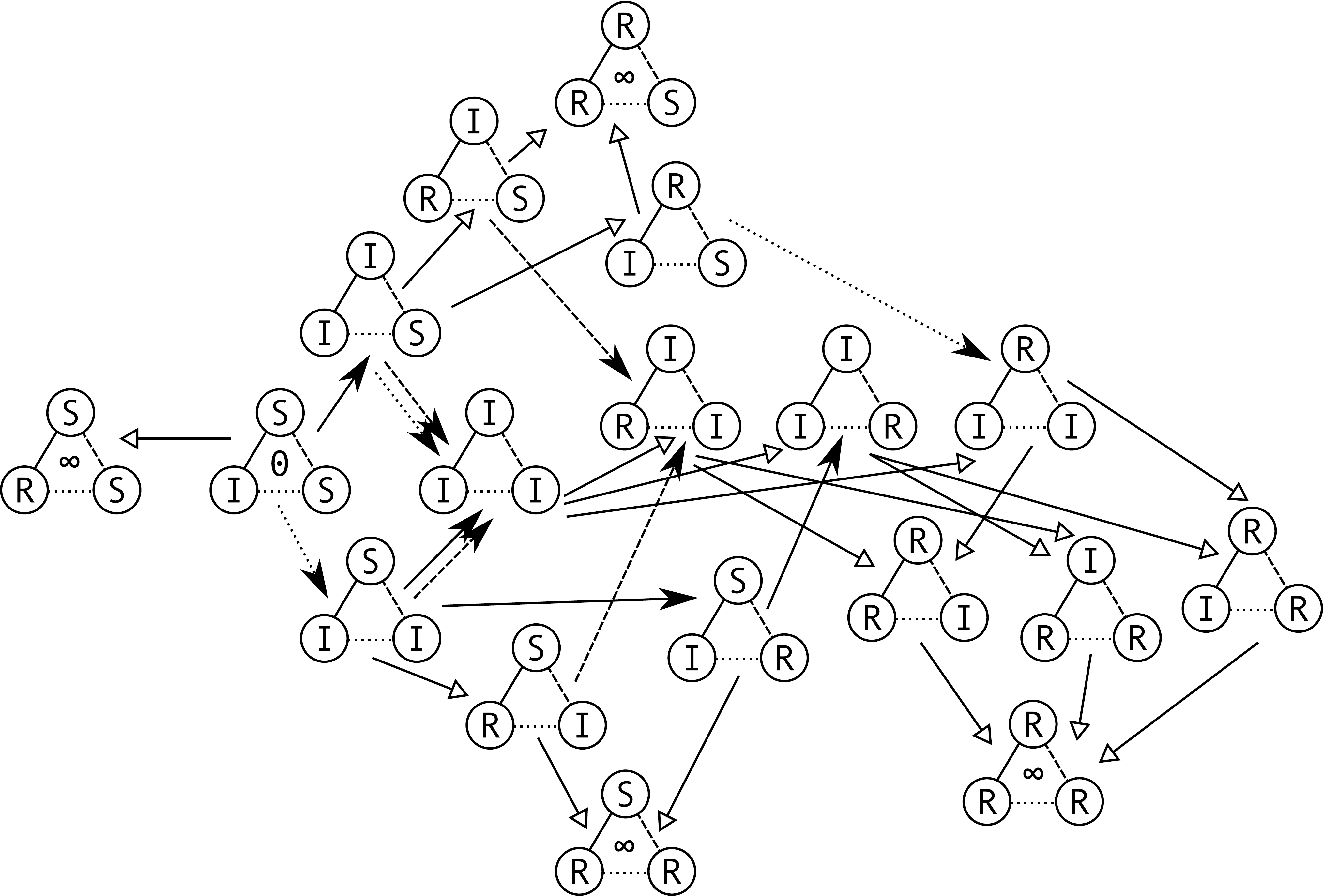}
\vspace{2em}
\caption{Set of states and transitions for the SIR model on the open triplet
and closed triangle. The starting point is marked $0$, and absorbing states
$\infty$. Open-headed arrows relate to recovery, and filled ones to transmission.
All lines are present for the closed triangle, and either the dotted or dashed
lines are absent depending on the initial conditions of the open triple.}
\label{fig:sirmodel} 
\end{figure}

\newpage

\begin{figure}[H]
\centering
\includegraphics[width=\textwidth]{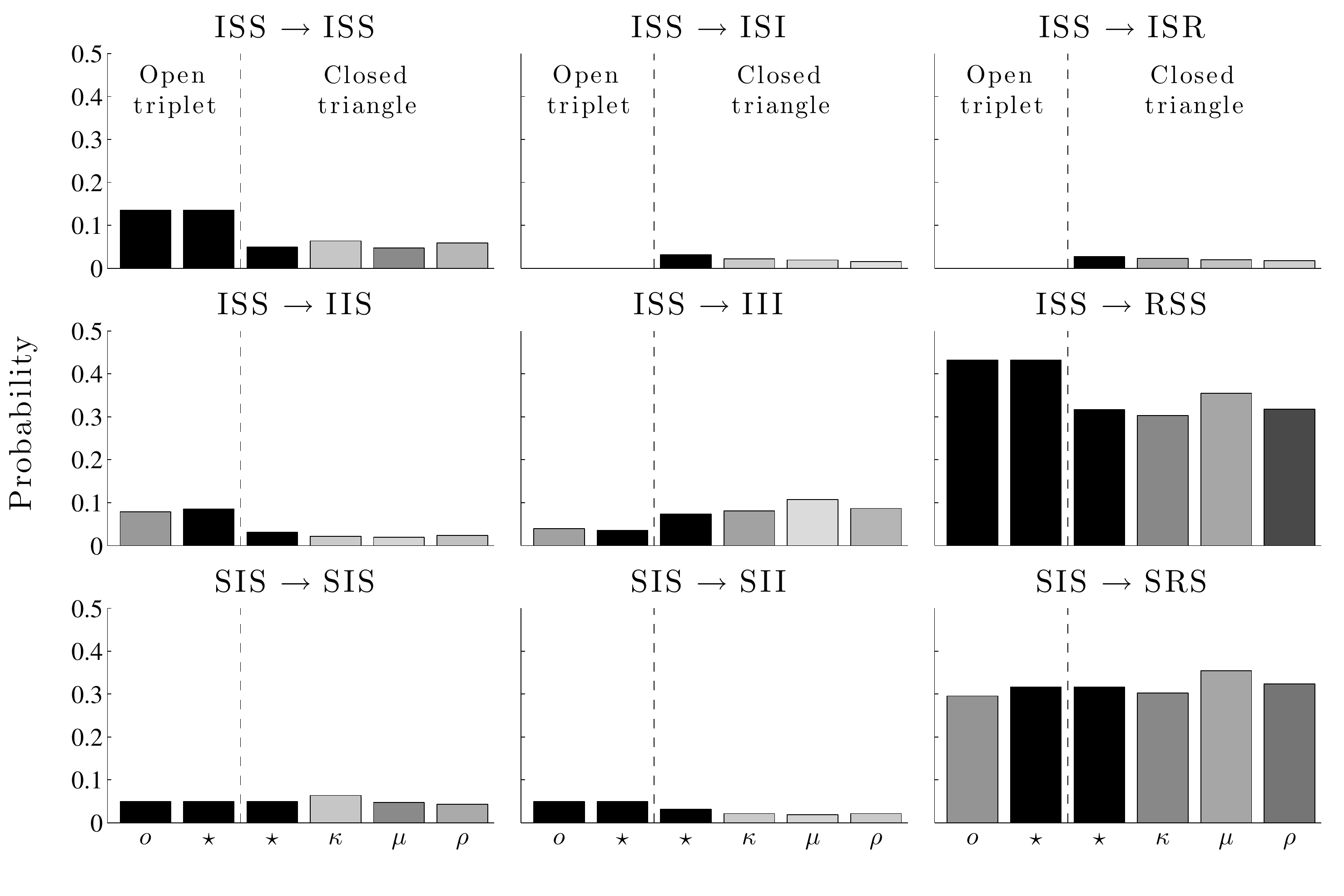}
\caption{Exact ($\star$) and approximate ($o, \mu, \kappa$ and $\rho$) probabilities for an open
triplet (left two bars) and a closed triangle (right four bars) being in state
$\bx$ at time $t=1$ when starting from state $\bx^0$ at time $t=0$, for some
selected cases $\bx^0\to\bx$. For the open triplet and the closed triangle,
respectively, the exact probability is coloured in black while the lighter the
shade of grey of each approximation, the larger its relative difference with
the (appropriate) exact probability (black = 0\%, white = 100\%). The Markovian
model is assumed, with infectivity $\tau = 1$ and average duration of the
infectious period $\meanT=1$.
}\label{fig:MarkExamples}
\end{figure}

\newpage

\begin{figure}[H]
\centering
\includegraphics[width=\textwidth]{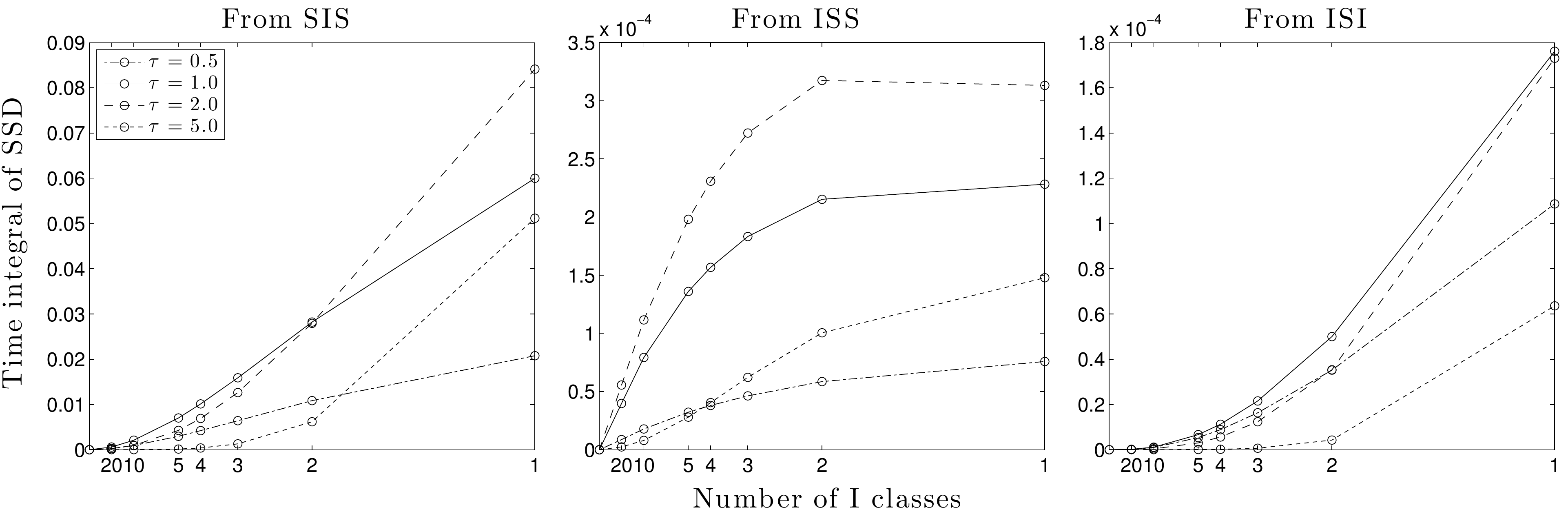}
\caption{Time integral of the sum of squared difference (SSD) between the exact
and the approximate probability distributions over all states $\bx$ of an open
triplet, starting from each of the three states $\bx^0=(SIS), (ISS)$ and $(ISI)$, as a function of the number of infectious classes in the SIR model, for various values of the infectivity $\tau$. The $x$-axis is scaled so that the variance of the duration of the infectious period in the presence of $n_I$ (equally infectious) classes, $\Var{T}=1/n_I$, appears increasing linearly.}\label{fig:OpenSSDVSvar}
\end{figure}

\begin{figure}[H]
\centering
\includegraphics[width=\textwidth]{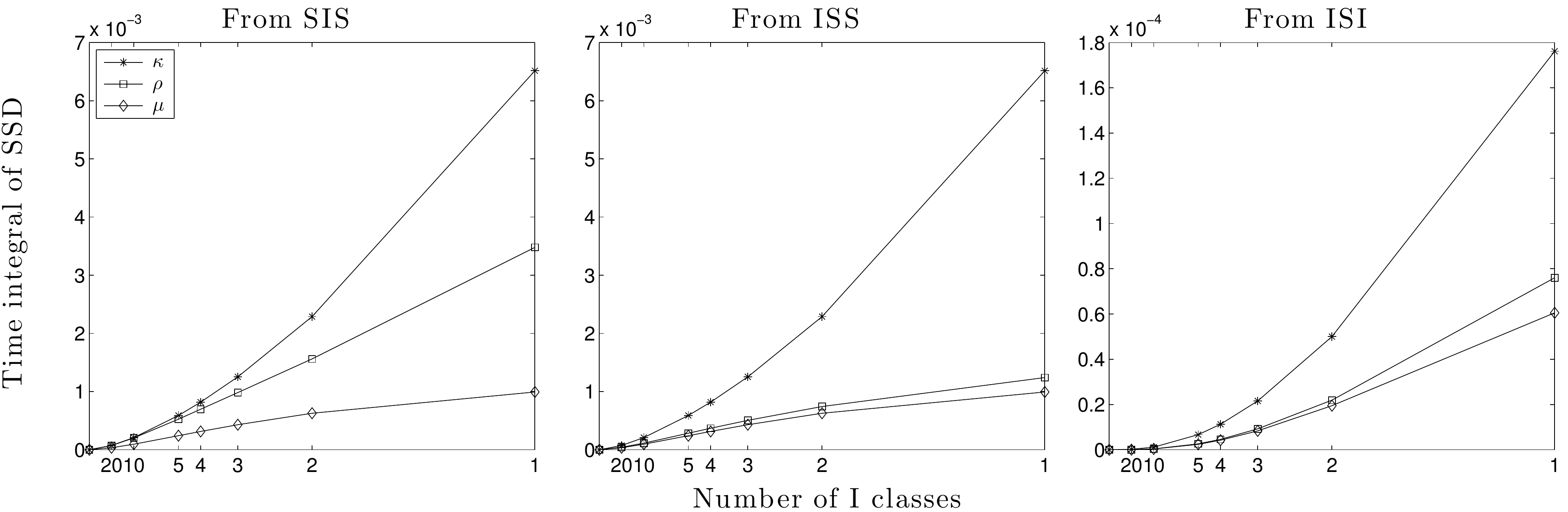}
\caption{Time integral of the sum of squared difference (SSD) between the exact
probability distributions over all states $\bx$ of a closed triangle and each
of the three approximations,  starting from each of the three states
$\bx^0=(SIS), (ISS)$ and $(ISI)$, as a function of the number of infectious classes of the SIR model ($x$-axis linearly increasing with the variance).
}
\label{fig:ClosedSSDVSvar} 
\end{figure}

\newpage

\begin{figure}[H] 
\centering
\includegraphics[width=\textwidth]{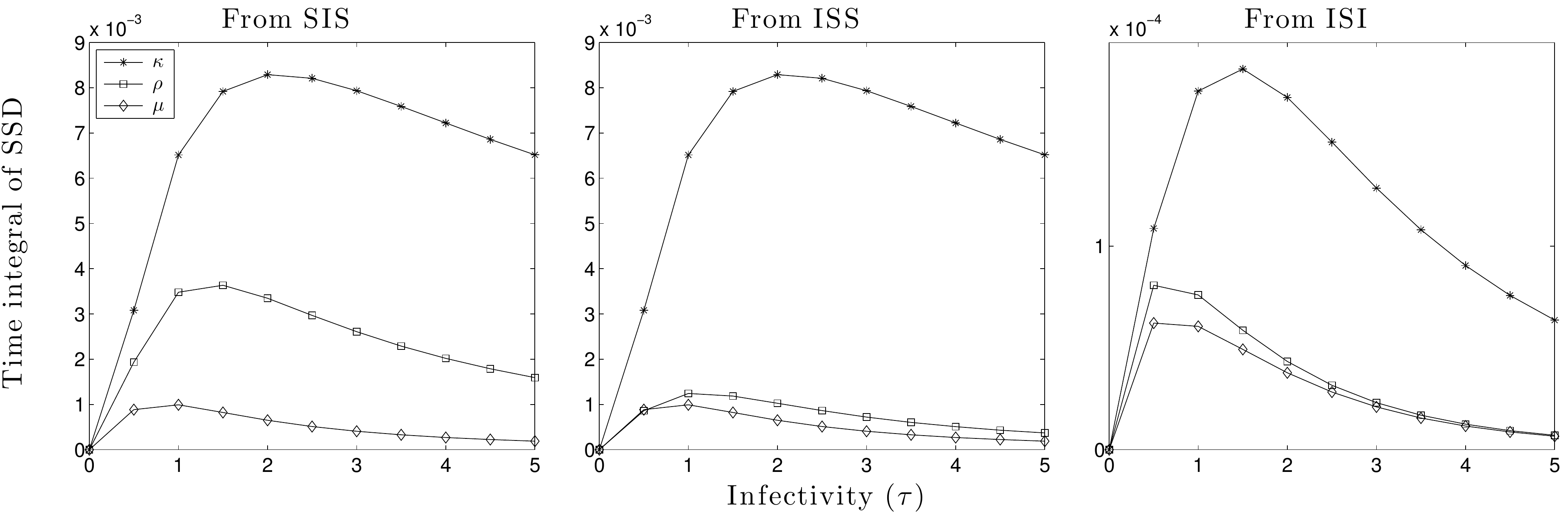}
\caption{Time integral of the sum of squared difference (SSD) between the exact
probability distributions over all states $\bx$ of a closed triangle and each
of the three approximations,  starting from each of the three states
$\bx^0=(SIS), (ISS)$ and $(ISI)$, as a function of the infectivity $\tau$ of the Markovian SIR model.
}\label{fig:ClosedSSDVStau}
\end{figure}

\begin{figure}[p]
\centering
\includegraphics[scale=0.4]{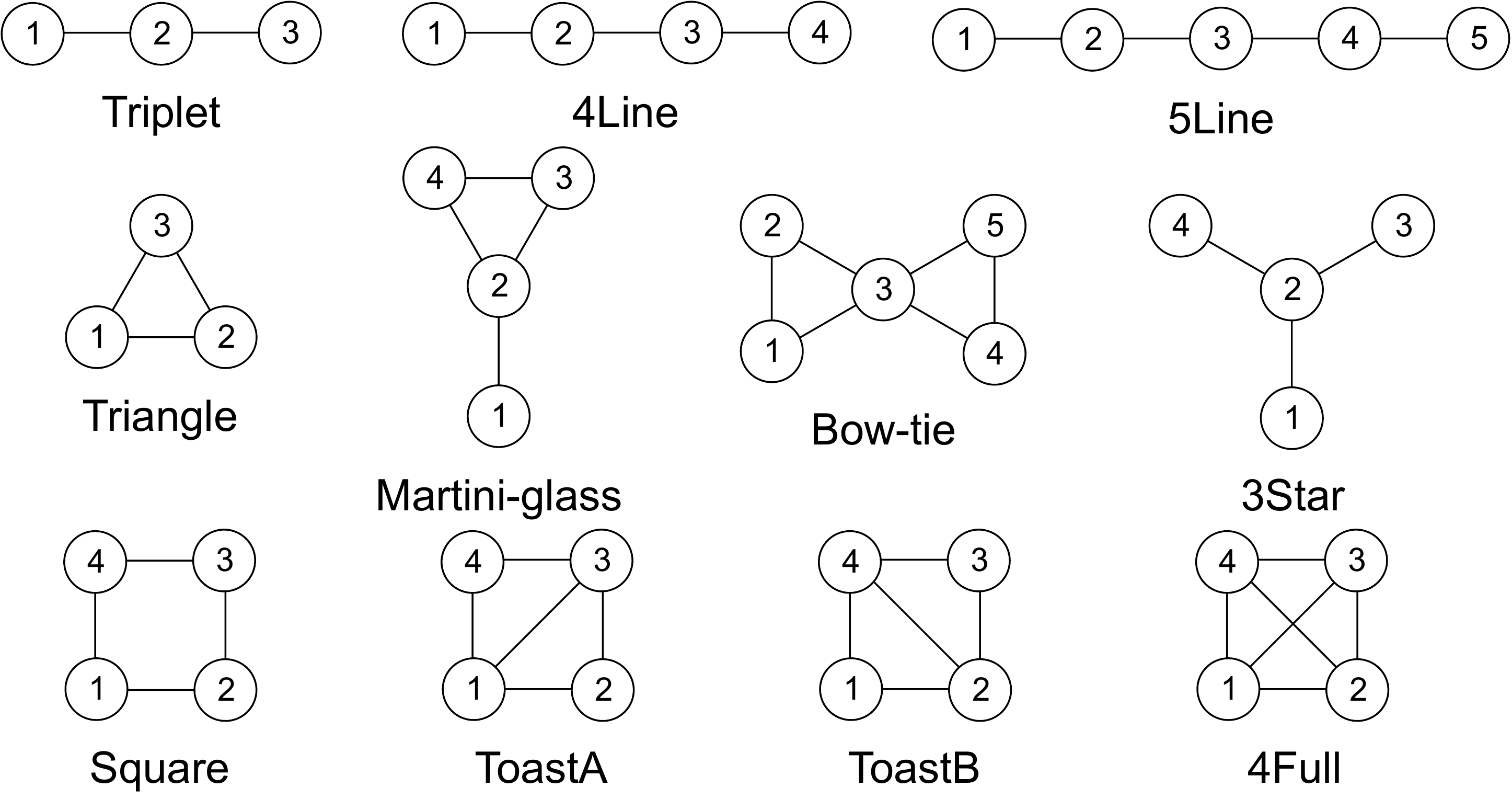}\\
\includegraphics[scale=0.4]{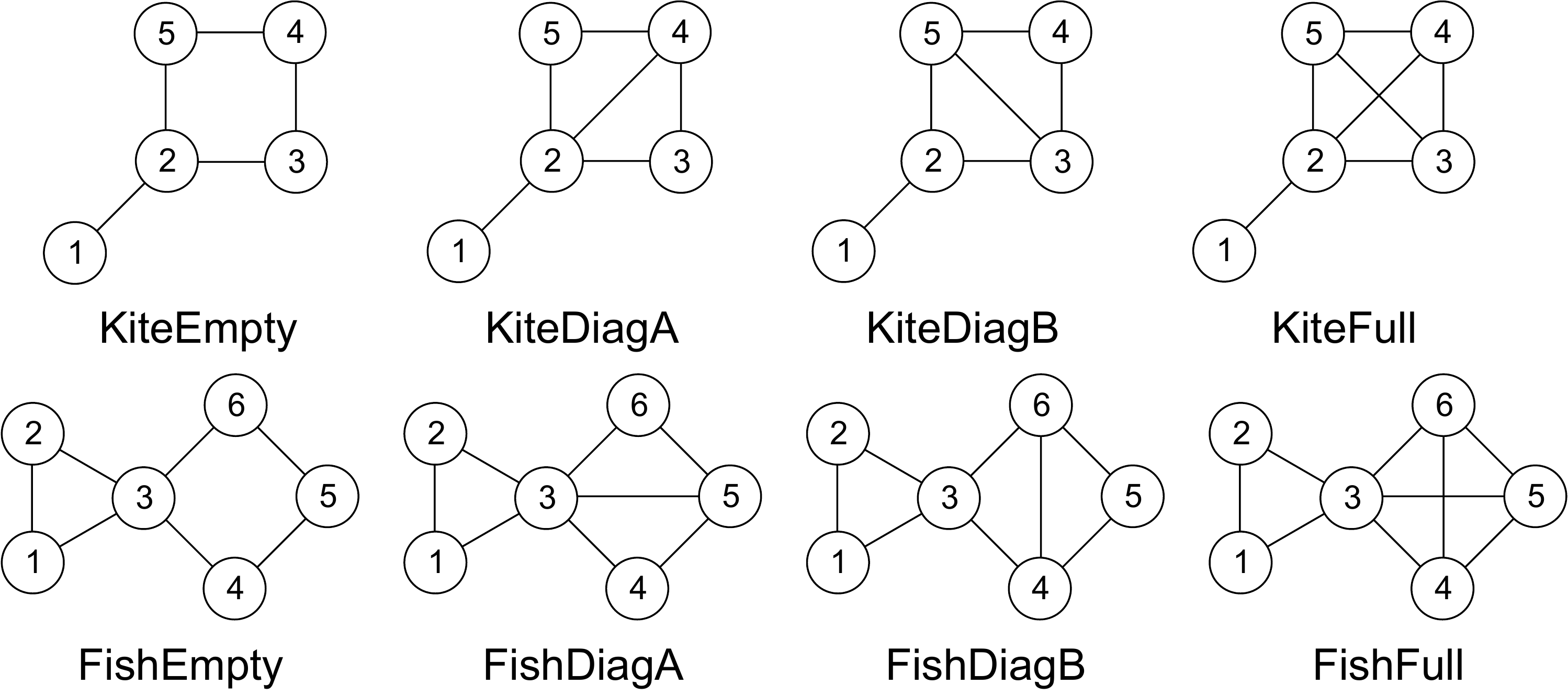}\\
\includegraphics[scale=0.4]{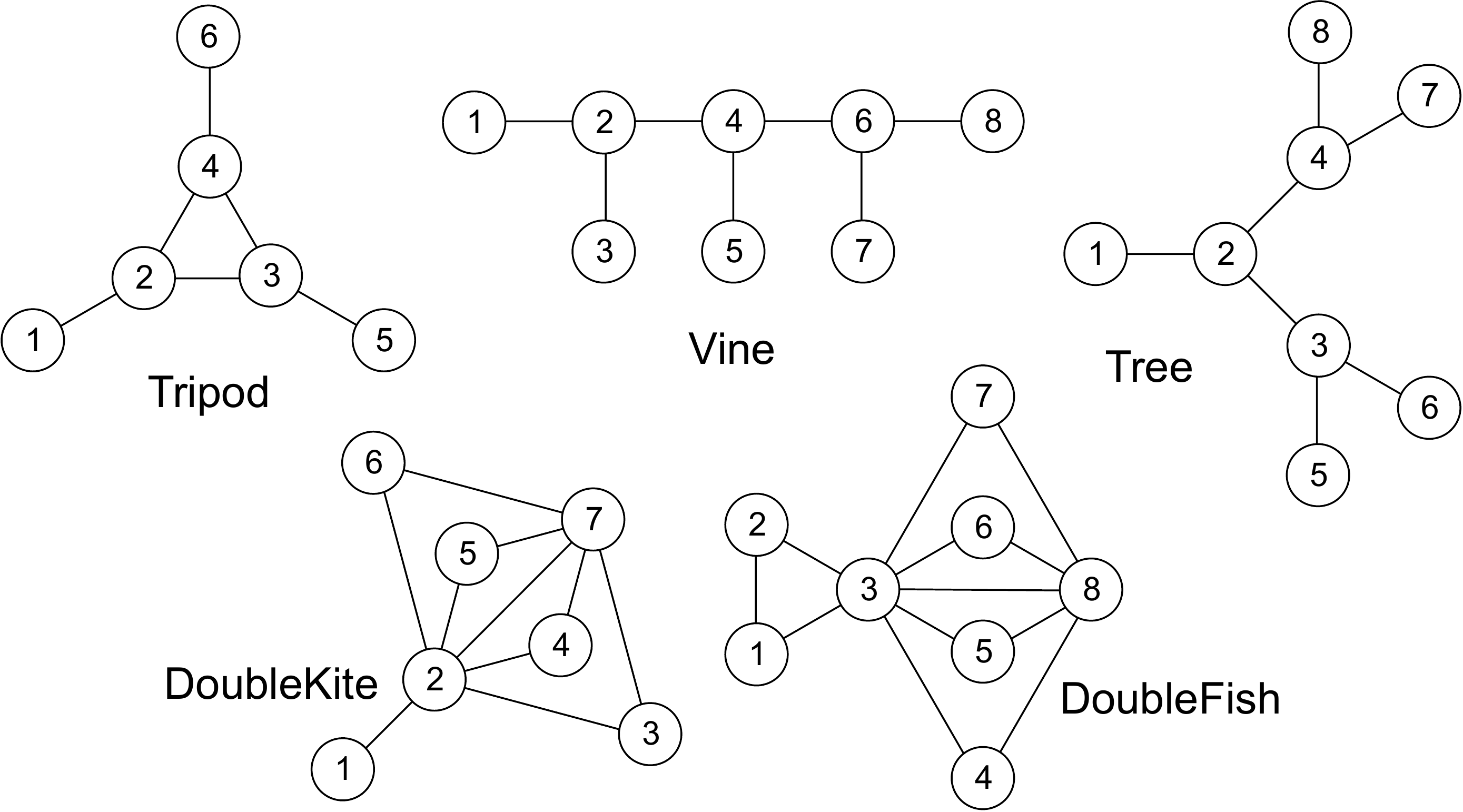}
\caption{Motifs analysed in Section \ref{sec:OtherMotifs}.}
\label{fig:graphs}
\end{figure}

\newpage

\begin{table}[p]
\centering
\includegraphics[height = 12.5cm]{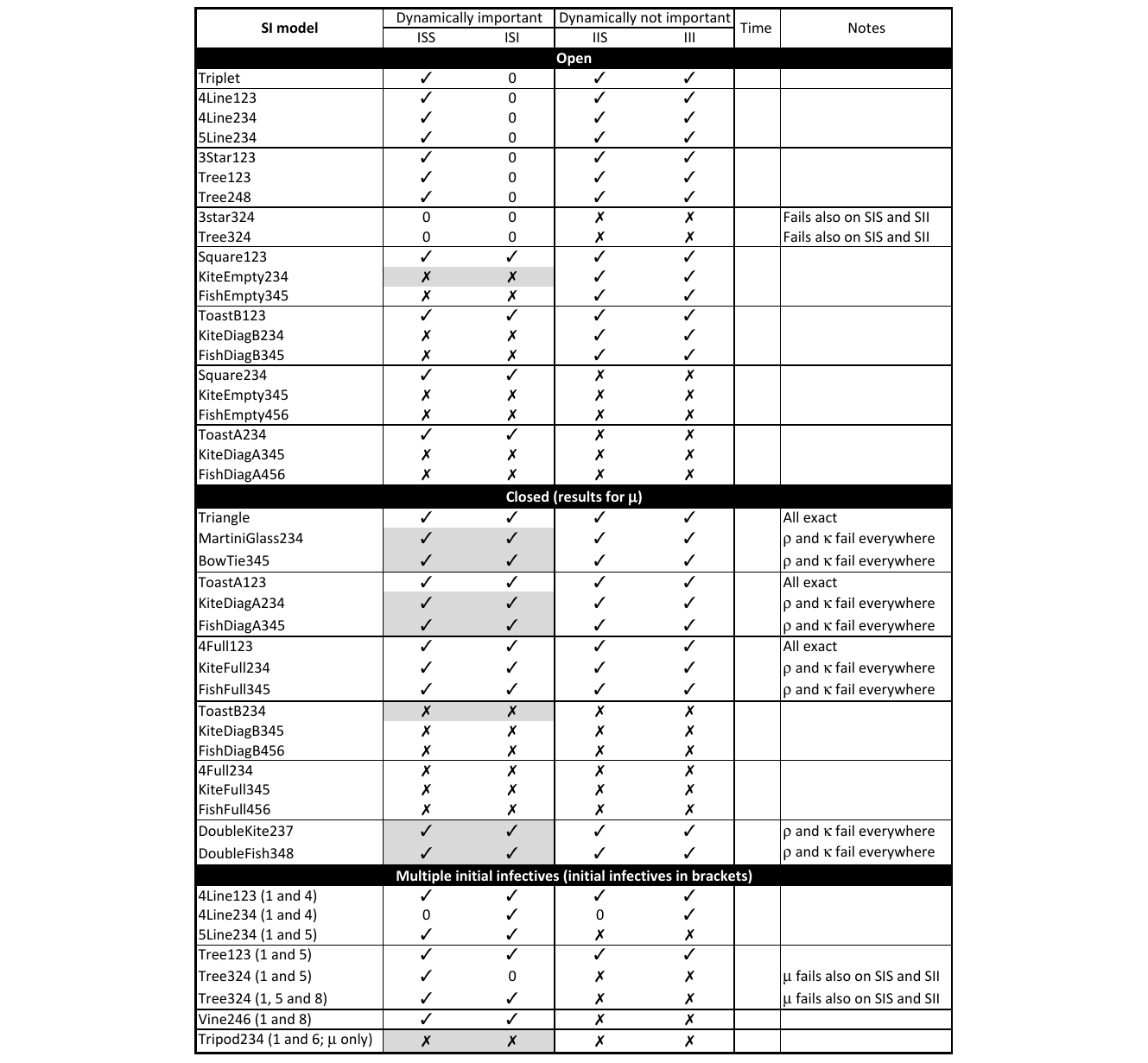}
\caption{Exactedness of moment closures at the level of triplets for the SI model. Network names refer to Figure \ref{fig:graphs} and are appended with the list of nodes the closure is applied to. If not specified, the approximation is tested at $t=0.5$. Closures can: ``work'', i.e.~be exact (\ding{51}) at the time tested, suggesting general validity; ``fail'' to be exact (\ding{55}); or refer to a state that is never reached by the system (0), in which case they work but provide no useful information about their general validity. Grey areas highlight test results that provide key understanding and that are discussed in the main text.}
\label{tab:SI}
\end{table}

\newpage

\begin{table}[p]
\centering
\includegraphics[height = 12.5cm]{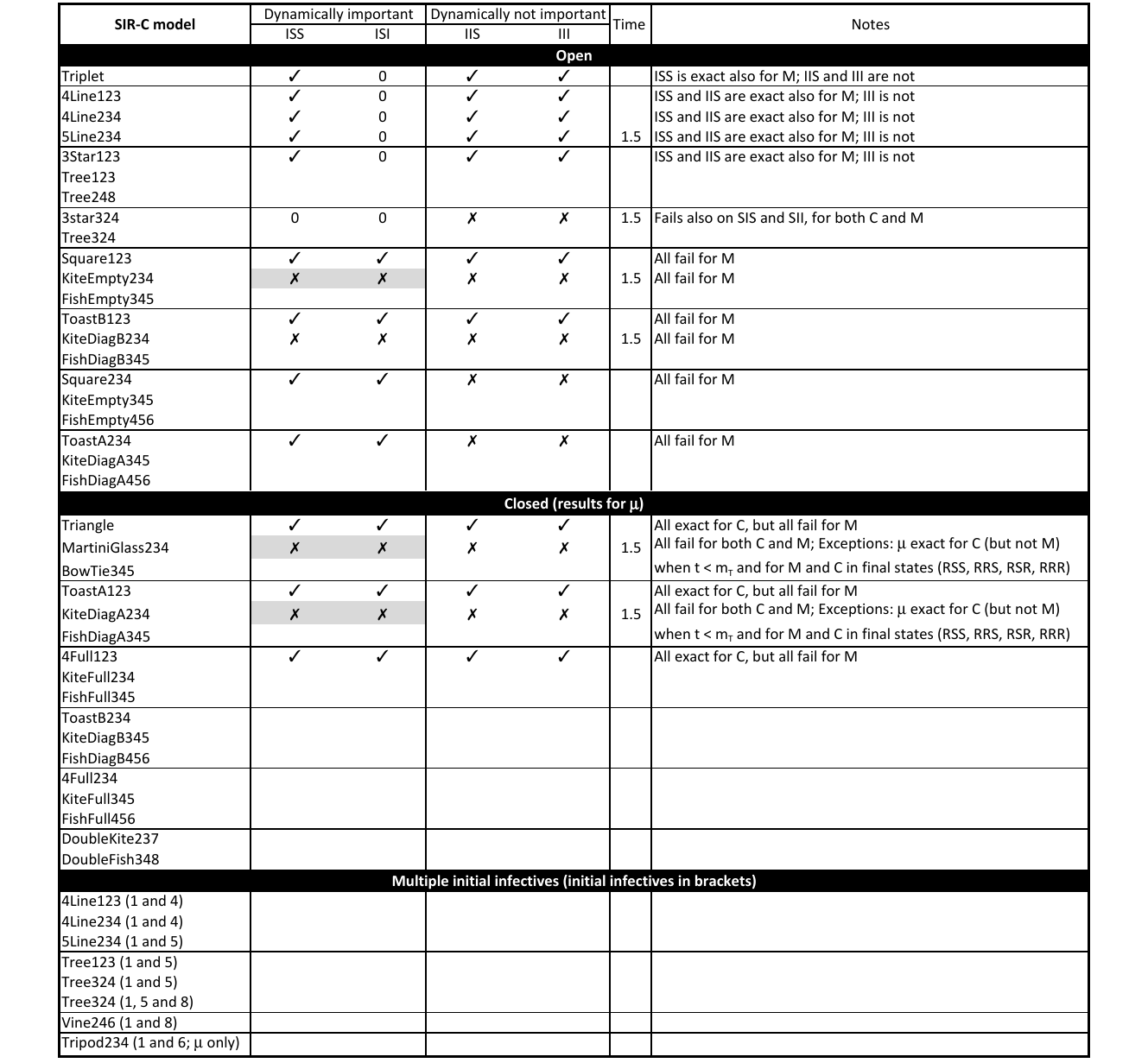}
\caption{Exactedness of moment closures at the level of triplets for the SIR model with a constant duration of the infectious period (C). Comments for the Markovian model (M) with exponentially distributed duration of infection are also reported when useful. Time of test is $t=0.5$ when not stated. Only the interesting results are reported. Symbols and table structure are as per Table \ref{tab:SI}.}
\label{tab:SIR}
\end{table}

\newpage

\begin{figure}[H]
\centering
\includegraphics[width = \textwidth]{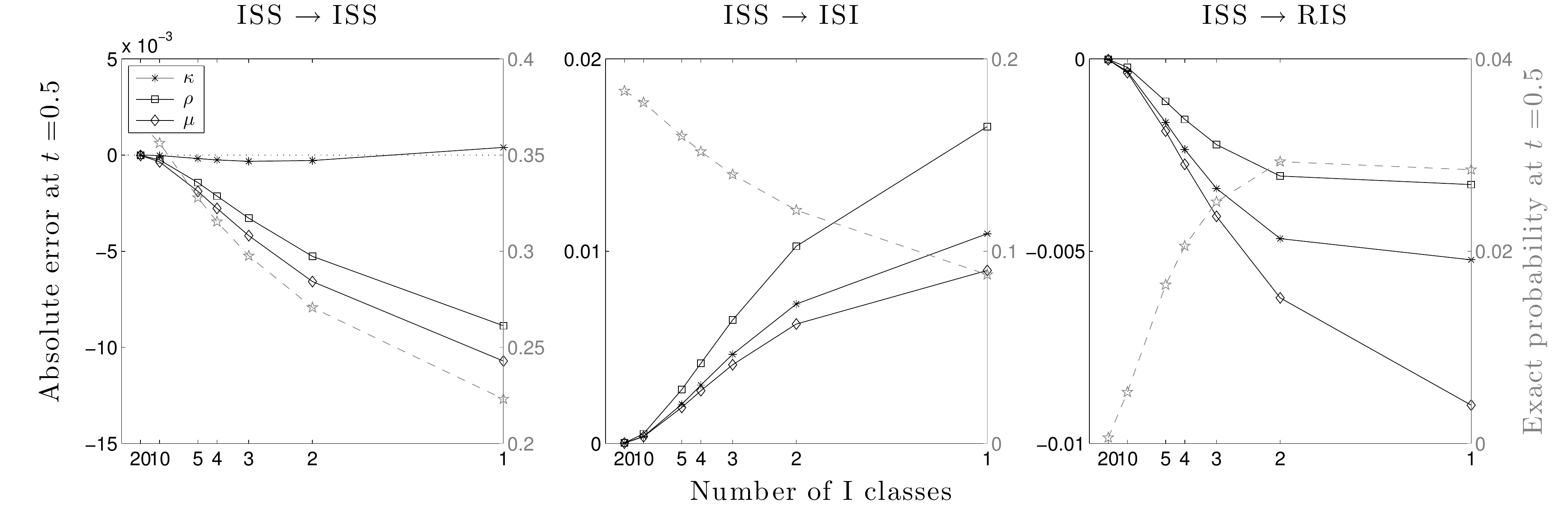}
\caption{Investigation of the error in the three moment closures (Kirkwood, $\kappa$; 1-step ME, $\rho$; and ME, $\mu$) for the SIR model on a closed triangle at time $t=0.5$. The left axes (grey dashed lines with 5-point star markers) shows the probability that at $t=0.5$ the system is in the state of interest. In addition to the two dynamically important states $ISS$ and $ISI$, we plotted the results for $RIS$, as an example of a state that, in the SIR-C model occurs with negligible probability at time $t=0.5$.}
\label{fig:TriangleSIR}
\end{figure}

\begin{figure}[H]
\centering
\includegraphics[width = \textwidth]{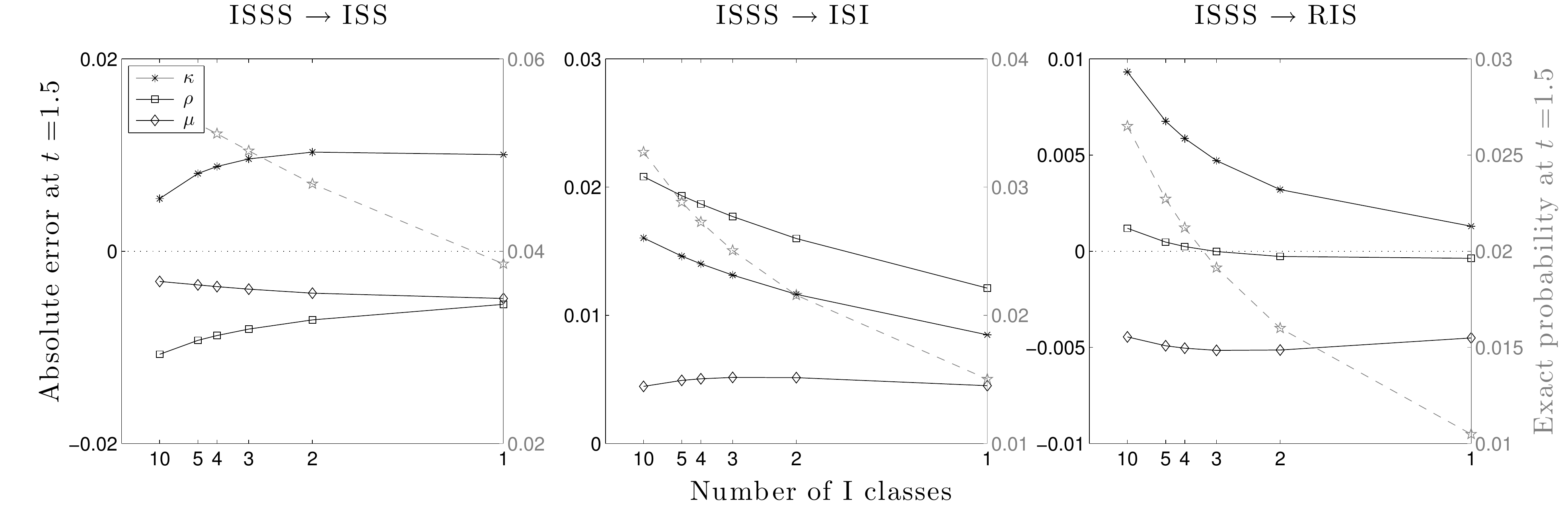}
\caption{Investigation of the error in the three moment closures (Kirkwood, $\kappa$; 1-step ME, $\rho$; and ME, $\mu$) for the SIR model on the MartiniGlass234 network at time $t=1.5$. The left axes (grey dashed lines with 5-point star markers) shows the probability that at $t=1.5$ the system is in the state of interest. In addition to the two dynamically important states $ISS$ and $ISI$, we plotted the results for $RIS$. All states occur with positive probability at $t=1.5$, for all models from M to C, and no closure is exact.}
\label{fig:MartiniGlass234_SIR}
\end{figure}

\newpage

\begin{table}[p]
\centering
\includegraphics[height = 12.5cm]{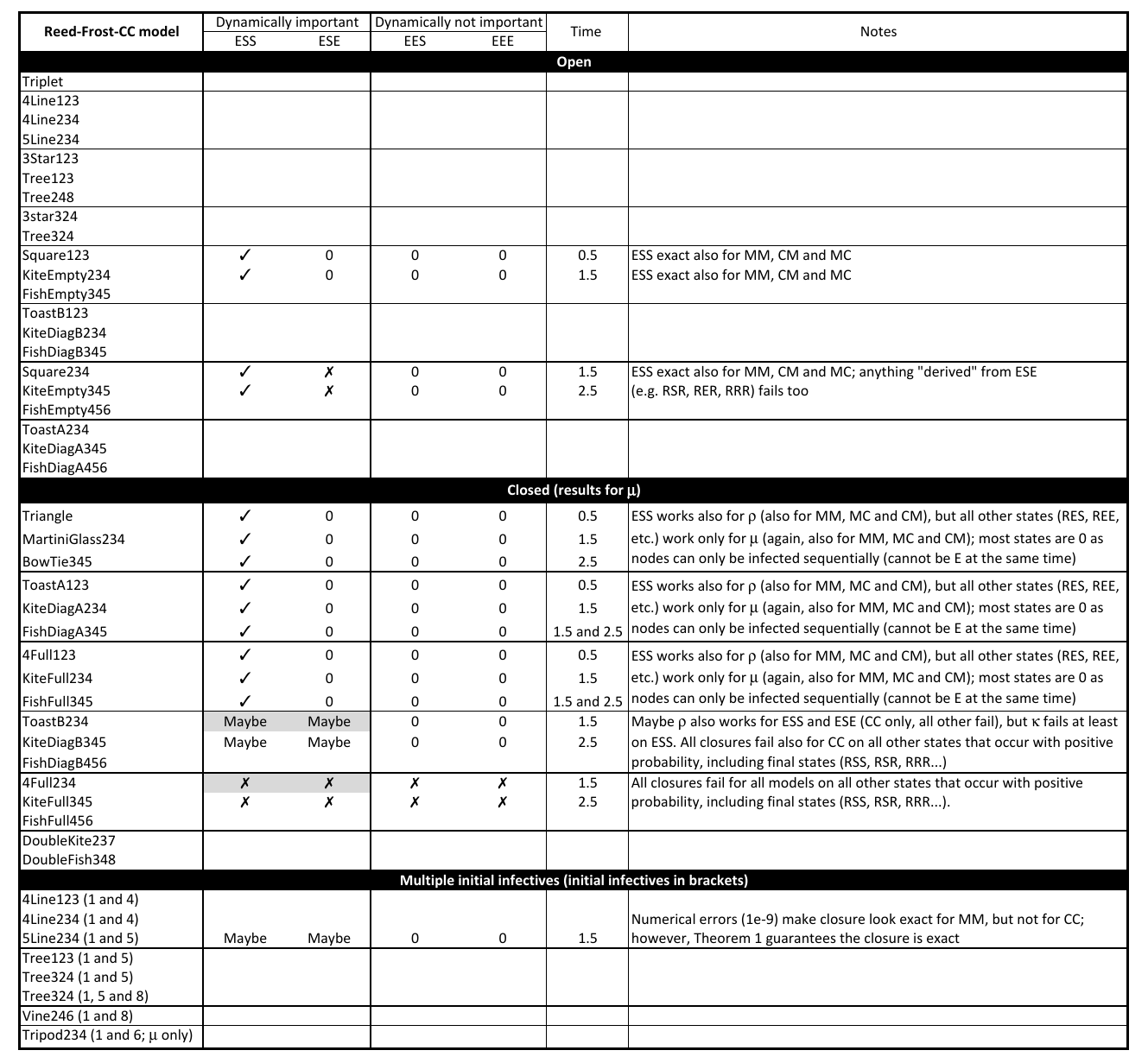}
\caption{Exactedness of moment closures at the level of triplets for the Reed-Frost model with a constant duration of the latent period and non-random probabiliry $P\equiv p$ of transmission (CC). Comments for exponentially distributed latent period or geometrically distributed probability of tranmission $P$, or both (MC, CM or MM, respectively) are also reported when useful. Time of test is $t=0.5$ when not stated. Only the interesting results are reported. Symbols and table structure are as per Table \ref{tab:SI}.}
\label{tab:RF}
\end{table}

\newpage

\begin{figure}[H]
\centering
\includegraphics[width = \textwidth]{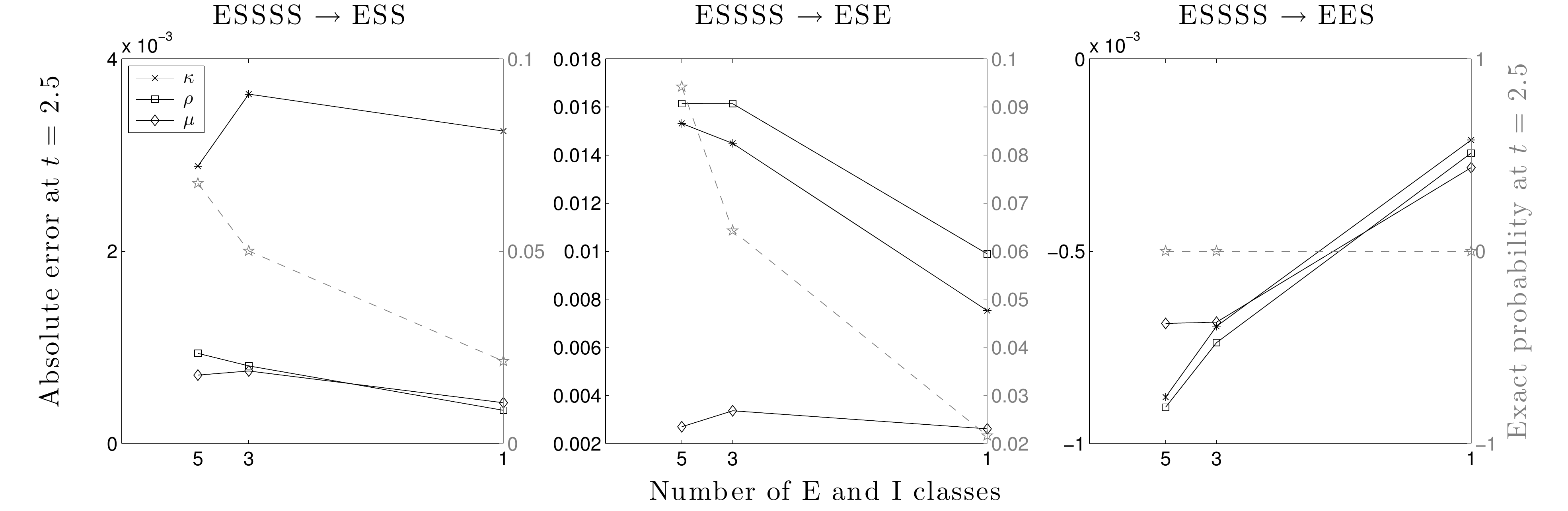}
\caption{Investigation of the error in the three moment closures (Kirkwood, $\kappa$; 1-step ME, $\rho$; and ME, $\mu$) for the RF model on the KiteDiagB345 network at time $t=2.5$. The left axes (grey dashed lines with 5-point star markers) shows the probability that at $t=2.5$ the system is in the state of interest. Note that state EES never occur with positive probability, because individuals 3 and 4 can never be infected at the same time: if 4 is in the $E$ state, 3 was either a potential infector (and so is now in the $R$ state) or has escaped the infection from 2 and is therefore in state $S$. Also note the dubious convergence, that is difficult to investigate because of the computational cost involved.}
\label{fig:KiteDiagB345}
\end{figure}

\begin{figure}[H]
\centering
\includegraphics[width = \textwidth]{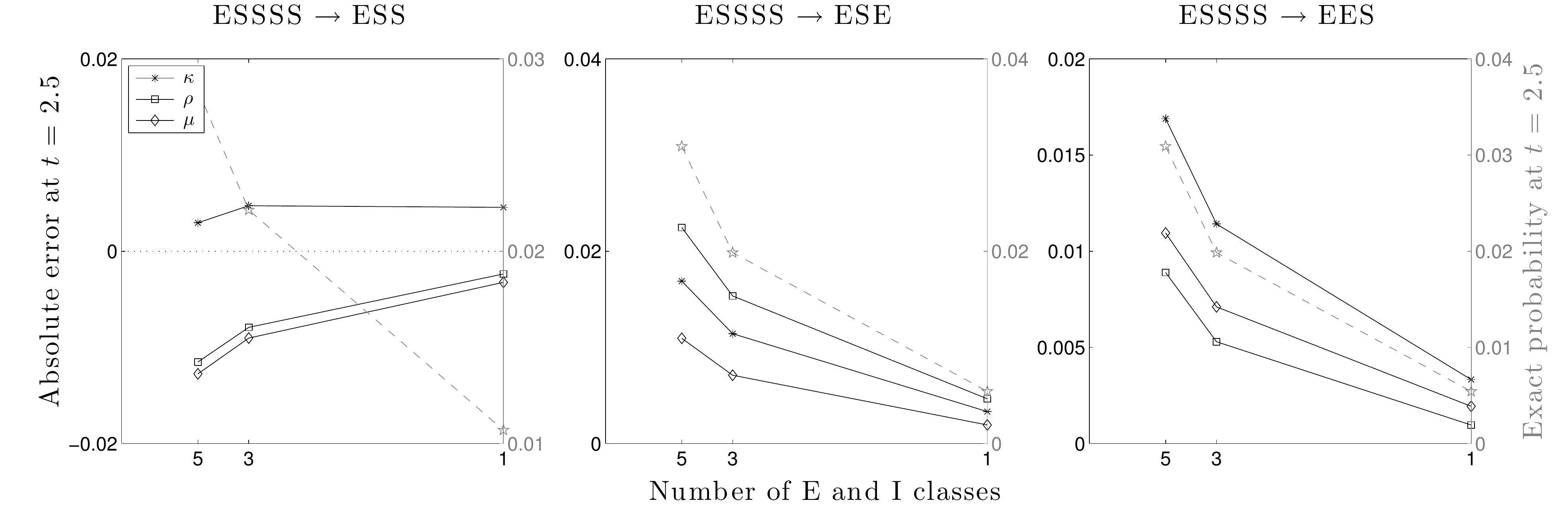}
\caption{Investigation of the error in the three moment closures (Kirkwood, $\kappa$; 1-step ME, $\rho$; and ME, $\mu$) for the RF model on the KiteFull345 network at time $t=2.5$. The left axes (grey dashed lines with 5-point star markers) shows the probability that at $t=2.5$ the system is in the state of interest. Note now the clear lack of convergence.}
\label{fig:KiteFull345_RF}
\end{figure}

\newpage

\begin{figure}[p]
\centering
\subfloat[]{ \scalebox{0.4}{\resizebox{\textwidth}{!}{%
\includegraphics{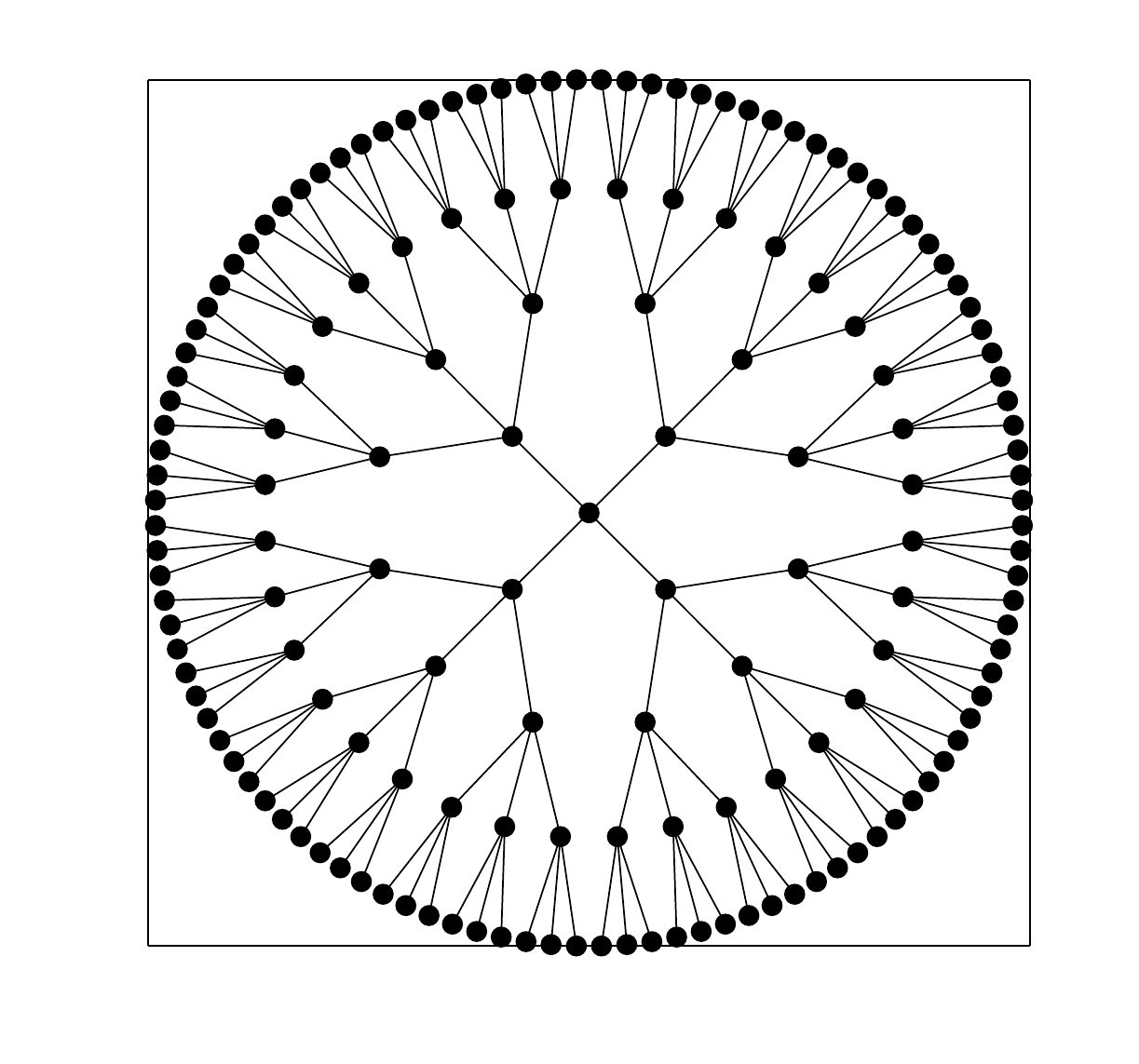} }}} \quad
\subfloat[]{ \scalebox{0.5}{\resizebox{\textwidth}{!}{%
\includegraphics{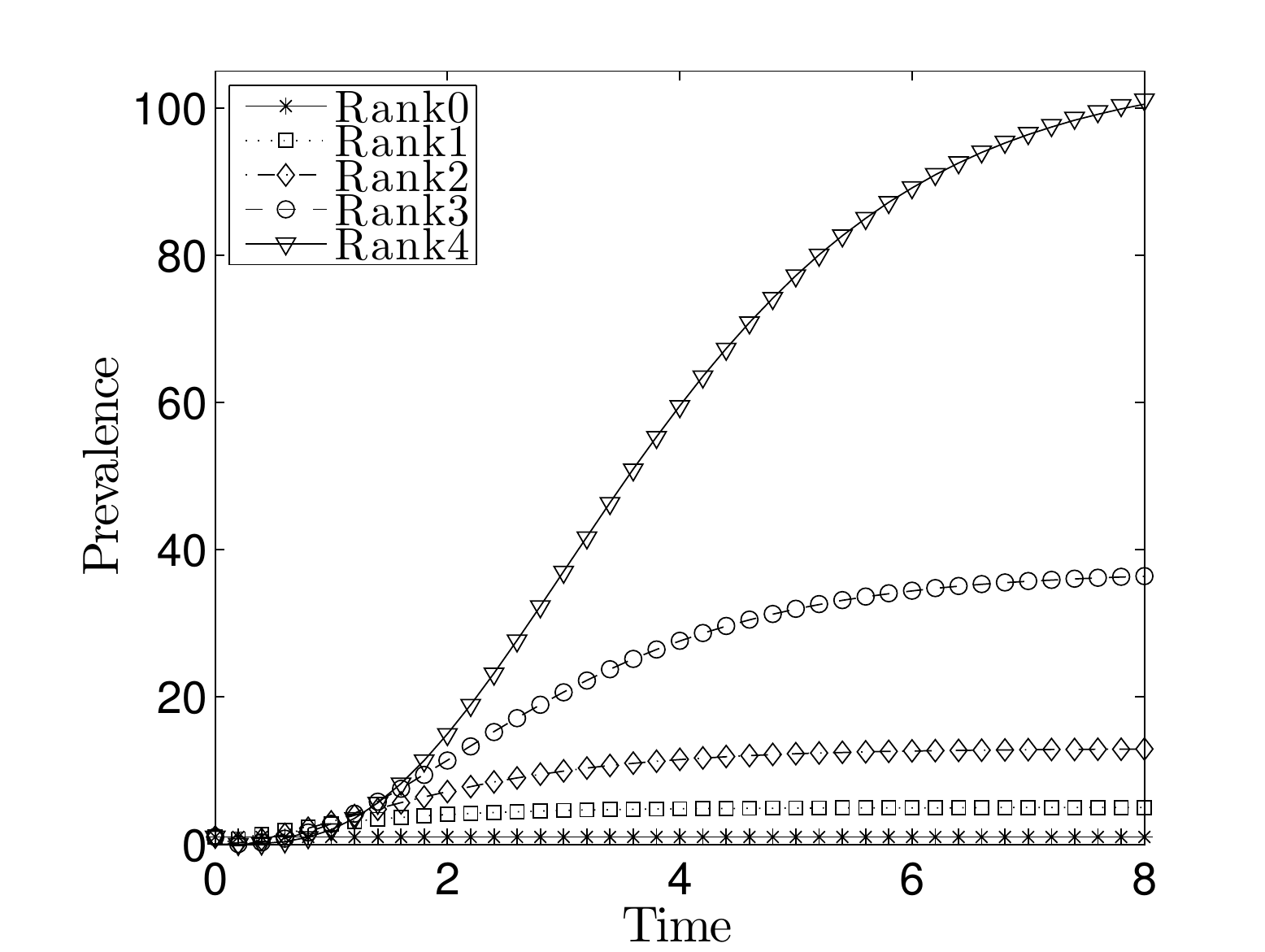} }}}\\
\subfloat[]{ \scalebox{0.4}{\resizebox{\textwidth}{!}{%
\includegraphics{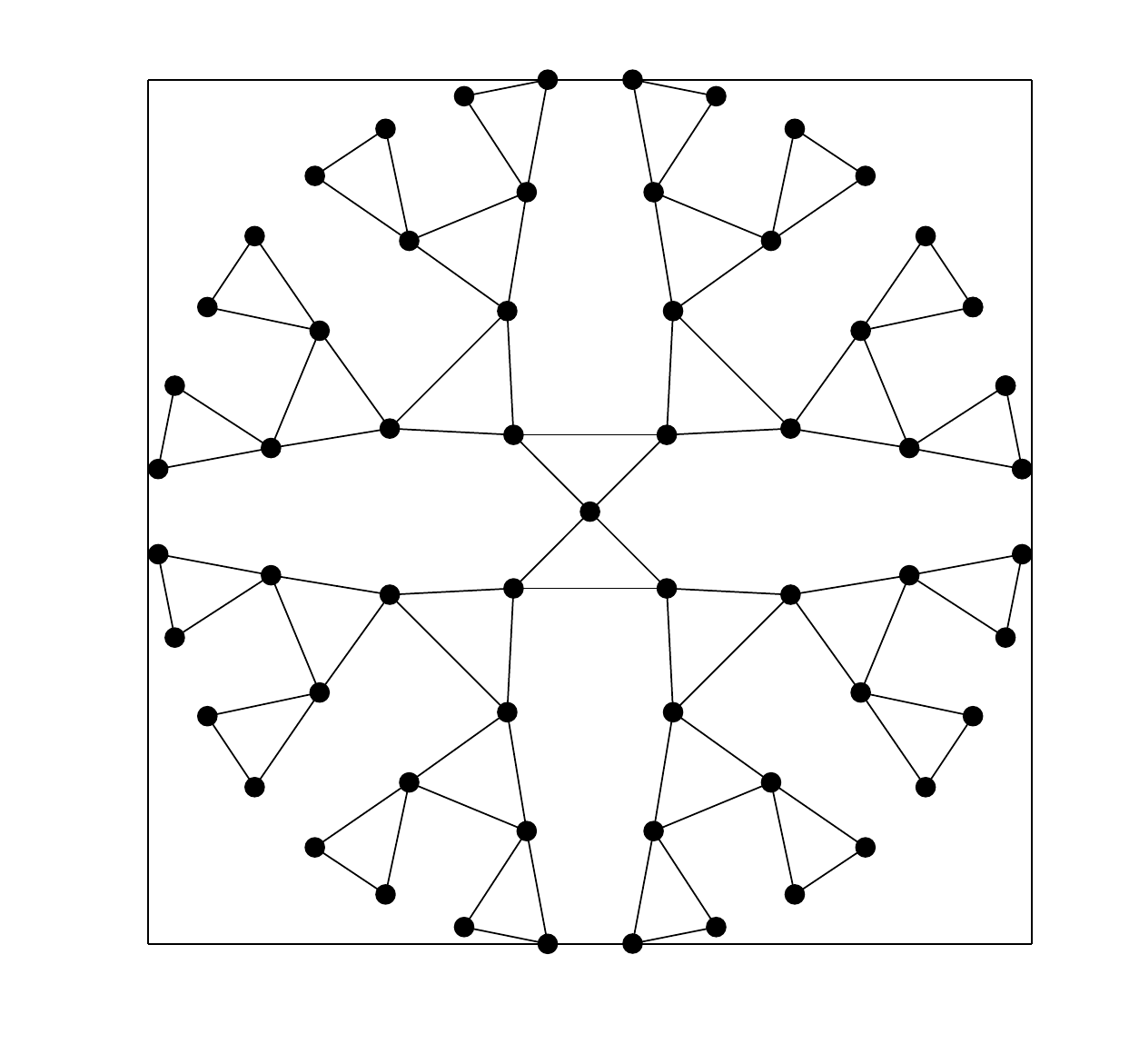} }}} \quad
\subfloat[]{ \scalebox{0.5}{\resizebox{\textwidth}{!}{%
\includegraphics{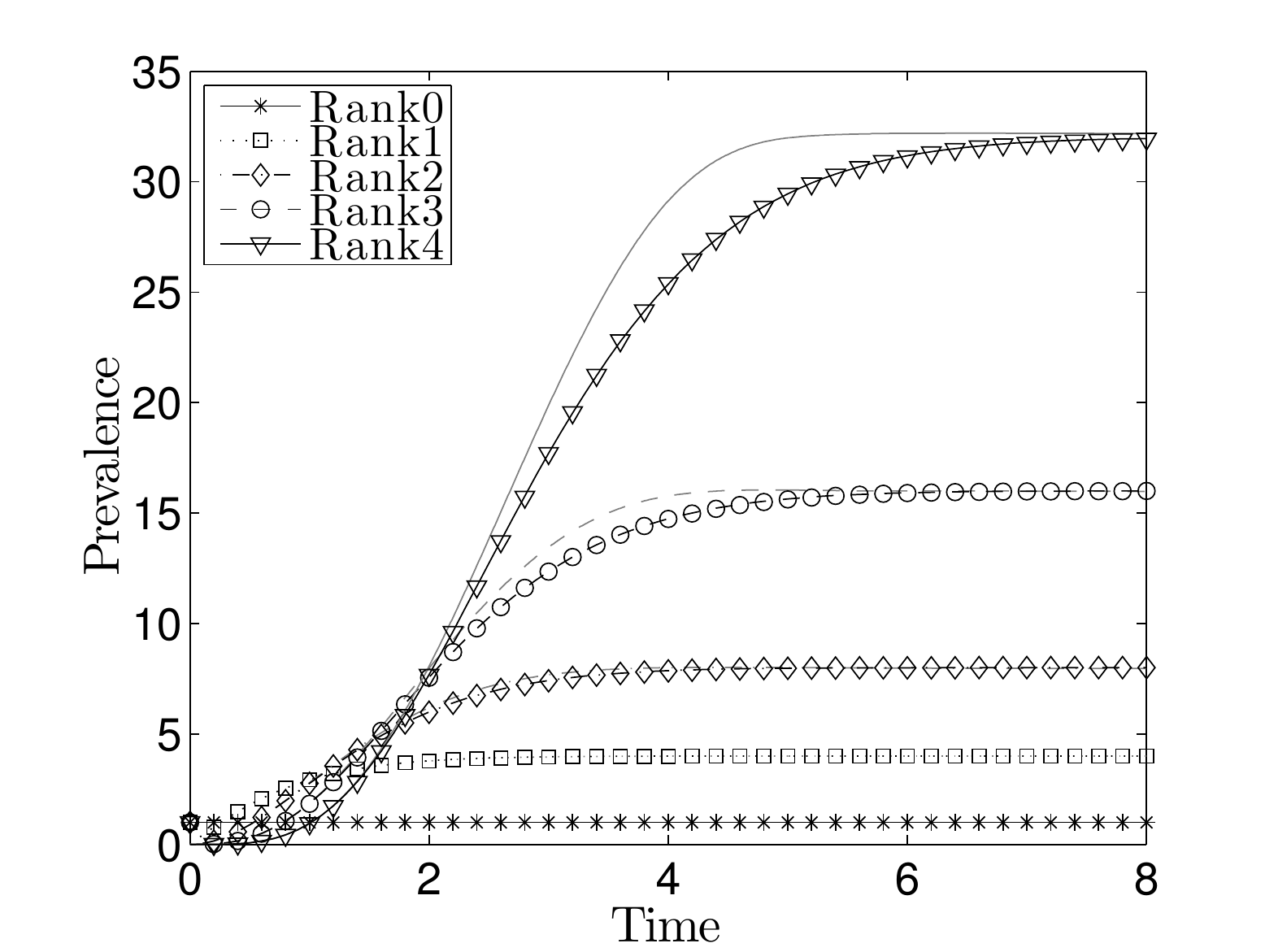} }}}
\caption{SI dynamics on larger networks. (a) A tree network. (b) Mean numbers
	infective over time at different Rank (distance from the central node) for
	the tree network, for Monte Carlo simulation (markers) and exact ODE models
	(lines). (c) A tree-of-triangles network. (d) Mean numbers infective over
time at different Rank for the tree-of-triangles network, for Monte Carlo
simulation (markers), inexact Kirkwood ODEs (grey lines), and exact Maximum
Entropy ODEs (black lines).  \label{fig:sitrees}}
\end{figure}

\newpage

\begin{center}
\Large \textbf{Exact and approximate moment closures for non-Markovian network
epidemics} \\ \vspace{4mm}
\Large \textit{Supplementary Material} \\ \vspace{4mm}
\large  Lorenzo Pellis $\qquad$ Thomas House $\qquad$ Matthew J.\ Keeling
\end{center}

\section*{Open triplet}\label{sec:OpenSupp} 
Figure 2 of the main text hides in an aggregate measure most of the
heterogeneity in the performance of the standard approximation for an open
triplet. Figure \ref{fig:Open12plots} unravels some of this heterogeneity,
revealing positive and negative errors in different cases, exactness in others
and a particularly poor performance when starting from state $\bx^0=(SIS)$.
Figure \ref{fig:Open12plots_VS_tau} plots the same results as a function of the
infection rate $\tau$.


\section*{Closed triangle}\label{sec:ClosedSupp} 
The complex behaviour of the three approximations in all different cases makes
it difficult to have a full overview of their accuracy. Here we finally present
an almost exhaustive list of all interesting $\bx^0\to\bx$ cases. Figures
\ref{fig:ErrorVStime1-9} and \ref{fig:ErrorVStime10-18} shows the error
$e_c^{\bx^0}(\bx;t)$ as a function of time for the Markovian model with
infectivity $\tau=1$
. Note, as already observed in the main text, how ME performs poorly compared
to Kirkwood's for the case $(ISS)\to(ISS)$, how Kirkwood's approximation is
strongly inaccurate for $(ISS)\to(III)$ and how both fail to capture correctly
the case $(ISS)\to(RSS)$ (although the relative performance of ME improves
dramatically for larger values of $\tau$; not shown). Note also how both ME and
Kirkwood's approximations give the same results for the cases $(ISS)\to(ISR)$
and $(ISS)\to(IRS)$ as they are symmetrical, but 1-step ME does not. Further
exploration of how 1-step ME performs when reaching the same state $\bx$ from
all three initial states $\bx^0=(ISS), (SIS)$ and $(SSI)$ reveals always an
identical behaviour in two out of the three cases, and a different behaviour
for the third one. Errors obtained when starting from $\bx^0=(ISI)$ are significantly smaller than
when starting from a single initial infective (Figure
\ref{fig:ErrorVStimeISI}). More strikingly, even though Kirkwood's
approximation appears to be quite inaccurate in general, it turns out to be
exact when in the particular cases of $\bx=(ISI), (ISR)$ (and thus $(RSI)$) and
$(RSR)$, when starting from $\bx^0 = (ISI)$.  

The heterogeneous behaviour
highlighted by Figure 5 in the main text suggests that ME, though better than
the other approximations in general, is not uniformly so. Figure
\ref{fig:iad_VS_tau_bestc_L1_diffy} explores how the time integral of the
absolute error $| e_c^{\bx^0}(\bx;t) |$ depends on $\tau$ in various cases of
interest. Note, first of all, how all errors converge to 0 for large $\tau$.
Second, note how ME is markedly inaccurate in the case $(ISS)\to(ISS)$, how
Kirkwood's performs poorly for $(ISS)\to(RSS)$ and $(RRR)$ while it is exact
for  $(ISI)\to(ISI)$, and how ME performs badly compared to Kirkwood's for small $\tau$ in the case $(ISS)\to(RRS)$.
Finally we report
in Figure \ref{fig:ClosedStackedSSD_9plots_diffy} the stratified contribution
to the overall SSD measure for each approximation and each starting point
$\bx^0$ of interest. As highlighted in the main text, in addition to showing a
quantitatively smaller discrepancy, ME seems to be always balancing the
discrepancy between exact results and approximations more evenly across states
and in time.

\section*{Supplementary Figures}

\begin{figure}[H] 
\centering
\includegraphics[width=\textwidth]{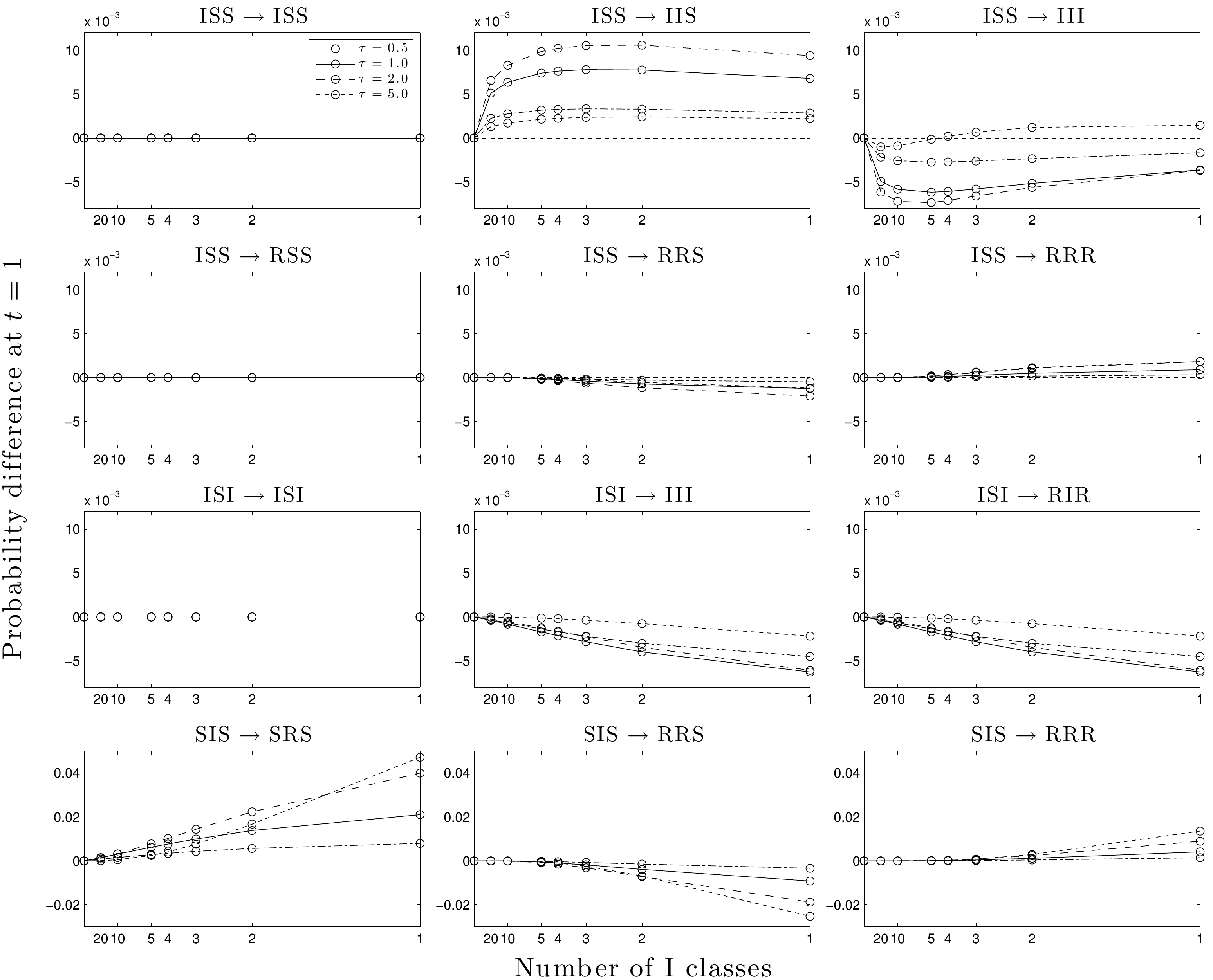}
\caption{Error $e_o^{\bx^0}(\bx;t)=\pb^{\bx^0}(\bx;t) - \pb_o^{\bx^0}(\bx;t)$ between the exact and approximate probabilities, in the SIR model, of an open triplet being in state $\bx$ at time $t=1$, when starting from state $\bx^0$ at time $t=0$, for various choices of $\bx^0$ and $\bx$, as a function of the number of infectious classes ($x$-axis linearly increasing with the variance), 
for various values of the infectivity $\tau$. Note the different scale of the $y$ axis of the bottom row. }\label{fig:Open12plots}
\end{figure}

\newpage

\begin{figure}[H] 
\centering
\includegraphics[width=\textwidth]{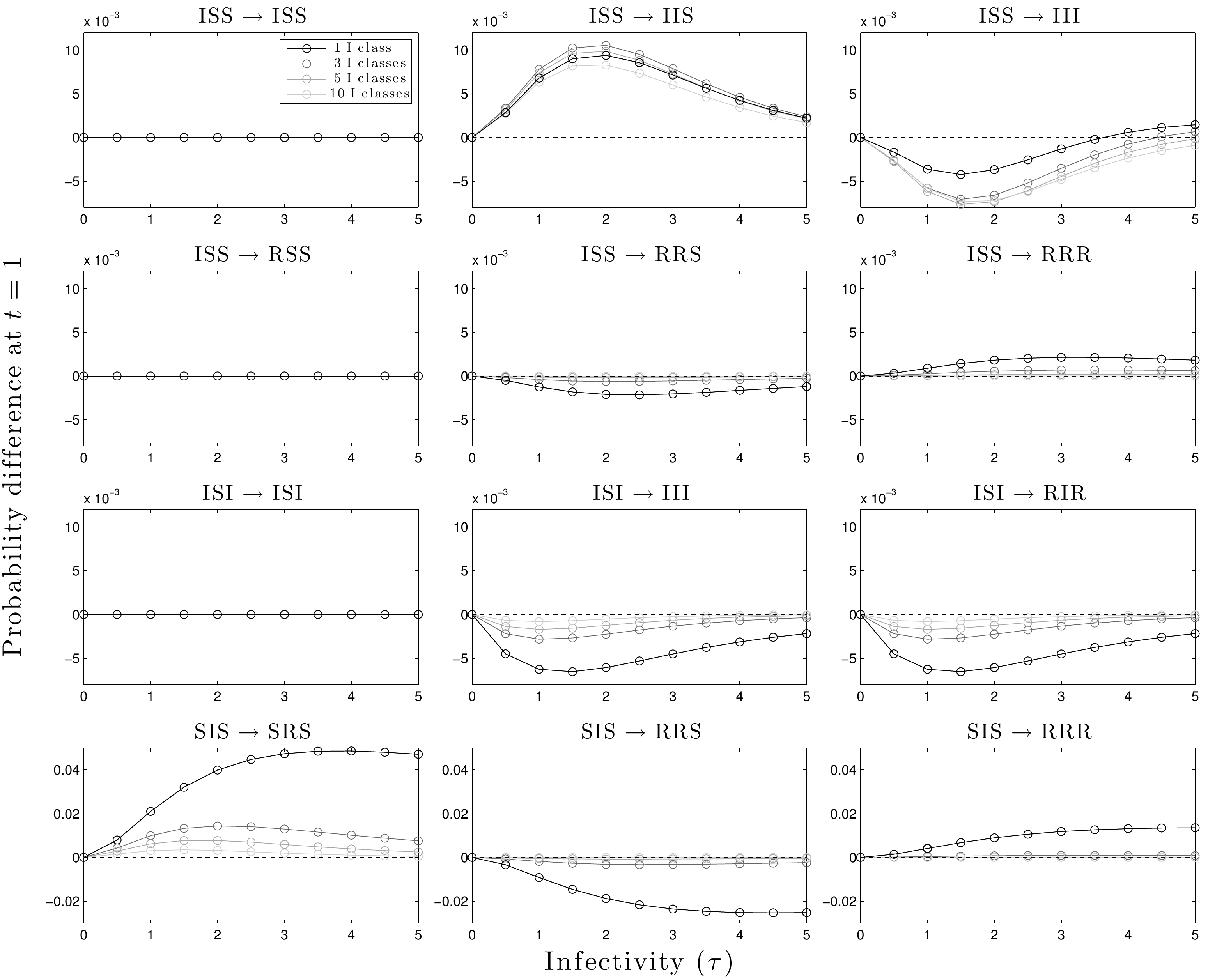}
\caption{Error $e_o^{\bx^0}(\bx;t)=\pb^{\bx^0}(\bx;t) - \pb_o^{\bx^0}(\bx;t)$ between the exact and approximate probabilities, in the SIR model, of an open triplet being in state $\bx$ at time $t=1$, when starting from state $\bx^0$ at time $t=0$, for various choices of $\bx^0$ and $\bx$, as a function of the infectivity $\tau$, for various number of infectious classes
. Note the different scale for the $y$ axis on the bottom row.}\label{fig:Open12plots_VS_tau}
\end{figure}

\newpage

\begin{figure}[H] 
\centering
\includegraphics[width=\textwidth]{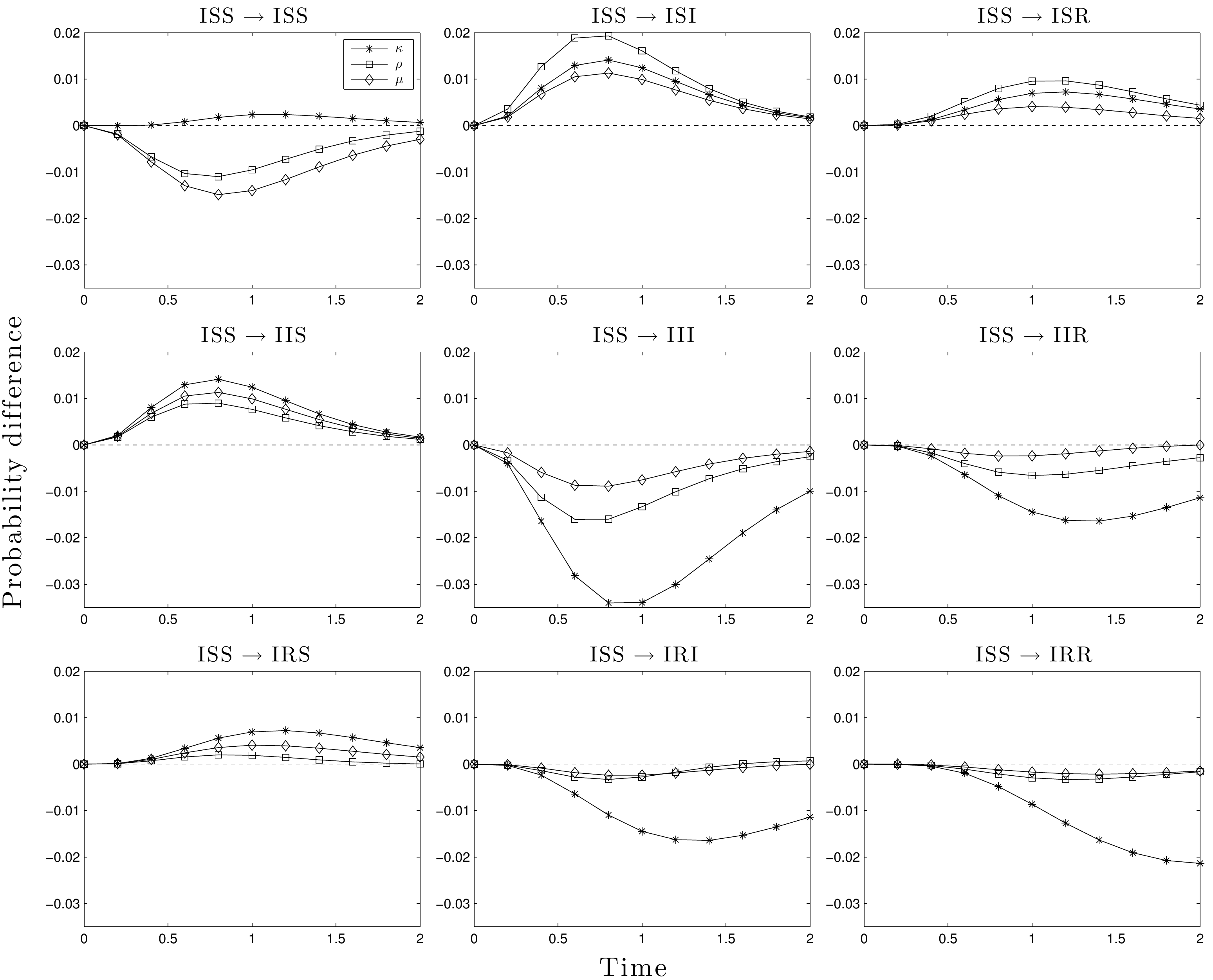}
\caption{Error $e_c^{\bx^0}(\bx;t)=\pb^{\bx^0}(\bx;t) - \pb_c^{\bx^0}(\bx;t)$ between the exact and approximate probabilities of a closed triangle being in state $\bx$ as a function of time, when starting from state $\bx^0$ at time $t=0$, for various choices of $\bx^0$ and $\bx$. The model is Markovian SIR with infectivity $\tau=1$.}\label{fig:ErrorVStime1-9}
\end{figure}

\newpage

\begin{figure}[H]
\centering
\includegraphics[width=\textwidth]{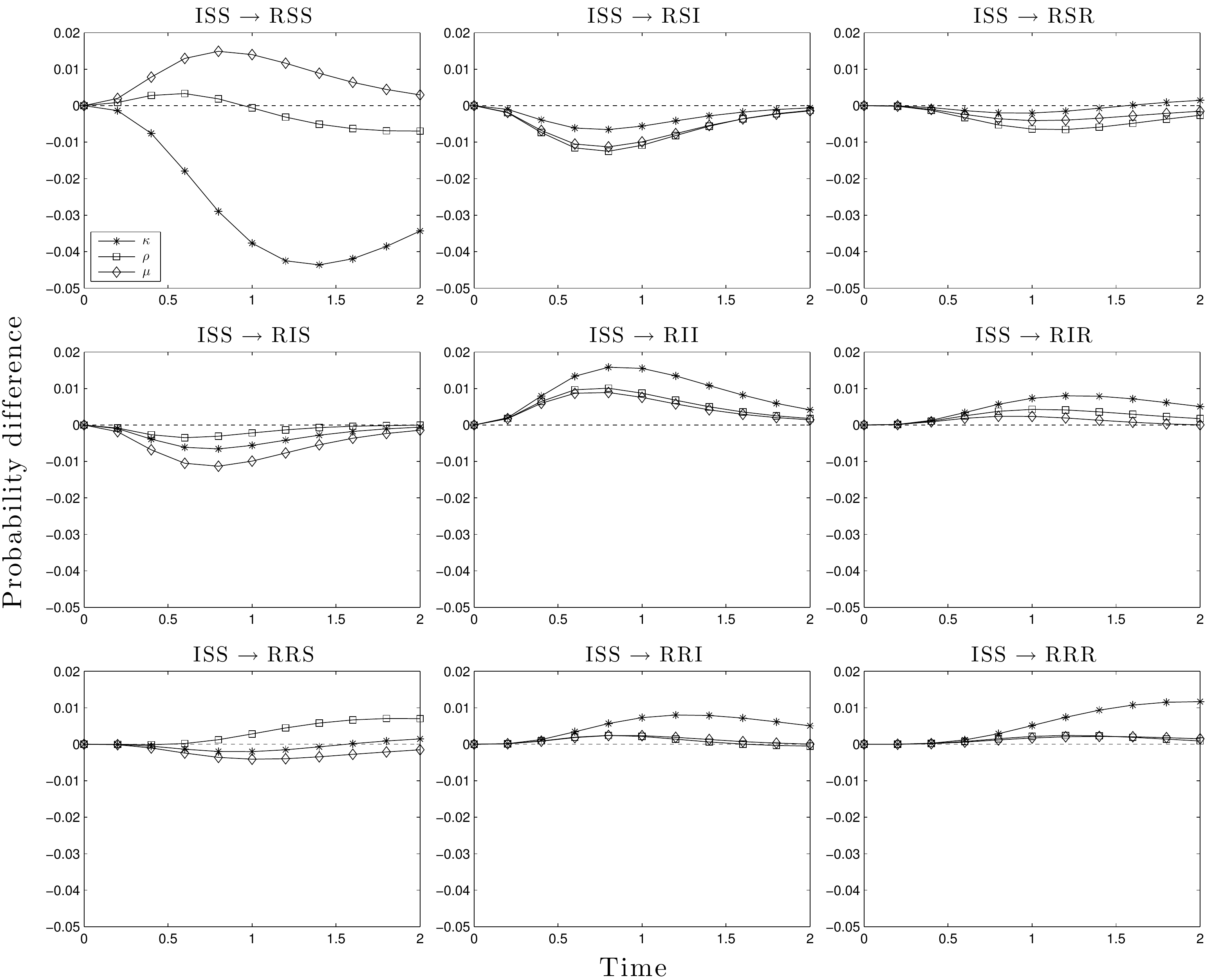}
\caption{Error $e_c^{\bx^0}(\bx;t)=\pb^{\bx^0}(\bx;t) - \pb_c^{\bx^0}(\bx;t)$ between the exact and approximate probabilities of a closed triangle being in state $\bx$ as a function of time, when starting from state $\bx^0$ at time $t=0$, for various choices of $\bx^0$ and $\bx$. The model is Markovian SIR with infectivity $\tau=1$.}\label{fig:ErrorVStime10-18} 
\end{figure}

\newpage

\begin{figure}[H] 
\centering
\includegraphics[width=\textwidth]{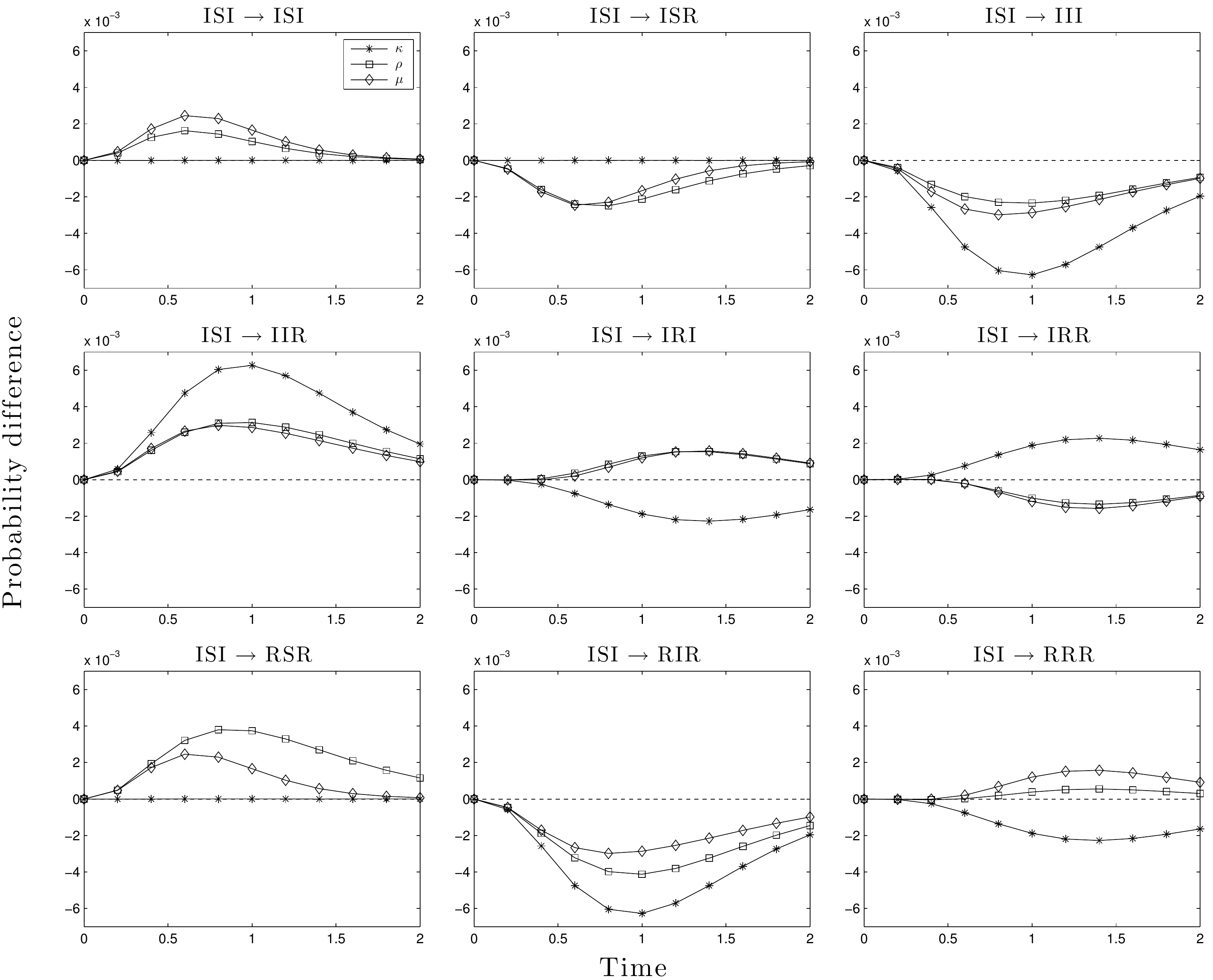}
\caption{Error $e_c^{\bx^0}(\bx;t)=\pb^{\bx^0}(\bx;t) - \pb_c^{\bx^0}(\bx;t)$ between the exact and approximate probabilities of a closed triangle being in state $\bx$ as a function of time, when starting from state $\bx^0$ at time $t=0$, for various choices of $\bx^0$ and $\bx$. The model is Markovian with infectivity $\tau=1$.}\label{fig:ErrorVStimeISI}
\end{figure}


\newpage

\begin{figure}[H]
\centering
\includegraphics[width=\textwidth]{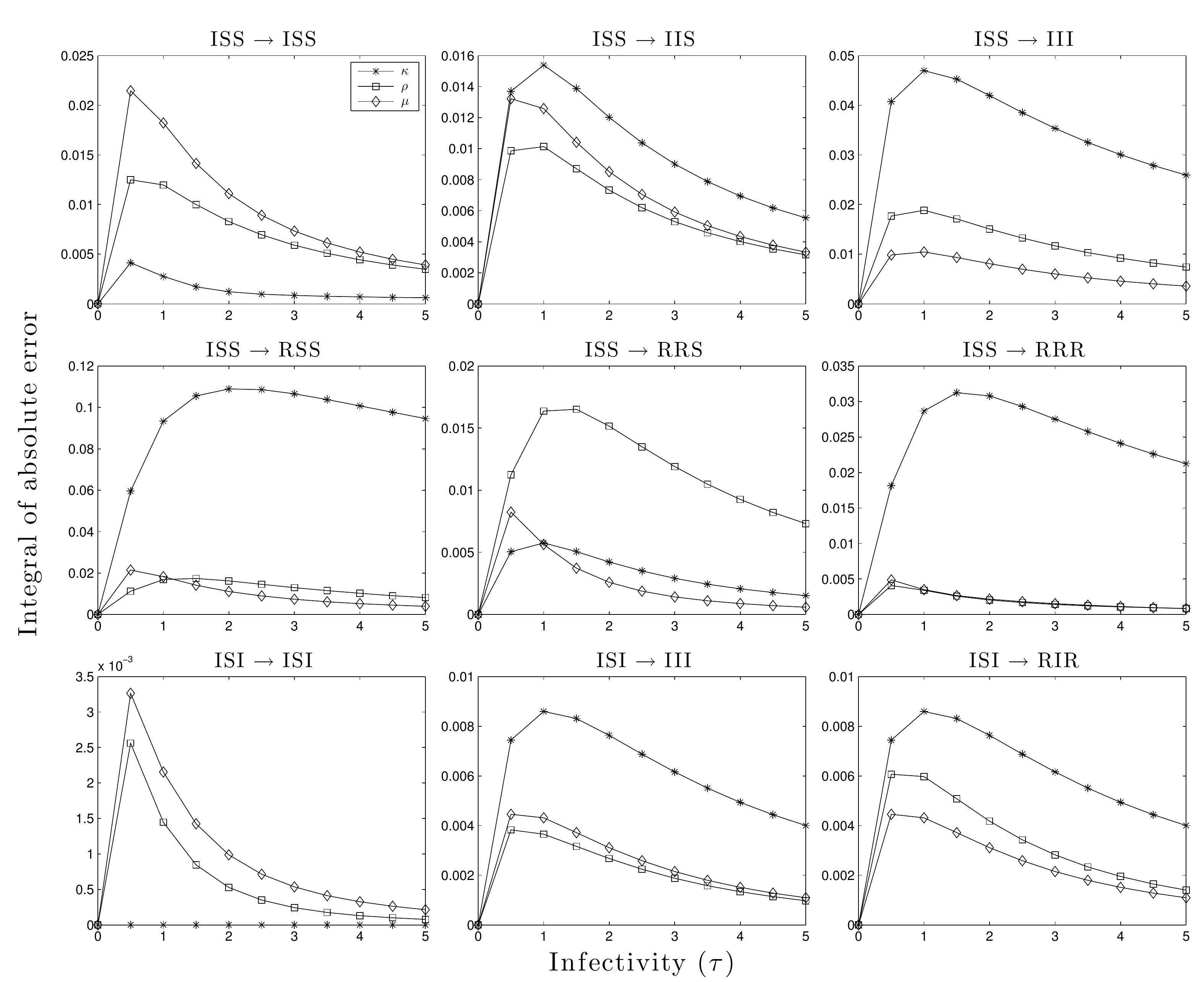}
\caption{Time integral of the modulus of the difference $e_c^{\bx^0}(\bx;t)=\pb^{\bx^0}(\bx;t) - \pb_c^{\bx^0}(\bx;t)$ between the exact and approximate probabilities of a closed triangle being in state $\bx$, when starting from state $\bx^0$ at time $t=0$, as a function of $\tau$. The Markovian SIR model is assumed
.}\label{fig:iad_VS_tau_bestc_L1_diffy} 
\end{figure}

\begin{figure}[H] 
\centering
\includegraphics[width=\textwidth]{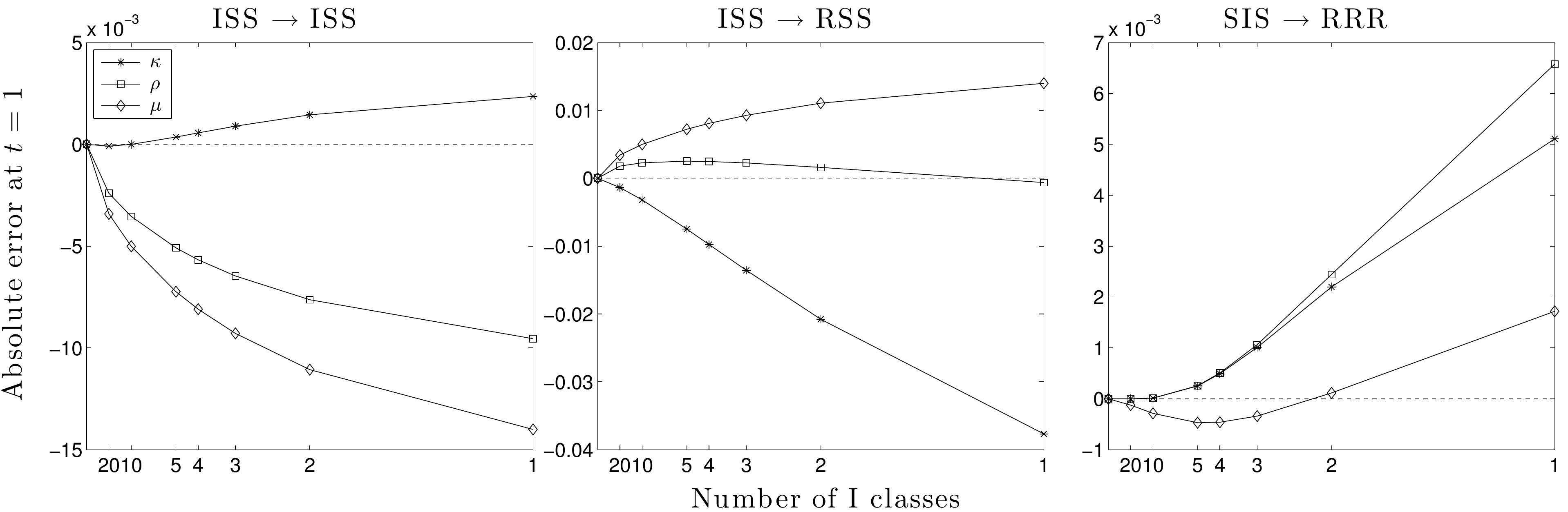}
\caption{Errors $e_c^{\bx^0}(\bx;t)=\pb^{\bx^0}(\bx;t) - \pb_c^{\bx^0}(\bx;t)$
($c = \kappa, \rho$ and $\mu$) between the exact and approximate probabilities
of an open triplet being in state $\bx$ at time $t=1$, when starting from state
$\bx^0$ at time $t=0$, for some selected choices of $\bx^0$ and $\bx$, as a
function of the number of infectious classes ($x$-axis linearly increasing with the variance),
for infectivity $\tau=1$.
}\label{fig:ClosedHeterogeneity}
\end{figure}

\newpage


\newpage

\begin{figure}[H]
\centering
\includegraphics[width=\textwidth]{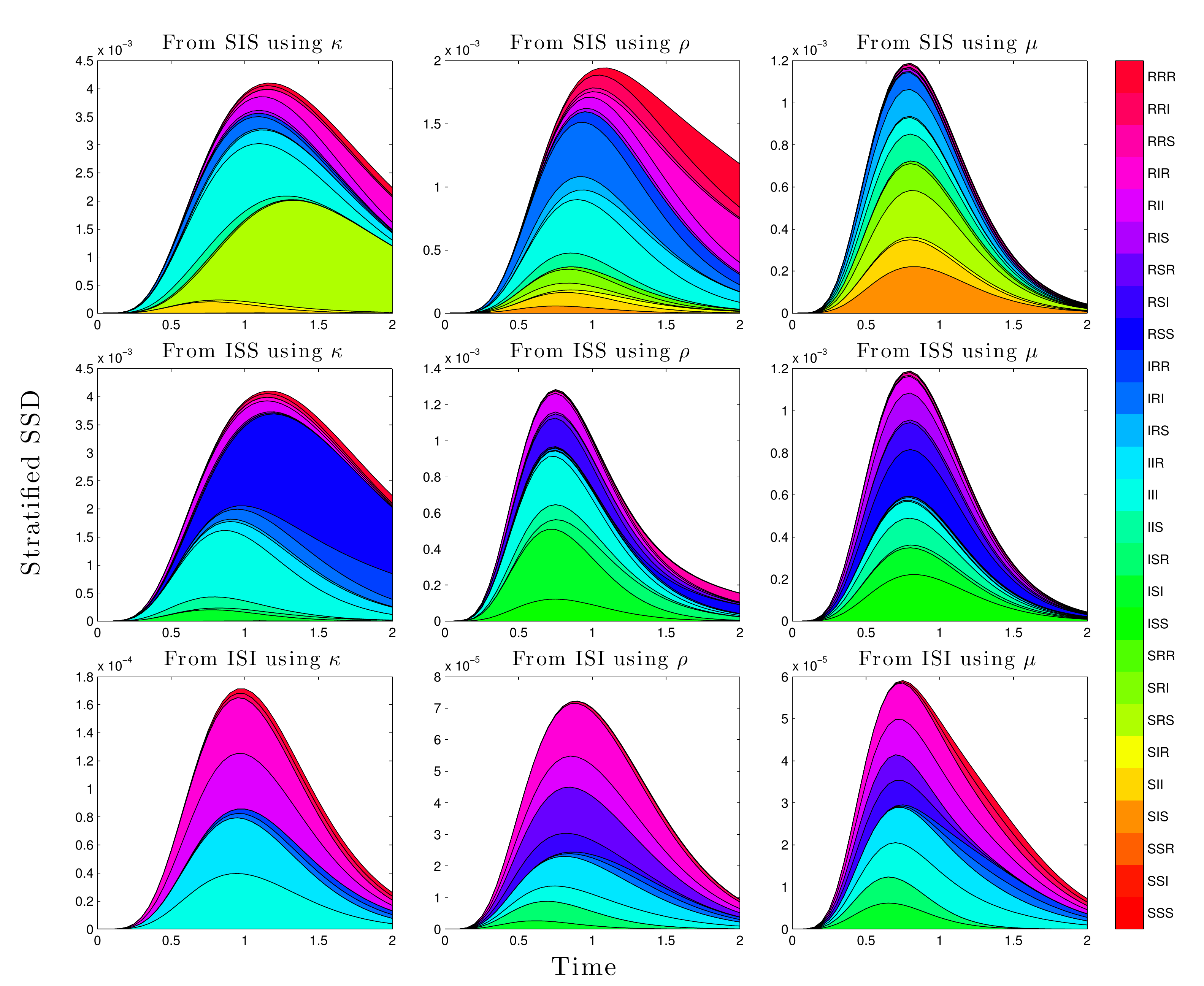}
\caption{Stratified contribution of each state $\bx$ to the overall discrepancy between the exact distribution over system states and each of the approximations (top row: Kirkwood's approximation; middle row: 1-step ME; third row: full ME), for different starting points $\bx^0$ (one per column). The model is Markovian with infectivity $\tau=1$.}\label{fig:ClosedStackedSSD_9plots_diffy} 
\end{figure}

\newpage

\begin{table}[H]
\centering
\includegraphics[height = 12.5cm]{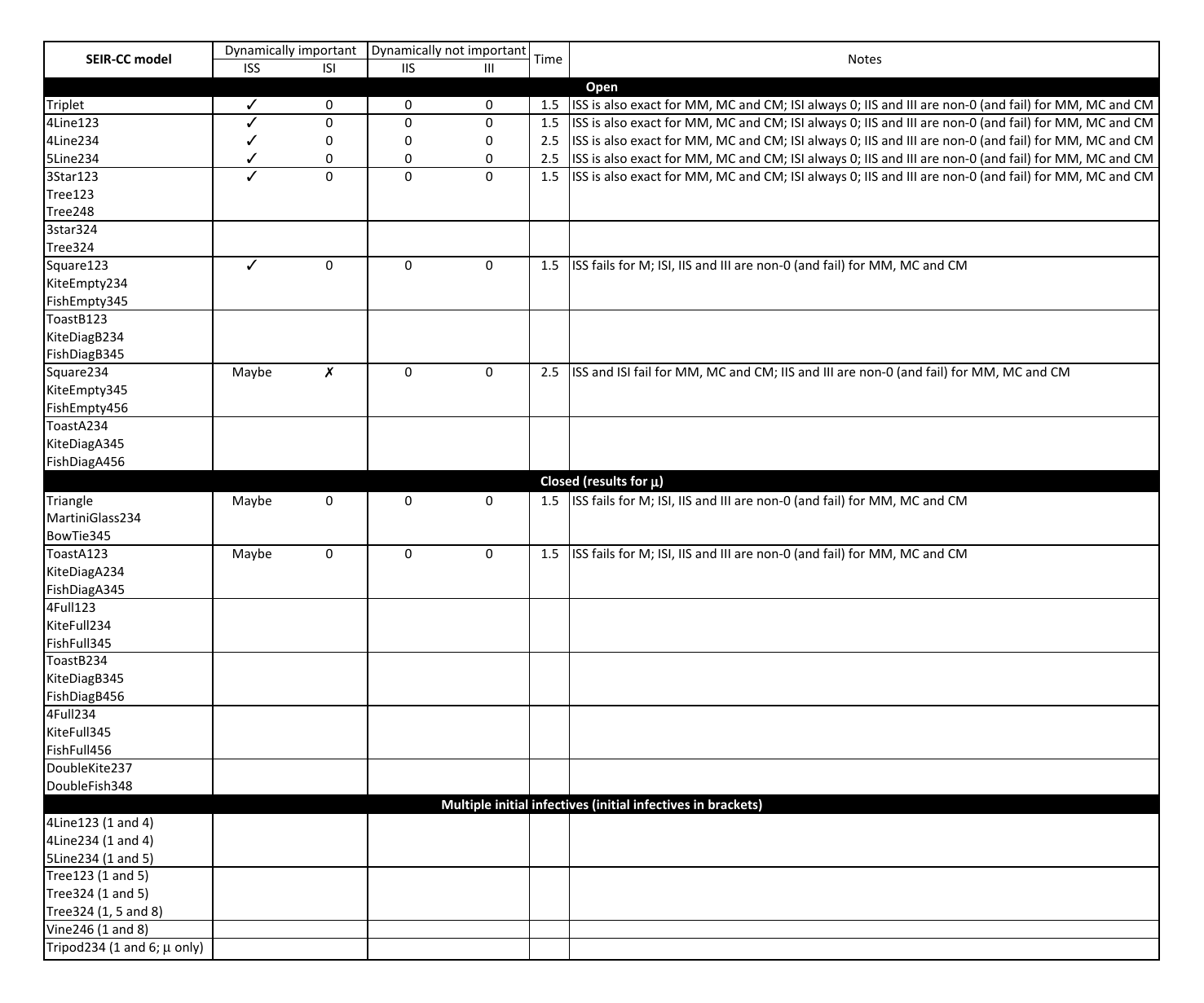}
\caption{Exactedness of moment closures at the level of triplets for the SEIR model with a constant duration of the latent and the infectious periods (CC). Comments for the models where either the latent of the infectious period, or both (MC, CM or MM, respectively) have exponentially distributed duration are also reported when useful. Time of test is $t=0.5$ when not stated. Only the interesting results are reported. Symbols and table structure are as per Table 1 in the main text.}
\label{tab:SEIR}
\end{table}

\newpage

\begin{figure}[H]
\centering
\includegraphics[width = \textwidth]{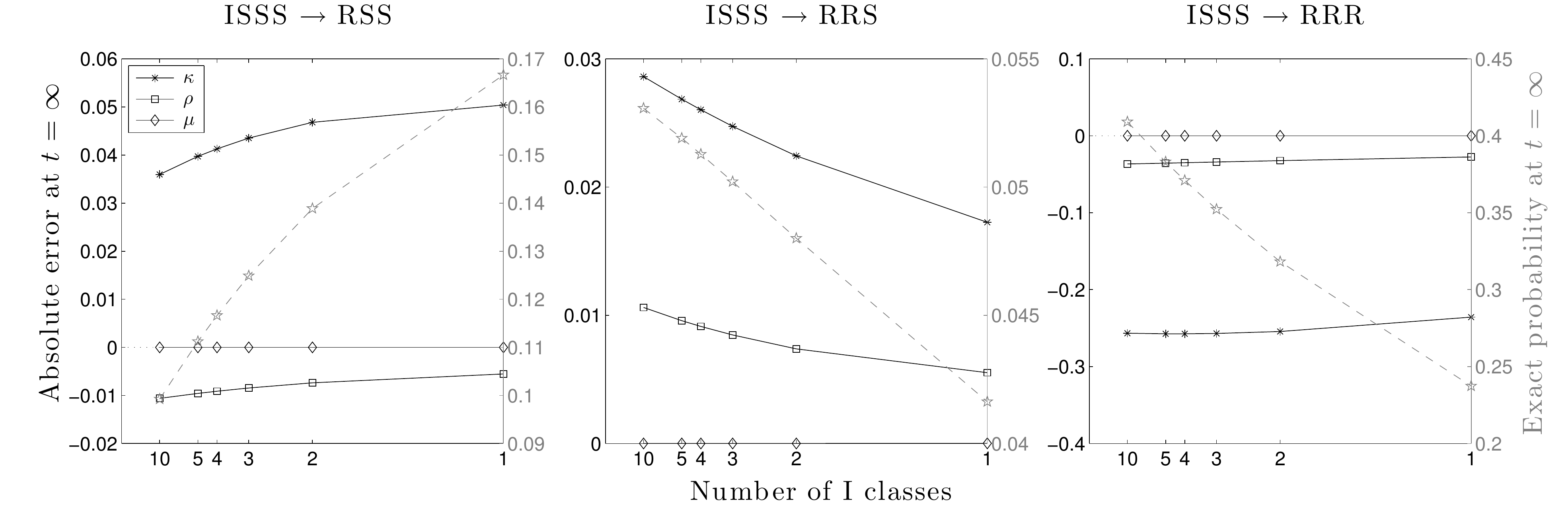}
\caption{Investigation of the error in the three moment closures (Kirkwood, $\kappa$; 1-step ME, $\rho$; and ME, $\mu$) for the SIR model on the MartiniGlass234 in its absorbing states for $t\to\infty$. The left axes (grey dashed lines with 5-point star markers) shows the probability that the system ultimately ends in the state of interest. Note the exactedness of ME as opposed to the other closures (see main text).}
\label{fig:FinalState_MartiniGlass_SIR}
\end{figure}

\begin{figure}[H]
\centering
\includegraphics[width = \textwidth]{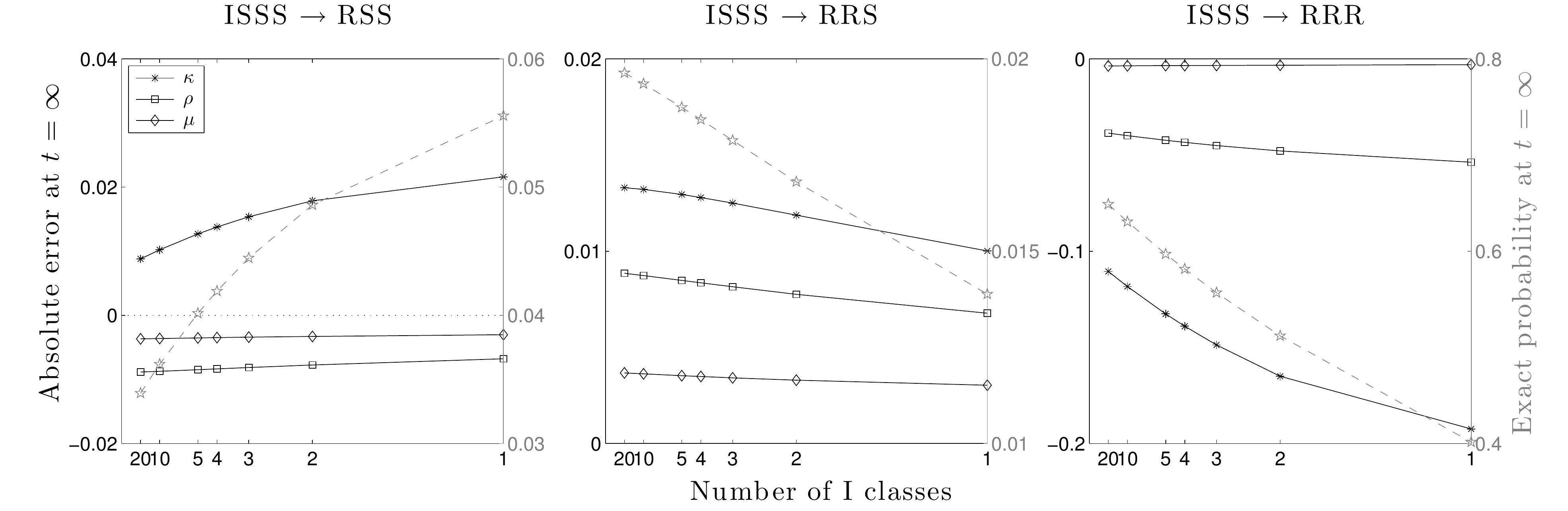}
\caption{Investigation of the error in the three moment closures (Kirkwood, $\kappa$; 1-step ME, $\rho$; and ME, $\mu$) for the SIR model on the ToastB234 in its absorbing states for $t\to\infty$. The left axes (grey dashed lines with 5-point star markers) shows the probability that the system ultimately ends in the state of interest. Note how all closure fails (see main text).}
\label{fig:FinalState_ToastB_SIR}
\end{figure}

\newpage

\section*{Supplementary Code}

\subsection*{Main Function}

{\scriptsize
\begin{verbatim}


function [Time,Y]= pair_based_me(T,g,I0,TSPAN,tol,niter)
% Modified version of code from Sharkey (2011) to use Maximum Entropy
% rather than Kirkwood closure

T=T';
N=length(T(:,1));
I_vec=zeros(N,1);I_vec(I0)=1;
S_vec=ones(N,1);S_vec(I0)=0;
I_mat=spdiags(I_vec,0,N,N);
S_mat=spdiags(S_vec,0,N,N);
G=spones(T);
G_A=G.*(1-G');
H=G+G_A';
Q=min(H^2,1);Q=Q-diag(diag(Q));
H_c=Q.*H;
H_o=Q-H_c;
F_AB=H;
F_AA=tril(H);
IS=I_mat*F_AB*S_mat;
SS=S_mat*F_AA*S_mat;
II=I_mat*F_AA*I_mat;
W_AB=find(reshape(F_AB,N^2,1));
W_AA=find(reshape(F_AA,N^2,1));
d_AB=length(W_AB);
d_AA=length(W_AA);
Y0=[S_vec;I_vec;IS(W_AB);SS(W_AA);II(W_AA)];
options=odeset('abstol',tol(1),'reltol',tol(2));
[Time,Y]=ode23(@model_function,TSPAN,Y0,options,G,T,H_c,H_o,g,N,W_AA,W_AB,d_AA,d_AB,niter);
end
\end{verbatim}
}

\clearpage

\subsection*{ODE function}

{\scriptsize
\begin{verbatim}


function dY = model_function(~,Y0,G,T,H_c,H_o,g,N,W_AA,W_AB,d_AA,d_AB,niter)

IS=spalloc(N,N,d_AB);
SS=spalloc(N,N,d_AB);
II=spalloc(N,N,d_AB);
S=Y0(1:N);
I=Y0(N+1:2*N);
IS(W_AB)=Y0(2*N+1:2*N+d_AB);
SS(W_AA)=Y0(2*N+d_AB+1:2*N+d_AB+d_AA);
II(W_AA)=Y0(2*N+d_AB+d_AA+1:2*N+d_AB+2*d_AA);
SS=SS+SS';
II=II+II';
inv_S=spdiags(spfun(@inve,S),0,N,N);
inv_I=spdiags(spfun(@inve,I),0,N,N);

R=T.*IS;

% Other code is unchanged from Sharkey (2011); the below runs niter iterations
% of the interative method for calculating the maximum entropy distribution

IrSrS = R'*H_o.*(inv_S*SS);
IrSlI = IS*inv_S.*(H_o*R);
[ii,jj] = find(H_c);
for t=1:length(ii)
    i=ii(t); j=jj(t);
    kk = full(intersect(find(G(i,:)),find(G(j,:))));
    for k=kk
        aP12 = [full(SS(i,j)), full(IS(j,i)); full(IS(i,j)), full(II(i,j))];
        aP23 = [full(SS(j,k)), full(IS(k,j)); full(IS(j,k)), full(II(j,k))];
        aP13 = [full(SS(i,k)), full(IS(k,i)); full(IS(i,k)), full(II(i,k))];
        Tri = MaximumEntropy(aP12, aP23, aP13, 2, niter);
        IrSrS(i,j) = IrSrS(i,j) + Tri(1,1,2);
        IrSlI(i,j) = IrSlI(i,j) + Tri(2,1,2);
    end
end

SrSlI=IrSrS';
IrSrI=IrSlI';

dT=sum(R)';
dS=-dT;
dI=dT-g*I;
dSS=-IrSrS-SrSlI;
dIS=IrSrS-IrSlI-R-g*IS;
dII=IrSlI+IrSrI+R+R'-2*g*II;

dY=[dS;dI;dIS(W_AB);dSS(W_AA);dII(W_AA)];

end
\end{verbatim}
}

\clearpage

\subsection*{Helper functions}

{\scriptsize
\begin{verbatim}

function Ptriplet = MaximumEntropy(P12,P23,P13,ns,niter)

P = zeros(ns,ns,ns,niter);
Ptemp_old = zeros(ns,ns,ns);
Ptemp_new = zeros(ns,ns,ns);
P(:,:,:,1) = 1/ns^3;

for ni = 2:niter
    Ptemp_old = P(:,:,:,ni-1);
    for g1 = 1:ns
        for g2 = 1:ns
            for g3 = 1:ns
                den = sum(Ptemp_old(g1,g2,:));
                if den == 0
                    Ptemp_new(g1,g2,g3) = 0;
                else
                    Ptemp_new(g1,g2,g3) = P12(g1,g2) * Ptemp_old(g1,g2,g3) / den;
                end
            end
        end
    end
    testtemp = sum(sum(sum(Ptemp_new)));
    Ptemp_old = Ptemp_new;
    for g1 = 1:ns
        for g2 = 1:ns
            for g3 = 1:ns
                den = sum(Ptemp_old(:,g2,g3));
                if den == 0
                    Ptemp_new(g1,g2,g3) = 0;
                else
                    Ptemp_new(g1,g2,g3) = P23(g2,g3) * Ptemp_old(g1,g2,g3) / den;
                end
            end
        end
    end
    testtemp = sum(sum(sum(Ptemp_new)));
    Ptemp_old = Ptemp_new;
    for g1 = 1:ns
        for g2 = 1:ns
            for g3 = 1:ns
                den = sum(Ptemp_old(g1,:,g3));
                if den == 0
                    Ptemp_new(g1,g2,g3) = 0;
                else
                    Ptemp_new(g1,g2,g3) = P13(g1,g3) * Ptemp_old(g1,g2,g3) / den;
                end
            end
        end
    end
    testtemp = sum(sum(sum(Ptemp_new)));
    P(:,:,:,ni) = Ptemp_new;
end
Ptriplet = P(:,:,:,niter);
end
\end{verbatim}

\begin{verbatim}


function M_out=inve(M_in)

M_out=M_in.^(-1);

end
\end{verbatim}
}

\end{document}